\titleformat*{\section}{\normalsize\bfseries}
\renewcommand\subsection{\@startsection
 {subsection}{2}{0mm}
 {-\baselineskip}
 {0.5\baselineskip}
 {\normalfont\normalsize\itshape\bfseries}}
\renewcommand\subsubsection{\@startsection
 {subsubsection}{2}{0mm}
 {-\baselineskip}
 {0.5\baselineskip}
 {\normalfont\normalsize\itshape\bfseries}}
\begin{document}

\newtheorem{definition}{Definition}
\newtheorem{theorem}{Theorem}
\newtheorem{remark}{Remark}

\title{\bf{The conformable fractional grey system model}\footnote{This is accepted version by ISA Transactions. The published version is available at \url{https://doi.org/10.1016/j.isatra.2019.07.009}}}      
\author[1,2*]{Xin Ma}
\author[1]{Wenqing Wu}
\author[3]{Bo Zeng}
\author[4]{Yong Wang}
\author[4]{Xinxing Wu}

\date{}
\affil[1]{\small School of Science, Southwest University of Science and Technology, Mianyang, China}
\affil[2]{\small State Key Laboratory of Oil and Gas Reservoir Geology and Exploitation,
                  Southwest Petroleum University, Chengdu, China}
\affil[3]{\small College of Business Planning, Chongqing Technology and Business University, Chongqing, China}
\affil[4]{\small School of Science, Southwest Petroleum University, Chengdu, China}

\affil[*]{\small Correspondence authors: Xin Ma, $Email:cauchy7203@gmail.com$}

\maketitle%
\abstract{

The fractional order grey models have appealed considerable interest of research in recent years due to its high effectiveness and flexibility in time series forecasting. However, the existing fractional order accumulation and difference are computationally complex, which leads to difficulties for theoretical analysis and applications. In this paper, new definitions of fractional accumulation and difference are proposed based on the definition of conformable fractional derivative, which are called the conformable fractional accumulation and difference. Then a novel conformable fractional grey model is proposed based on the conformable fractional accumulation and difference, and Brute Force method is introduced to optimize its fractional order. The feasibility and simplicity of the proposed model and the Brute Force method are shown in the numerical example. The conformable fractional grey model outperforms the existing fractional grey model and the autoregressive model in 1 to 3 step predictions with 21 benchmark data sets, and also outperforms the existing fractional grey model in predicting the natural gas consumption of 11 countries. The results indicate that the proposed conformable fractional grey model is more efficient in longer term prediction and non-smooth time series forecasting than the existing models.

\bf{Key words: Grey System Model; Conformable Fractional Calculus; Conformable Fractional Accumulation; Fractional Grey Model; CFGM Model;Natural Gas Consumption }}

\section{Introduction}
\label{sec1:intro}

    The grey models play a key role in the Grey System Theory pioneered by Deng in 1982. The main idea of grey forecasting was initially proposed by Deng in 1983, and the first grey model was developed in 1984 \cite{xie2017review}. The commonly used statistical prediction models, such as the empirical models \cite{zgx2018,maminda2018}, semi-parametric models \cite{knea},  hybrid models \cite{xinyi01,xinyi02}, and especially the machine learning models \cite{WU2019280,fan2019empirical,YANG2019942}, often need large amount of sample data to make convinced predictions. But the idea of grey models can often built on very small samples. In the early report by Deng \cite{deng1988}, it was shown that the basic GM(1, 1) model can be built with only 4 points to make acceptable predictions.  Recent theoretical and empirical studies have also shown that the grey prediction models can be efficient in time series forecasting with very few data \cite{smallsample}, and can even be more efficient than the machine learning models \cite{kgm1n}. With high effectiveness in time series forecasting with small samples, the grey models have recently been used in a wide variety of application fields, such as manufacturing industries \cite{zb2019apm}, energy marketing \cite{wzx2017}, energy economics \cite{zb2018} and environment issues \cite{wang2019modelling,meng2018prediction,xiong2019grey}, $etc$.

    Significant efforts have been made to improve accuracy of the grey model family in recent years. One of the most important methods is the reconstruction of the back ground values proposed by Tan \cite{tan2000}, which was mainly used for univariate grey models. Zeng $et\ al.$  have developed the method of dynamic back ground value to improve the multivariate grey model, which can be regarded as an extension of the method by Tan \cite{zbcie2018}. On the other hand, some researchers have noticed that the parameters of most grey models are obtained using the least squares method, which limited the flexibility of the grey models. Pei $et\ al.$ employed the nonlinear least squares for the parameter estimation of the nonlinear Bernoulli models, which has been presented to be more efficient than the existing models. And it has been pointed out by Ma and Liu that the inconsistency of the solution and the discrete form of the grey model is the essential reason which lead large errors to the GMC(1, n) model, and the optimized parameter transforms have been proposed. Then the GMCO(1, n) model with optimal parameters has been proved to unbiased to arbitrary linear dynamical grey systems \cite{maxingmco}. Another kind of methods mainly focuses on the modelling mechanism of the grey models. The discrete grey modelling technique (DGMT) is one of the most important new mechanisms, which was initially introduced by Xie $et\ al.$ in 2009 \cite{xie2009}. The DGMT has been applied to build several effective grey models, such as the NDGM model \cite{ngdm2013}. And recently it has also been extended to build the multivariate grey models, such as the DGM(1, N) \cite{dgm1n2015}, RDGM(1, n) \cite{rdgm2016}, CDGM(1, N) \cite{DING2019}, $etc$. Some other methods, such as the intelligent optimizers \cite{ds2018}, kernel machine learning \cite{maxin2019}, data grouping \cite{wzxdatagroup}, high dimensional tensors \cite{duantenor}, have also been introduced to build the grey models.

    However, most of the main stream grey models are essentially linear models. Actually, early in 2002 Deng \cite{dengbook} has pointed out in his Chinese book that nonlinearities widely exist in the real-world applications, so the linear grey models are not sufficient for more general cases, such as the processes of the underground fluid flow \cite{wy1,wy2}, the management of the unconventional petroleum reservoirs\cite{dingxf2018,hysspe,wy3}, complex economical systems \cite{LIANG2019315} and operational systems \cite{wwq2018math}, etc. In Deng's idea, it was implied that the nonlinear grey models should be developed on nonlinear differential equations \cite{dengbook}, thus he firstly introduced the Bernoulli equation, but leaving no discussions on the computational steps. Chen $et\ al.$ have initially presented the detailed computational steps of the nonlinear grey Bernoulli model \cite{chenngbm11}, and soon some improved works have been illustrated by other researchers \cite{wangzx2011}. Wang $et\ al.$ generalized such idea and developed several multivariate grey models by using the power terms \cite{wang2014,wang2017}. And in recent research the Bernoulli equation has been applied to build the multivariate grey models \cite{ma2019bernoulli}.

    In addition to the above method, the fractional order accumulation (FOA) is also one of the most important innovations for the grey models. The FOA was firstly introduced by Wu $et\ al.$ in 2013 to build a fractional grey model (FGM) \cite{wulifeng2013}.  Without using the nonlinear equations, the FOA was used as an nonlinear data preprocessing method to describe the nonlinear series for the grey models. With high performance of improving the grey models and innovative methodology, the FOA and FGM soon appealed considerable interest of research in recent 6 years, and have been widely used in the real-world applications. For instance, the basic FGM has been applied to predict the weapon system costs\cite{fgmweapon}, pollutant gas emission\cite{fgmgas}, air quality of Beijing-Tianjin-Hebei region and the lightly polluted day in Jing-Jin-Ji region in China  \cite{wubeijing,wujjj}. There are also many other fractional grey models proposed to solve specific application problems. A discrete fractional grey model was built to predict the CO$_2$ emission of China By Gao $et.\ al$\cite{mao2015co2}. Two formulations of self-adaptive grey models  were proposed by Zeng $et\ al.$  to predict the shale gas output China and the electricity consumption of China, respectively \cite{zeng2018buffer,zeng2017self}. The PSO-based FGM was proposed by Duan $et\ al.$  to predict the crude oil consumption of China \cite{duan2018}. A nonhomogeneous FGM model with optimized parameters FAGMO(1, 1, k) was introduced by Wu $et\ al.$ to forecast the nuclear energy consumption of China \cite{wwq2018}. A new fractional order time delayed term was introduced by Ma $et\ al.$, which can make the fractional grey models more flexible \cite{ma2019energy}. The FOA was also introduced to build the nonlinear Grey Bernoulli by Wu $et\ al.$ \cite{wu2019bernoulli} to forecast the renewable energy consumption of China. The GMC(1, n) with FOA was presented by Wu $et\ al.$ to predict the electricity consumption of Shandong Province of China \cite{wu2018gmc}, and it has been further improved using discrete grey modelling technique in our recent works \cite{ma2019apm}.

    According to Wu's results, the FOA can perform as an error reducer for the grey models \cite{reduceerror}, and the FGM is also effective in  time series forecasting with small samples. It was noticed that the basic structure of the FGM models would not be changed by FOA, thus it can also be used for building other commonly used grey models. For instance, Wu $et\ al.$ have rebuilt the nonhomogenous discrete grey model using the FOA \cite{fndgm}. The FOA has also been proved to be efficient to revise the relational analysis, thus the fractional order grey relational analysis has also been prosed \cite{frelation}. With the enlightening of the FOA, some researchers started to broaden the usage of the fractional calculus. One of the typical works is the grey model with fractional derivatives by Yang and Xue \cite{xdy2016}, which introduced the fractional order calculus to the continuous grey models framework, and it has also been applied to develop a novel interval model \cite{xdy2017}. Essentially, the FOA can be regarded as a preprocessing method for the grey models, and it is also a general form of the well-known 1-AGO in the traditional Grey System Theory. With the success of the FOA, some researchers also mentioned that the variation of the preprocessing methods can be effective to improve the grey models, thus some new operators have also been proposed in recent years, such as the weakening buffer operator (WBO) \cite{weakbuffer} and the inverse fractional order accumulation (IFOA) \cite{mwfgm}. Above all, the introduction of the FOA has made significant contributions to the development of Grey System Theory, thus it has also been listed as one of the most achievements in the new millennium \cite{liu2016new}.

    However, the definition introduced by Wu is just one case of the fractional order calculus and differencing. In the point of view of computational complexity, the commonly used definitions of the fractional accumulation and its corresponding fractional differencing in the existing grey models are not easy to implement, and it leads to high difficulties to the deeper theoretical analysis. As mentioned above, the fractional order calculus has also been introduced to the grey models by Yang and Xue \cite{xdy2016}, but the analytic solution of such fractional models contains infinite series, this is obviously not easy to use and analyze. Moreover, such difficulties would also hinder the development of the fractional order grey models and even the Grey System Theory.

    According to our investigations, a new definition of the fractional order derivative has been proposed by Khalil $et\ al.$ in 2014 \cite{cfd2014}, which is called Conformable Fractional Derivative. The new definition of derivative is much simpler than the ``old'' definitions of the fractional order derivatives, such as the Riemann-Liouville definition and Caputo definition. It was proved by Khalil $et\ al.$ the conformable fractional derivative has very good properties, and often can solve many problems which are difficult or impossible when using the ``old'' definitions. With such significant improvements, the conformable fractional derivative has soon appealed significant interest by the researchers in recent 4 years, and many valuable new findings have been presented. Hammad and Khalil \cite{cfd02} proposed the Fractional Fourier Series (FFS) based on the conformable fractional derivative, and it was shown that the FFS is quite efficient to solve the partial differential equations. With the new formulation of conformable fractional derivative, many important differential equations have been redefined \cite{cfd03,cfd04,cfd05}, and it is also important to see that the comprehensive analysis can be easily carried out with the conformable fractional derivative in these works.

    According to the reviews above, we are motivated to put forward the idea of using the conformable fractional derivative by Khalil $et\ al.$ \cite{cfd2014} to fix the issues of computational complexity for the existing fractional grey models. Thus, we firstly introduce the new definitions of the fractional order difference and accumulation based on the conformable fractional derivative, and name them as the Conformable Fractional Difference and Conformable Fractional Accumulation, and then use them to build the novel Conformable Fractional Grey Model (CFGM). Comprehensive numerical example, benchmark data sets validation and real-world case studies would also presented in order to compare the properties of the CFGM and the existing FGM.

    The rest of this paper is organized as follows: the Section 2 gives a brief overview of the definition of the Conformable Fractional Derivative by Khalil $et\ al.$, and presents the definitions of the Conformable Fractional Difference and Accumulation; Section 3 presented the modelling procedures of the new CFGM in a very brief way; a numerical example with detailed steps of the CFGM is presented in Section 4; validations based on the benchmark data sets for comparing the CFGM and FGM are presented Section 5; the cases studies of predicting the natural gas production in the 11 countries are shown in Section 6, along with comprehensive discussions on the properties of CFGM comparing to the FGM, and the conclusions and perspectives are shown in Section 7.

\section{The conformable fractional accumulation}
\label{sec2:pre}

   In this section, we firstly give a brief overview of the conformable fractional derivative and some important properties. Then the conformable fractional difference can be define in a very natural way, and the conformable fractional accumulation is just the inverse operator of it.

\subsection{Definition of the conformable fractional derivative}

    The conformable fractional derivative defined by Khalil, $et\ al.$ can be represented as follows:
    \begin{definition}(see \cite{cfd2014}) \label{def:cfdev}
    Given a differentiable function $f:[0,\infty) \rightarrow R$. Then the conformable fractional derivative  of $f$ with $\alpha$ order is defined as
    \begin{equation}
        T_{\alpha}(f)(t) = \lim_{\varepsilon \rightarrow 0} \frac{f(t+\varepsilon t ^{1-\alpha})- f(t)}{\varepsilon}
    \end{equation}
    for all $t>0,\alpha \in (0,1]$.
    \end{definition}

    Within the definition given above, we can easily obtain the following property of the the conformable derivative as

    \begin{theorem}(see \cite{cfd2014}) \label{th:cfdev}
    If the function $f$ is differentiable, then we have
    \begin{equation}
        T_{\alpha}(f)(t) = t^{1-\alpha} \frac{df(t)}{dt}.
    \end{equation}
    for all $t>0,\alpha \in (0,1]$.
    \end{theorem}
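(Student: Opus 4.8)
The plan is to reduce the conformable derivative directly to the ordinary derivative by a change of variable inside the limit. First I would fix $t>0$ and $\alpha\in(0,1]$, so that the quantity $t^{1-\alpha}$ is a fixed positive constant, and introduce the substitution $h=\varepsilon\, t^{1-\alpha}$, equivalently $\varepsilon = h\, t^{\alpha-1}$. Because $t^{1-\alpha}>0$ does not depend on $\varepsilon$, this map is a continuous bijection between punctured neighborhoods of $0$, and hence $\varepsilon\to 0$ if and only if $h\to 0$; this lets me rewrite the entire limit defining $T_{\alpha}(f)(t)$ in terms of the new variable $h$.

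Substituting into Definition \ref{def:cfdev} then gives
\begin{equation}
T_{\alpha}(f)(t) = \lim_{\varepsilon \rightarrow 0} \frac{f(t+\varepsilon t^{1-\alpha})- f(t)}{\varepsilon} = \lim_{h \rightarrow 0} \frac{f(t+h)- f(t)}{h\, t^{\alpha-1}}.
\end{equation}
Since $t^{\alpha-1}$ is a nonzero constant, I would pull the factor $t^{1-\alpha}$ out of the limit, leaving precisely the Newton difference quotient of $f$ at $t$:
\begin{equation}
T_{\alpha}(f)(t) = t^{1-\alpha} \lim_{h \rightarrow 0} \frac{f(t+h)- f(t)}{h} = t^{1-\alpha}\frac{df(t)}{dt}.
\end{equation}
The hypothesis that $f$ is differentiable guarantees that this last limit exists and equals $\frac{df(t)}{dt}$, which yields the claimed identity.

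The only point requiring care is justifying that the rewriting of the limit is legitimate, i.e.\ that the limit in $\varepsilon$ and the limit in $h$ coincide both in existence and in value. This follows because the substitution $\varepsilon \mapsto h = \varepsilon\, t^{1-\alpha}$ is a homeomorphism of deleted neighborhoods of $0$ for each fixed $t>0$, so no genuine obstacle arises; the result is essentially an immediate consequence of the definition once this substitution is carried out. I would therefore expect the argument to be short, with the change of variable being the single conceptual step.
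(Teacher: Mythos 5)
Your proof is correct and uses exactly the same change of variable $h=\varepsilon t^{1-\alpha}$ as the paper's own argument, merely adding the (valid) remark that this substitution is a homeomorphism of punctured neighborhoods of $0$. No differences worth noting.
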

    \begin{proof}
        Let $h=\varepsilon t ^{1-\alpha}$, then $\varepsilon= h t^{\alpha-1}$. Therefore,
        $$
        \begin{aligned}
            T_{\alpha}(f)(t)& = \lim_{\varepsilon \rightarrow 0} \frac{f(t+\varepsilon t ^{1-\alpha})- f(t)}{\varepsilon} \\
                            & = t ^{1-\alpha} \lim_{h \rightarrow 0} \frac{f(t+h)- f(t)}{h} \\
                            & = t ^{1-\alpha} \frac{df(t)}{dt}.\\
        \end{aligned}
        $$
    \end{proof}

    The Theorem \ref{th:cfdev} is very important as it describes the relationship between the conformable fractional derivative and the common derivative. And in the next subsection it will be shown that this property would be quite useful for us to define the conformable fractional difference and accumulation.

    Without loss of generality, we can also define the higher order derivative and the similar relationship to the commonly defined derivatives. Firstly we notice that the Definition \ref{def:cfdev} is actually in the following form
    \begin{equation}
        T_{\alpha}(f)(t) = \lim_{\varepsilon \rightarrow 0} \frac{f(t+\varepsilon t ^{[\alpha]-\alpha})- f(t)}{\varepsilon} = t ^{[\alpha]-\alpha} \frac{df(t)}{dt},
    \end{equation}
    where $[\cdot]$ is the ceil function, $i.\ e.$ the $[\alpha]$ is the smallest integer not smaller than $\alpha$. Thus we can still use the similar formulation to define the higher order conformable derivative.

    \begin{remark}\label{remark:cfdev}(see \cite{cfd2014})
        The  $\alpha$ order ($\alpha \in (n,n+1]$) conformable derivative is defined as follows
        \begin{equation}
            \begin{array}{ll}
                T_{\alpha}(f)(t)  = \lim_{\varepsilon \rightarrow 0} \frac{f^{(n)}(t+\varepsilon t ^{ [\alpha] -\alpha})- f^{(n)}(t)}{\varepsilon}
            \end{array}
        \end{equation}
        for all $t>0,\alpha \in (n,n+1], n\in N^{+}$, and $f$ is $(n+1)-$differentiable. And also, set $h=\varepsilon t ^{1-\alpha}$ , then $\varepsilon = h t ^{\alpha-1}$, we have
        \begin{equation}\label{eq:hdif}
            \begin{array}{ll}
                T_{\alpha}(f)(t)  & = t ^{[\alpha] -\alpha} \lim_{h \rightarrow 0} \frac{f^{(n)}(t+h)- f^{(n)}(t)}{h} \\
                                    & = t ^{[\alpha] -\alpha} \frac{d^n f(t)}{dt^n}.\\
            \end{array}
        \end{equation}
        Then when $\alpha=n+1$, there is $[\alpha] -\alpha=0$, thus we have
        \begin{equation}
                T_{n+1}(f)(t)  = \lim_{\varepsilon \rightarrow 0} \frac{f^{(n)}(t+\varepsilon )- f^{(n)}(t)}{\varepsilon} = \frac{d^{n+1} f(t)}{dt^{n+1}}.
        \end{equation}
    \end{remark}

\subsection{Definition of the conformable fractional accumulation and difference}

    It is well known that the first order difference can be easily defined by the approximation of the first order derivative as
    $\Delta f(k) \approx \frac{f(t)}{t}|_{t=k}= \lim_{h \rightarrow 1} \frac{f(t+h)- f(t)}{h} |_{t=k} = f(k+1)-f(k)$. Considering the Theorem \ref{th:cfdev} it is very natural to give the definition of the conformable fractional difference as follows:
    \begin{definition} \label{def:cfd}
     The conformable fractional difference (CFD) of $f$ with $\alpha$ order is defined as
    \begin{equation}
        \Delta^{\alpha} f(k) = k ^{1-\alpha} \Delta f(k) = k ^{1-\alpha} [f(k)-f(k-1)].
    \end{equation}
    for all $k \in N^{+},\alpha \in (0,1]$.
    \end{definition}

    Within this definition, we are going to define the conformable fractional accumulation. At first, let's recall the definition of the first order accumulation as
    \begin{equation}
        \nabla f(k) = \sum_{j=1}^{k}f(j).
    \end{equation}

    It is known that the first order accumulation is the inverse operator of the first order difference, because
    \begin{equation}\label{eq:inv1}
       \Delta \nabla f(k) = \Delta \left( \sum_{j=1}^{k} f(j) \right) = \sum_{j=1}^{k}f(j) - \sum_{j=1}^{k-1}f(j)=f(k).
    \end{equation}

    Without loss of generality, we denote the conformable fractional accumulation as $\nabla^{\alpha}$, and it should also satisfy the following relationship:
    \begin{equation}\label{eq:r1}
       \Delta^{\alpha} \nabla^{\alpha} f(k)= f(k).
    \end{equation}

    Considering the Definition \ref{def:cfd} of CFD, we can rewrite the Eq.\eqref{eq:r1} as
    \begin{equation}\label{eq:r2}
       \Delta^{\alpha} \nabla^{\alpha} f(k)= k^{1-\alpha} \Delta \left( \nabla^{\alpha} f(k)\right) =f(k).
    \end{equation}

    Dividing Eq.\eqref{eq:r2} by $k^{1-\alpha}$, we have
    \begin{equation}\label{eq:r3}
      \Delta \left( \nabla^{\alpha} f(k)\right)  =   \frac{f(k)}{k^{1-\alpha}}.
    \end{equation}
    Taking the first order accumulation of Eq.\eqref{eq:r3}, we can obtain the definition of conformable fractional accumulation as
    \begin{definition} \label{def:cfa}
     The conformable fractional accumulation (CFA) of $f$ with $\alpha$ order is defined as
    \begin{equation}\label{eq:ex1}
        \nabla^{\alpha} f(k) = \nabla \left( \frac{f(k)}{k ^{1-\alpha}} \right) =  \sum_{j=1}^{k} \left( \frac{f(j)}{j ^{1-\alpha}} \right).
    \end{equation}
    for all $k \in N^{+},\alpha \in (0,1]$.
    \end{definition}

    Comparing to the higher order conformable derivative  defined in Remark \ref{remark:cfdev}, we can easily define the higher order CFD.
    \begin{definition}\label{def:highercfd}
        The $\alpha$ order ($\alpha \in (n,n+1]$) CFD is defined as
        \begin{equation}\label{eq:hcfd}
           \Delta^{\alpha} f(k)= k^{[\alpha]-\alpha}  \Delta^{n}f(k)
        \end{equation}
        for all $n \in N.$
    \end{definition}

    Obviously, when $\alpha=1$ it yields the $n+1$ order difference $\Delta^{n+1}$. And the Definition \ref{def:highercfd} is a uniform definition of the CFD as it holds for all nonnegative $\alpha$ including $\alpha=0$.

    Notice that the higher order CFA is still the inverse operator of the higher order CFD, $i.\ e.$
    $$\Delta^{\alpha} \nabla^{\alpha} f(k)= f(k)$$ for $\alpha \in (n,n+1]$.

    Recalling the Definition \ref{def:highercfd}, we have
    \begin{equation}\label{eq:inv2}
       k^{[\alpha]-\alpha} \Delta^{n} \nabla^{\alpha} f(k) = f(k).
    \end{equation}

    Similarly, we can also obtain the definition of the $\alpha$ order CFA by dividing \eqref{eq:inv2}  by $k^{[\alpha]-\alpha}$  and using the relationship $\nabla^{n}\Delta^{n} f(k)=f(k)$.
    \begin{definition}\label{def:highercfa}
    The $\alpha$ order ($\alpha \in (n,n+1]$) CFA is defined as
    \begin{equation}\label{eq:hcfa}
                  \nabla^{\alpha} f(k) = \nabla^{n} \left( \frac{f(k)}{k ^{[\alpha]-\alpha}} \right).
        \end{equation}
    \end{definition}

    When $\alpha=n+1$ the CFA yields the $(n+1)$ order accumulation $\nabla^{n+1}$. And also the Definition \ref{def:highercfa} is a uniform definition for the CFA as it holds for all nonnegative $\alpha$ including $\alpha=0$. With the uniform definition of CFA, we can easily deduce a recursive equality as
    \begin{equation}\label{eq:cfarec}
                  \nabla^{\alpha} f(k) = \nabla \left(\nabla^{n-1} \left( \frac{f(k)}{k ^{[\alpha]-\alpha}} \right)\right) = \sum_{j=1}^{k} \left(\nabla^{\alpha-1} f(j) \right), \alpha \geq 1,
    \end{equation}
    and $n = [\alpha]-1$. This formulation is very convenient for computer implementation.

\subsection{Comparison to the existing fractional order accumulation and difference}

    The fractional order accumulation (FOA) introduced by Wu \cite{wulifeng2013}, which is often used in the existing fractional grey models is usually defined as
    \begin{equation}\label{eq:fago}
              \nabla_{W}^{\alpha} f(k) = \sum_{j=1}^{k} \left(
              \begin{array}{c}
                 k-j+\alpha-1  \\ k-j
              \end{array}\right) f(j).
    \end{equation}

    And its inverse operation, the fractional order difference (FOD), is defined as
    \begin{equation}\label{eq:ifago}
              \Delta_{W}^{\alpha} f(k) = \sum_{j=1}^{k} \left(
              \begin{array}{c}
                 k-j-\alpha-1  \\ k-j
              \end{array}\right) f(j).
    \end{equation}

    The coefficients in Eqs. \eqref{eq:fago} and \eqref{eq:ifago} can be uniformly defined as
    \begin{equation}\label{eq:cofago}
              \left(
              \begin{array}{c}
                 k-j \pm \alpha-1  \\ k-j
              \end{array}\right) = \frac{(k-j \pm \alpha-1)(k-j \pm \alpha-1)\cdots(\pm \alpha+1)(\pm \alpha)}{(k-j)!}.
    \end{equation}

    The fractional order $\alpha$ can be arbitrary fractional numbers (actually it can be any real numbers with such formulations according to the definition in \cite{mao2016novel}).

    It is clear that the definitions of the FOA and FOD above are more complex. And it would be shown in the numerical example that the implementations of the CFA and CFD are quite simple.

\section{The Conformable Fractional Grey Model}
\label{sec:cfgm}
    Within the definitions of the CFA and CFD we can rebuild the classical grey system model GM(1, 1), and these procedures are presented in this section.
\subsection{Formulation of the conformable fractional grey model}
\label{sec:modelcfgm}

    With the original series $X^{(0)}= \left( x^{(0)}(1),x^{(0)}(2),...,x^{(0)}(N)\right)$, we firstly denote the $\alpha$ order CFA as \begin{equation}\label{eq:alphaago}
        X^{(\alpha)}= \left( x^{(\alpha)}(1),x^{(\alpha)}(2),...,x^{(\alpha)}(N) \right),
    \end{equation}
    where
    \begin{equation}
        x^{(\alpha)}(k) = \nabla^{\alpha} x^{(0)}(k) = \left\{
        \begin{aligned}
        &\sum_{j=1}^{k} \frac{x^{(0)}(j)}{j ^{[\alpha]-\alpha}},& 0<\alpha \leq 1,\\
        &\sum_{j=1}^{k}x^{(\alpha-1)}(j), & \alpha >1.
        \end{aligned}\right.
    \end{equation}

    Then the Conformable Fractional Grey Model is represented as
    \begin{equation}\label{eq:whiten}
        \frac{dx^{(\alpha)}(t)}{dt} + a x^{(\alpha)}(t) = b.
    \end{equation}

    In the rest of this paper, we abbreviate it as CFGM for convenience. And the differential equation \eqref{eq:whiten} is called its whitening equation. When $\alpha=1$, it yields the conventional GM(1, 1) in Liu's book \cite{liubook}. And if the fractional order accumulated series \eqref{eq:alphaago} is computed by the existing FOA in \eqref{eq:fago}, the CFGM model can be translated to the existing FGM model by Wu \cite{wulifeng2013}.

    The discrete form of the whitening equation can be obtained using the trapezoid formula as
    \begin{equation}\label{eq:dis}
        \left( x^{(\alpha)}(k)-x^{(\alpha)}(k-1)\right)  + \frac{a}{2} \left( x^{(\alpha)}(k) + x^{(\alpha)}(k-1)\right) = b.
    \end{equation}

    Similar procedures can be found in recent researches on fractional grey models.

\subsection{Parameters estimation}

    Within the discrete form \eqref{eq:dis} we can easily obtain the parameter estimation of the CFGM model using the least squares method with given samples and $\alpha$ as
    \begin{equation}\label{eq:lse}
        [\hat{a},\hat{b}]^T= (B^T B)^{-1}B^T Y,
    \end{equation}
    where
    \begin{equation}\label{eq:matrixby}
    B=
        \left[
          \begin{array}{cc}
            -\frac{1}{2} \left( x^{(\alpha)}(2) + x^{(\alpha)}(1)\right) & 1 \\
            -\frac{1}{2} \left( x^{(\alpha)}(3) + x^{(\alpha)}(2)\right) & 1 \\
            \vdots & \vdots \\
            -\frac{1}{2} \left( x^{(\alpha)}(N) + x^{(\alpha)}(N-1)\right) & 1 \\
          \end{array}
        \right]
         ,
         Y=
         \left[
            \begin{array}{c}
               x^{(\alpha)}(2)-x^{(\alpha)}(1) \\  x^{(\alpha)}(3)-x^{(\alpha)}(2) \\ \vdots \\  x^{(\alpha)}(N)-x^{(\alpha)}(N-1)
            \end{array}
         \right].
    \end{equation}

\subsection{Response function and restored values}

    Notice that the initial point $x^{(\alpha)}(1)= \nabla \frac{x^{(0)}(1)}{1^{[\alpha] - \alpha}}=x^{(0)}(1)$, the response function can be obtained by solving the whitening equation \eqref{eq:whiten} as
    \begin{equation}\label{eq:restemp}
        x^{(\alpha)}(t) = ( x^{(0)}(1) - \frac{b}{a} ) e^{-a(t-1)} + \frac{b}{a}.
    \end{equation}
    Thus the predicted values of the CFA series can be computed using the estimated parameters and the discrete form of the response function as
    \begin{equation}\label{eq:resp}
        \hat{x}^{(\alpha)}(k) = ( x^{(0)}(1) - \frac{\hat{b}}{\hat{a}} ) e^{-\hat{a}(k-1)} + \frac{\hat{b}}{\hat{a}}.
    \end{equation}

    Then restored values can be obtained using the CFD as
    \begin{equation}\label{eq:restore}
        \hat{x}^{(0)}(k) = \Delta ^\alpha \hat{x}^{(\alpha)}(k)= k^{[\alpha] - \alpha} \Delta ^n \hat{x}^{(\alpha)}(k), \alpha \in (n,n+1].
    \end{equation}
    Particularly, when $\alpha \in (0,1]$ the restored values can be written as
     \begin{equation}\label{eq:restore1}
        \hat{x}^{(0)}(k) = k^{1 - \alpha} \left(\hat{x}^{(\alpha)}(k)-\hat{x}^{(\alpha)}(k-1)\right).
    \end{equation}

\subsection{Computation steps}
\label{subsec:cmptstep}

    The computation steps of CFGM model with given sample and $\alpha$ can be summarized as follows:

    \emph{\textbf{Step 1:}} Compute the $\alpha$ order CFA series $\left( x^{(\alpha)}(1), x^{(\alpha)}(2),...,x^{(\alpha)}(N) \right)$ of the given sample $\left( x^{(0)}(1), x^{(0)}(2),...,x^{(0)}(N) \right)$;

    \emph{\textbf{Step 2:}} Compute the parameters $\hat{a}$ and $\hat{b}$ using Eq.\eqref{eq:lse};

    \emph{\textbf{Step 3:}} Compute the predicted values of CFA series $\hat{x}^{(\alpha)}(k)$ and the restored values using the response function \eqref{eq:resp} and the CFD \eqref{eq:restore} by $k$ from 1 to $n+p$, respectively. Where $n$ is the number of samples and $p$ is the number of prediction steps.

\subsection{Brute force method for selecting the optimal $\alpha$}
\label{sec:bf}
    It should be noticed that the above procedures are given with the assumption that the value $\alpha$ is given. Thus we can build a very simple optimization problem for the optimal $\alpha$ as

    $$\underset{\alpha}{\min} \quad \text{MAPE} = \sum_{j=1}^{N} \left|\frac{\hat{x}^{(0)}(j) - x^{(0)}(j)}{x^{(0)}(j)} \right|\times 100\% $$
    \begin{equation}\label{eq:opt}
    \begin{aligned}
     s.t.
        \left\{
          \begin{array}{l}
            [\hat{a},\hat{b}]^T = (B^T B)^{-1}B^T Y \\
                B=
        \left[
          \begin{array}{cc}
            -\frac{1}{2} \left( x^{(\alpha)}(2) + x^{(\alpha)}(1)\right) & 1 \\
            \vdots & \vdots \\
            -\frac{1}{2} \left( x^{(\alpha)}(N) + x^{(\alpha)}(N-1)\right) & 1 \\
          \end{array}
        \right]\\
         Y=
         \left[
            \begin{array}{c}
               x^{(\alpha)}(2)-x^{(\alpha)}(1) \\ \vdots \\  x^{(\alpha)}(k)-x^{(\alpha)}(k-1)
            \end{array}
         \right]\\
          \hat{x}^{(\alpha)}(k) = ( x^{(0)}(1) - \frac{\hat{b}}{\hat{a}} ) e^{-\hat{a}(k-1)} + \frac{\hat{b}}{\hat{a}} \\
          \hat{x}^{(0)}(k) = k^{[\alpha] - \alpha} \left( \hat{x}^{(\alpha)}(k) - \hat{x}^{(\alpha)}(k-1)\right). \quad k=2,3,...,N.\\  \end{array}
        \right.
        \end{aligned}
    \end{equation}

    This formulation is often used in the nonlinear grey models with tunable parameters. And its objective is to find out the optimal parameter $\alpha$ which minimizes the mean absolute percentage error (MAPE) of the model. The optimization problem is essentially a nonlinear programming with nonlinear objective function and nonlinear constraints. There are a lot of nonlinear optimizers used for the grey system models in recent publications, and heuristics are often adopted to solve similar problems in other fields \cite{Xiao2019AE}. However, in this paper we do not use the cutting-edge optimizers but the Brute Force method. The Brute Force method is a basic method for solving the optimization problems, it can solve almost all the problems but with low efficiency (with large number of iterations and low accuracy). Thus if the Brute Force method is available to select an appropriate $\alpha$ for the CFGM model, it is sufficient to prove its simplicity and applicability.

    In this paper we enumerate all the values in the interval $[0,2]$ with step $0.01$, then use the computational steps presented in subsection \ref{subsec:cmptstep} and select the $\alpha$ corresponds to the minimum MAPE as the optimal value. The detailed algorithm is presented in Algorithm \ref{al:alg1}.

    \begin{algorithm}
    \small
     \caption{Brute force method to compute the fractional order $\alpha$ of CFGM\label{al:alg1}}
      \KwIn{Original series $\Big( x^{(0)}(1),x^{(0)}(2),...,x^{(0)}(N)\Big)$}
      \KwOut{Optimal $\alpha^*$}
      Initialize   $\alpha^*=0$, MAPE$_{min}=inf$\;

        \For{$\alpha$=0 to 2; Step=0.01}
        {
            Construct $B$ and $Y$ using Eq.\eqref{eq:matrixby}\:

            Compute $\hat{a},\hat{b}$ using Eq. \eqref{eq:lse}\;
            \For{$k$=1 to n; Step=1}
            {
                Compute $\hat{x}^{(\alpha)}(k)$  using Eq.\eqref{eq:resp} \;
                Compute $\hat{x}^{(0)}(k)$ using Eq.\eqref{eq:restore} \;
            }
            Compute MAPE using the objective function in Eq.\eqref{eq:opt}\;
            \If{MAPE $<$ MAPE$_{min}$}
            {
                MAPE$_{min} \leftarrow$ MAPE\;
                $\alpha^*$ $\leftarrow$ $\alpha$;
           }
        }
        return $\alpha^{*}$\;
    \end{algorithm}

\section{Numerical example}

    In this section we present a numerical example to show the computational steps of the  CFGM model with the raw data
    $X^{(0)}= \Big(55.70,59.01,62.10,62.27,60.81,58.39,55.43,52.18,48.80,45.41 \Big)$. In the following contexts, we use the first 5 points to build the CFGM model, leaving last 5 points for testing. The decimals listed in this section are often truncated in order to make them easy to display. And all the numbers are set to be ``Real number'' in Matlab R2018a in all computational steps in this paper.

\subsection{Computing the CFA of the original series}
\label{sec:cfa}

\begin{table}[!htb]
\caption{ Computational details for the CFA with $\alpha=1.1$  \label{t:ex1_cfa}}
\centering
\scriptsize
\begin{tabular}{rrrrrr}
\toprule
$k$		& $x^{(0)}(k)$	&	$k^{[\alpha]-\alpha}$ 	&	 $\frac{x^{(0)}(k)}{k^{[\alpha]-\alpha}}$	&	 $x^{(\alpha-1)}(k)=\sum_{j=1}^{k}\frac{x^{(0)}(j)}{j^{[\alpha]-\alpha}}$	&	 $x^{(\alpha)}(k)=\sum_{j=1}^{k}x^{(\alpha-1)}(j)$	\\
\\
\hline
\\
1	&	55.7	&	1.00 	&	55.70 	&	55.70 	&	55.70 	\\
2	&	59	&	1.87 	&	31.62 	&	87.32 	&	143.02 	\\
3	&	62.7	&	2.69 	&	23.33 	&	110.64 	&	253.66 	\\
4	&	61.3	&	3.48 	&	17.60 	&	128.25 	&	381.91 	\\
5	&	61.4	&	4.26 	&	14.42 	&	142.67 	&	524.58 	\\
\bottomrule
\end{tabular}
\end{table}
    Computation of the CFA of the original series is the first step to build the CFGM model. For convenience, we firstly set $\alpha=1.1$ to illustrate the computational steps of CFA, and the detailed steps are listed in Table \ref{t:ex1_cfa}.

    In the first row of Table \ref{t:ex1_cfa} we present the formulations in the computational steps, and the corresponding values are listed in the following rows. We firstly list the values of $x^{(0)}(k)$ in the second column, and the values of $k^{[\alpha]-\alpha}$ in the third column, respectively. Recalling the uniform Definition \ref{def:highercfa} of the higher order CFA, we need firstly to compute the values of $\frac{x^{(0)}(k)}{k^{[\alpha]-\alpha}}$, which are listed in the fourth column in Table \ref{t:ex1_cfa}. Noticing that $\alpha = 1.1 \in (1,2]$, then $[\alpha]-\alpha=0.1$, the computational formulation for the CFA should be $$x^{(1.1)}(k)= \nabla^{2} \left( \frac{x^{(0)}(k)}{k^{0.1}} \right).$$

    This implies that we need to computing the second order accumulation of $\frac{x^{(0)}(k)}{k^{[\alpha]-\alpha}}$. Noticing that the first order accumulation of $\frac{x^{(0)}(k)}{k^{[\alpha]-\alpha}}$ is actually $$x^{(\alpha-1)}(k)= \nabla \left( \frac{x^{(0)}(k)}{k^{0.1}} \right),$$ then we can write $x^{(\alpha)}(k)=\sum_{j=1}^{k}x^{(\alpha-1)}(j)$,  thus we list the values of $x^{(\alpha-1)}(k)$ before the values of $x^{(\alpha)}(k)$ in the fifth and  sixth column, respectively.

    The above descriptions indicate that the CFA can be easily implemented. Actually, the Table \ref{t:ex1_cfa} is implemented in the software Microsoft Excel 2010.

    Using the similar computational steps, we can easily obtain the values of CFA with $$\alpha=0.1,0.2,...,2.$$ The subfigures in Fig. \ref{fig:ex_cfa} presents the plots of CFA with $\alpha=0.1,0.2,...,1$ and $\alpha=1.1,1.2,...,2.$ It can be seen that the CFA series approaches to the 1-AGO series when $\alpha$ approaches to 1, and it approaches to 2-AGO series when $\alpha$ approaches to 2. It can also be seen that for each point the CFA value $x^{(\alpha)}(k)$ becomes larger with larger $\alpha$, and the growing speed also increases with larger $\alpha$.

\begin{figure}[!htb]

\subfloat[]{
  \includegraphics[width=0.45\textwidth]{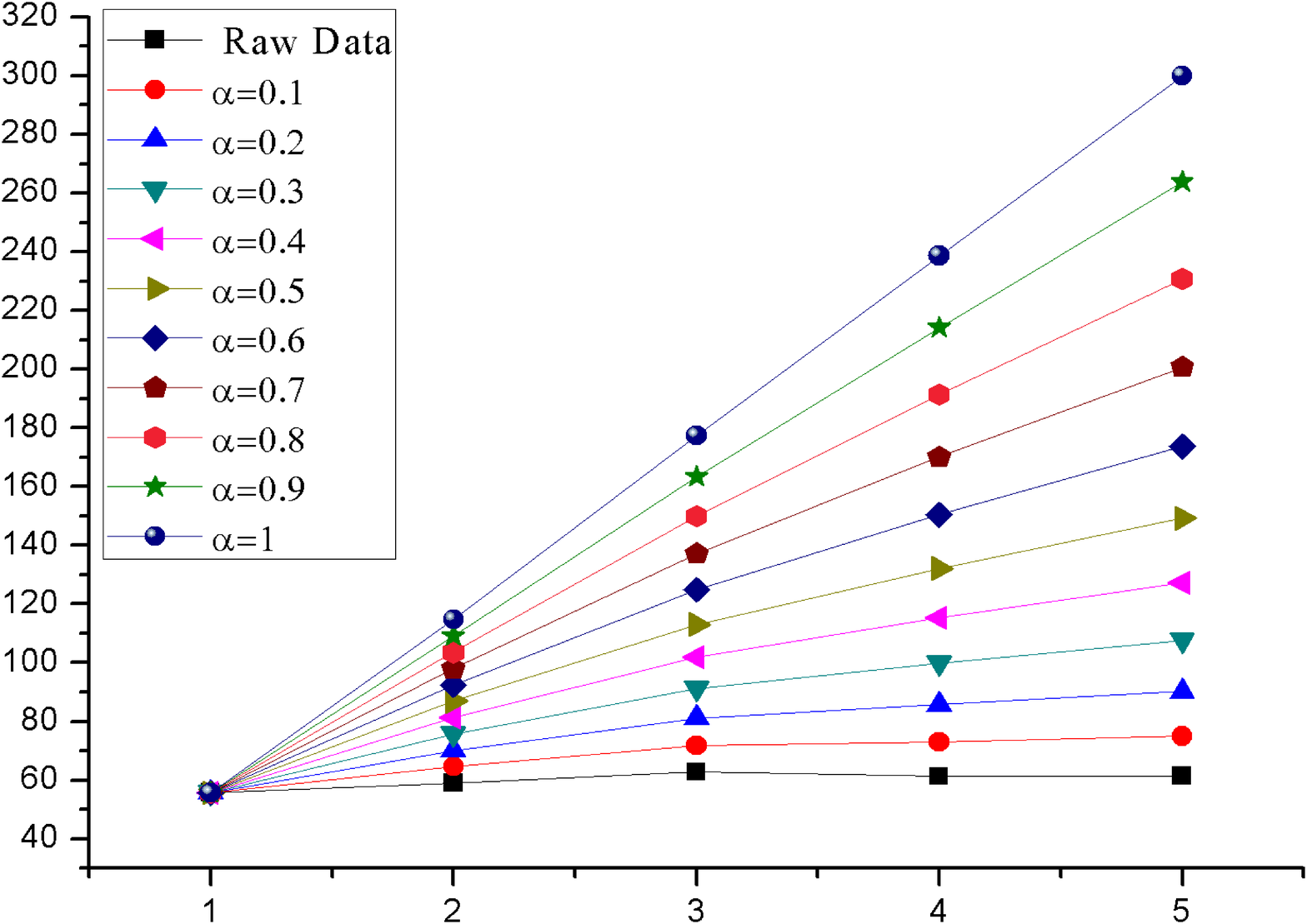}}\hfill
\subfloat[]{
  \includegraphics[width=0.45\textwidth]{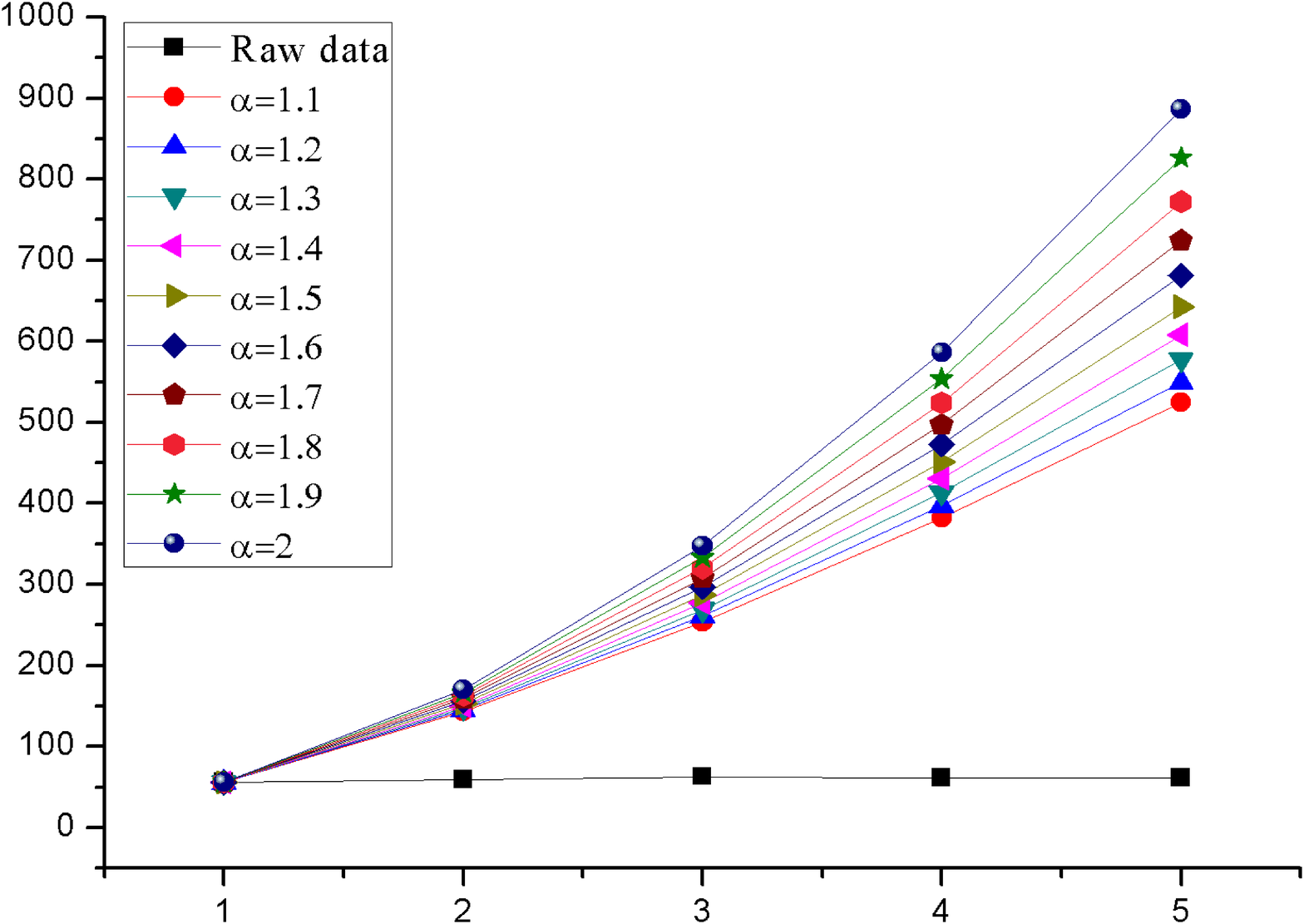}}\hfill

\caption{The CFA series with different $\alpha$.
 \label{fig:ex_cfa}}
\end{figure}

\subsection{Modelling the CFGM}

    In this step we choose the order as  $\alpha=0.59$, thus the CFA series can be obtained using the similar steps presented in Subsection \ref{sec:cfa} as $X^{(0.59)}= \Big(55.70,100.10,140.07,174.79,206.53 \Big).$

    The matrices $B$ and $Y$ in \eqref{eq:lse} can be constructed as
    \begin{equation}
    B=
        \left[
          \begin{array}{cc}
            -77.9024 	&	1	\\
            -120.0858 	&	1	\\
            -157.4283 	&	1	\\
            -190.6591 	&	1	\\
          \end{array}
        \right]
         ,
         Y=
         \left[
            \begin{array}{c}
                44.4048 	\\
                39.9621 	\\
                34.7229 	\\
                31.7388 	\\
            \end{array}
         \right].
    \end{equation}

    Then we obtain the parameters using the least squares solution \eqref{eq:lse} as
    \begin{equation}\label{eq:params}
      [\hat{a},\hat{b}]^T= (B^T B)^{-1}B^T Y = [0.1152,53.4382]^T.
    \end{equation}

    By substituting the parameters $\hat{a},\hat{b}$ into the response function \eqref{eq:resp} we have
    \begin{equation}\label{eq:ex_resp}
        \hat{x}^{(\alpha)}(k) = -408.17e^{-0.1152(k-1)} + 463.87.
    \end{equation}
    Then the restored can be obtained using \eqref{eq:ex_resp} by $k$ from 1 to 10 as

   \begin{equation}
   \hat{X}^{(0.59)}= \Big( 55.7,100.11,139.69,174.96,206.39,234.4,259.37,281.61,301.43,319.1\Big).
   \end{equation}

   Then the restored values can be obtained using the CFD in \eqref{eq:restore} as
   \begin{equation}
   \hat{X}^{(0)}= \Big(55.7,59.01,62.10,62.27,60.81,58.39,55.43,52.18,48.80,45.41 \Big).
   \end{equation}

   The predicted values of $\hat{X}^{(0.59)}$ and  $\hat{X}^{(0)}$ are plotted in Fig.\ref{fig:ex_predict}. 
\begin{figure}[!htb]

\subfloat[]{
  \includegraphics[width=0.45\textwidth]{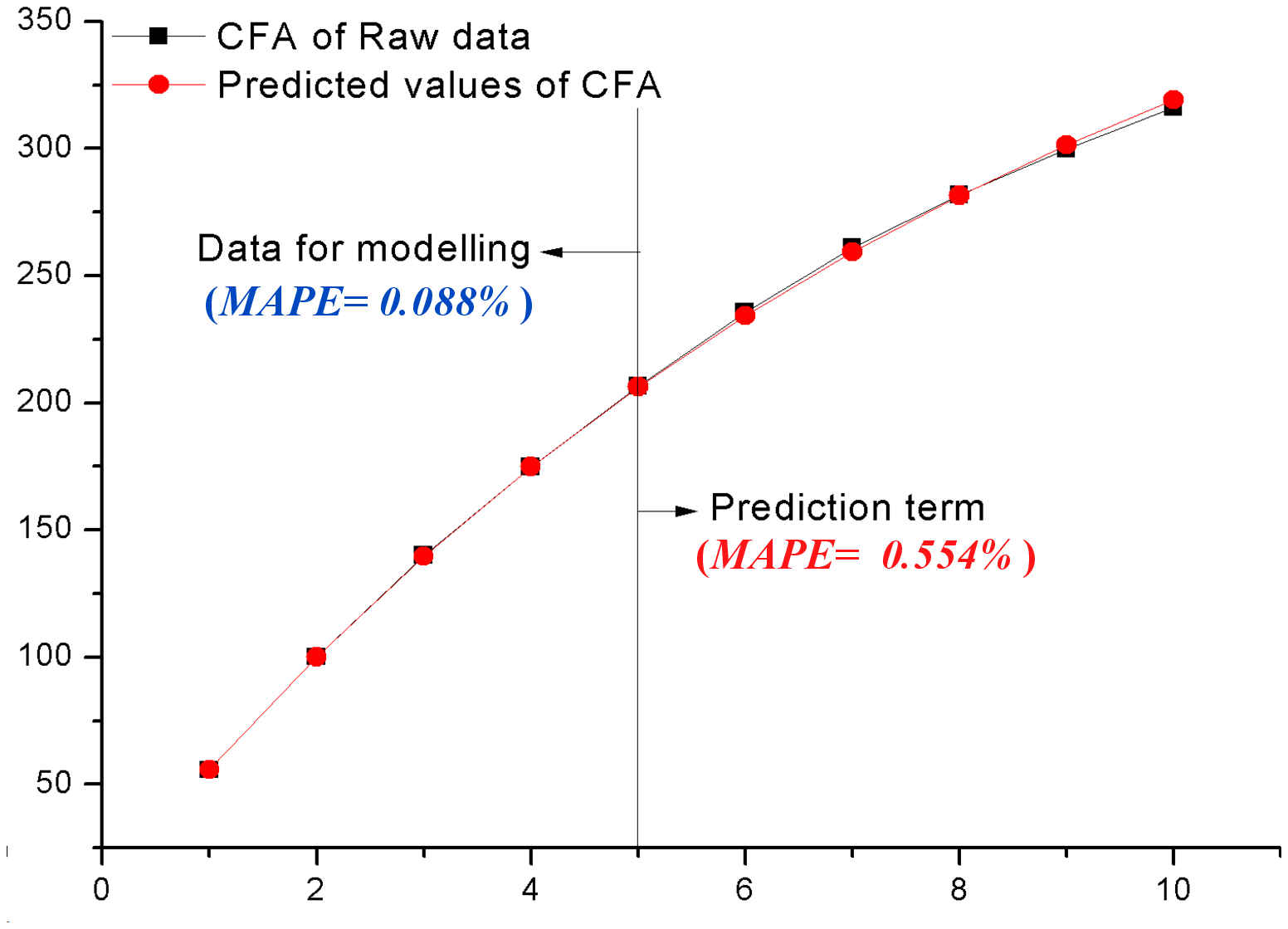}}\hfill
\subfloat[]{
  \includegraphics[width=0.45\textwidth]{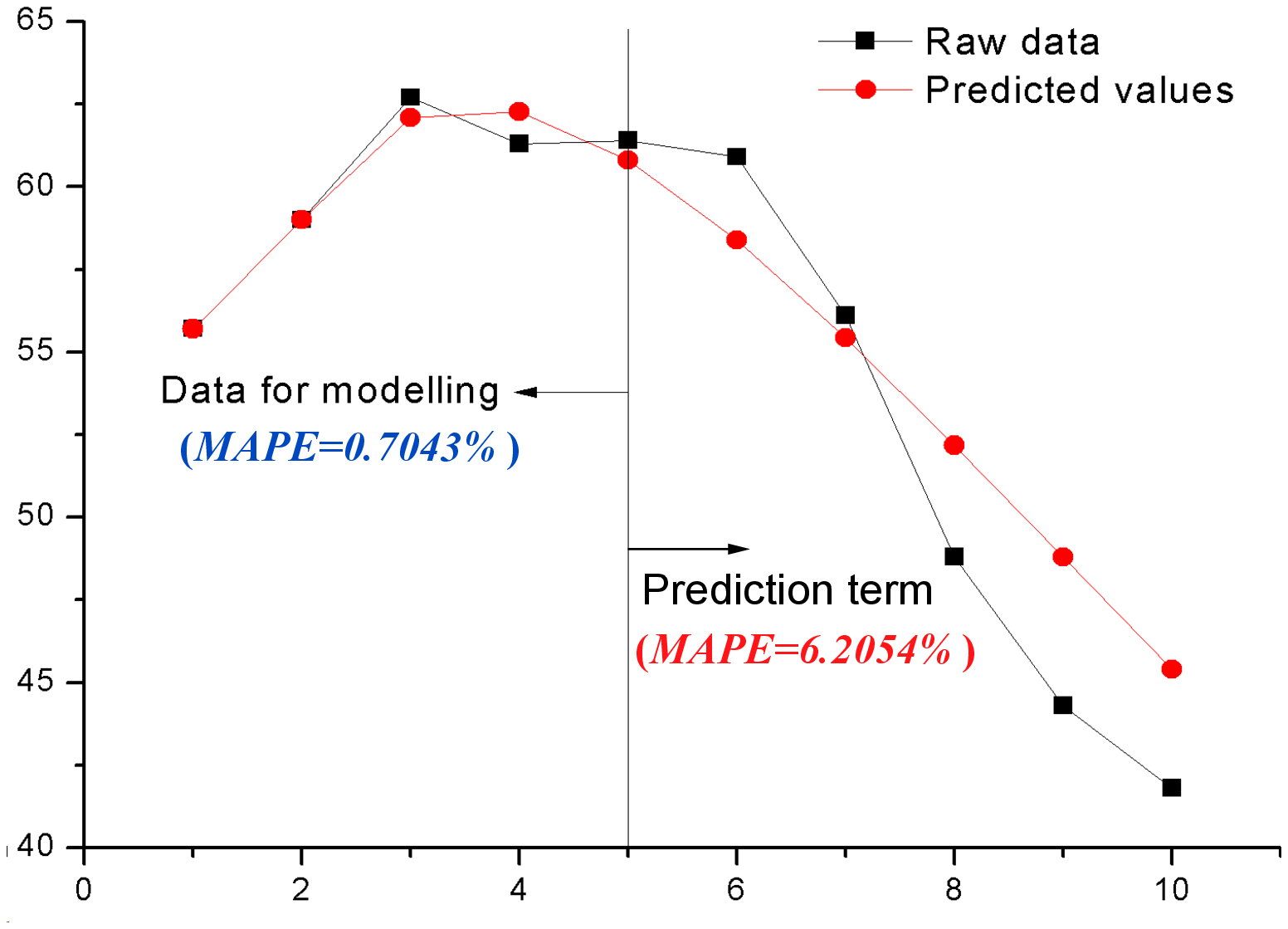}}\hfill

\caption{Predicted values by CFGM model with $\alpha=0.59$. (a) CFA series; (b)The restored values.
 \label{fig:ex_predict}}
\end{figure}

\subsection{Selecting the optimal $\alpha$}
\label{subsec:alpha}

\begin{figure}[!htb]
\begin{center}
\subfloat[]{
  \includegraphics[width=0.55\textwidth]{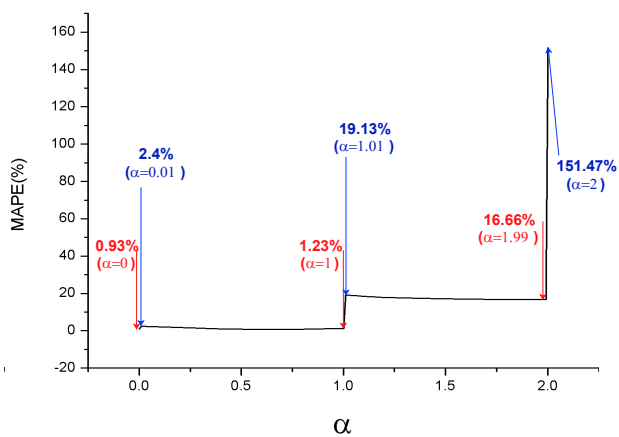}}
\end{center}
\subfloat[]{
  \includegraphics[width=0.45\textwidth]{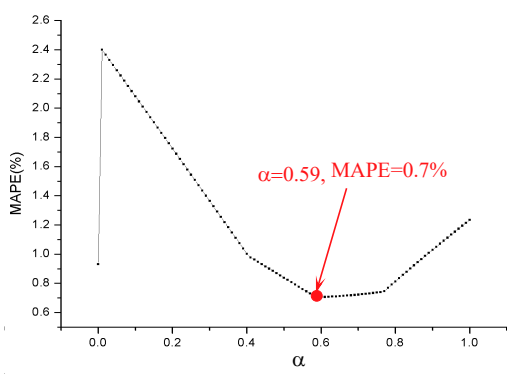}}\hfill
\subfloat[]{
  \includegraphics[width=0.45\textwidth]{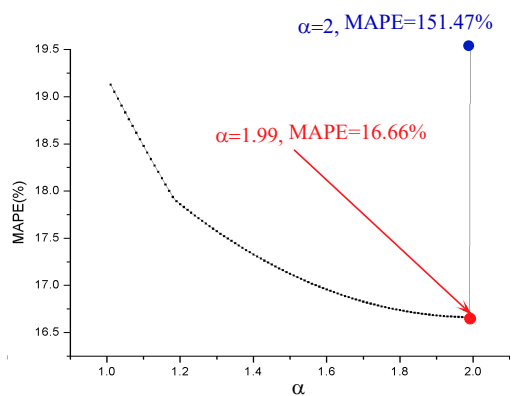}}\hfill
\caption{MAPEs of CFGM model with $\alpha$ in $[0,2]$ by step 0.01. (a) The overall picture of MAPEs with $\alpha$ in $[0,2]$; (b) The picture of MAPEs with $\alpha$ in $[0,1]$; (c) The picture of MAPEs with $\alpha$ in $[1.01,2]$.
 \label{fig:ex_mapes}}
\end{figure}

    The Brute Force strategy used in this paper is quite simple, which is actually repeating the computational steps in the above subsections, the fitting MAPEs with $\alpha$ in the interval $[0,2]$ with step $0.01$ are plotted in Fig. \ref{fig:ex_mapes}. It can be seen that the values of MAPE often jump when the $\alpha$ is near an integer, thus the subfiures in Fig. \ref{fig:ex_mapes} are also presented in order to provide a clearer picture. And in this example, the optimal $\alpha$ is obtained at $\alpha=0.59$, which is the value we used in the above subsection. And it is shown that the optimal $\alpha$ is quite easy to obtain, which indicates that Brute Force strategy is available.

\section{Validation of CFGM  with benchmark data sets}

\subsection{The benchmark data sets}

    The data sets used in this section are collected from the Time Series Data Library by Rob Hyndman \footnote{Available at the website of Time Series Data Library https://datamarket.com/data/list/?q=provider:tsdl}. In order to validate the effectiveness of CFGM  in different scenarios, we choose 21 data sets in different sizes and different background. Detailed information of the data sets used in this section is listed in Table \ref{t:datainfo}.

\begin{table}[!htb]
\centering
\tiny
\caption{Information of the data sets collected from the Time Series Data Library \label{t:datainfo}}
\begin{tabular}{p{0.1\columnwidth}p{0.6\columnwidth}p{0.1\columnwidth}}
\toprule
No	&	Data Title	&	 Instance	\\
\hline
1	&	Annual U.S. suicide rate (per 100,1000) 1920 – 1969	&	50	\\
2	&	Births per 10,000 of 23 year old women, U.S., 1917-1975	&	59	\\
3	&	Women unemployed (1000's) U.K. 1st of month Jan. ’67-July. ’72	&	67	\\
4	&	Real-estate loans (billions) (monthly: Jan.1973-Oct.1978)	&	70	\\
5	&	Annual sheep population (1000s) in England \& Wales 1867 – 1939	&	73	\\
6	&	Monthly gambling expenditure in Victoria, Australia, July 1999 to November 2006. Units are Millions of dollars per day. Smoking ban introduced in gaming venues in September 2002	&	89	\\
7	&	Highest mean monthly level, Lake Michigan, 1860 to Dec 1955	&	96	\\
8	&	Monthly logged flour price indices over the 9-years period 1972-1980 at the commodity exchanges of Buffalo, Minneapolis, and Kansas City	&	100	\\
9	&	Mean July average water surface elevation, in feet, for harbour beach,Michigan, on Lake huron, station 5014,1875 – 1972	&	127	\\
10	&	Monthly number of employed persons in Australia: thousands. Feb 1978 – Apr 1991	&	159	\\
11	&	Weekly closings of the Dow-Jones industrial average, July 1971 – August 1974	&	162	\\
12	&	Wisconsin employment time series, fabricated metals, Jan. 1961 – Oct. 1975	&	178	\\
13	&	Number of deaths and serious injuries in UK road accidents each month. Jan 1969 – Dec 1984. Seatbelt law introduced in Feb 1983 (indicator in second column)	&	192	\\
14	&	Civilian labour force in Australia each month: thousands of persons. Feb 1978 – Aug 1995	&	211	\\
15	&	Monthly U.S air passenger miles January 1960 through December 1977	&	216	\\
16	&	IBM common stock closing prices: daily, 29th June 1959 to 30th June 1960 	&	255	\\
17	&	Monthly U.S. auto registration (thousands) 1947 – 1968	&	264	\\
18	&	Exchange rate of Australian dollar: A for 1 US dollar. Monthly average: Jul 1969 - Aug 1995	&	314	\\
19	&	IBM common stock closing prices: daily, 17th May 1961 – 2nd November 1962	&	369	\\
20	&	Monthly mean water levels in meters, Lake of the wood at warroad, 1916-1965	&	600	\\
21	&	Dutch weekly survey on left-right political orientation (1978-1996), means on scale of 1-7 for (groups of parties), trend and number of respondents per party.	&	988	\\
\bottomrule
\end{tabular}
\end{table}

\subsection{Criteria for evaluating the overall performance of the prediction models}

    The modelling accuracy criteria we used in this paper are the mean squares error (MSE), mean absolute error (MAE) and mean absolute percentage error (MAPE) with standard deviation (STD) for all subcases.

    Firstly we define the error for each point as
    \begin{equation}\label{eq:error}
        \varepsilon_k =\hat{x}^{(0)}(k) - x^{(0)}(k),
    \end{equation}

    Then the MSE, MAE and MAPE with the corresponding STD are defined in Table \ref{t:criteria}.

    \begin{table}[!htb]
    \centering
    \scriptsize
    \caption{Modelling accuracy evaluation criteria \label{t:criteria}}
    \begin{tabular}{ccc}
    \toprule
    	AE 		&			MAE		&		STD		\\
    $	\varepsilon_{k}^{A}=|\varepsilon_{k}|	$	&		$	\overline{\varepsilon_{k}^{A}}=\frac{1}{N}\sum_{k} \varepsilon_{k}^{A}	$	&	$	\sqrt{\frac{1}{N}\sum_{k} \left(\varepsilon_{k}^{A}-\overline{\varepsilon_{k}^{A}}\right)^2}	$	\\
    \hline \\
    	SE		&			MSE		&		STD		\\
    $	\varepsilon_{k}^{S}=\varepsilon_{k}^2	$	&		$	\overline{\varepsilon_{k}^{S}}=\frac{1}{N}\sum_{k} \varepsilon_{k}^{S}	$	&	$	\sqrt{\frac{1}{N}\sum_{k} \left(\varepsilon_{k}^{S}-\overline{\varepsilon_{k}^{S}}\right)^2}	$	\\
    \hline \\
    	APE		&			MAPE		&		STD		\\
    $	\varepsilon_{k}^{P}=\left|\frac{\varepsilon_{k}}{x_{i}^{(0)}(j)}\right|\times 100\	$	&		$	\overline{\varepsilon_{k}^{P}}=\frac{1}{N}\sum_{k} \varepsilon_{k}^{P}	$	&	$	\sqrt{\frac{1}{N}\sum_{k} \left(\varepsilon_{k}^{P}-\overline{\varepsilon_{k}^{P}}\right)^2}	$	\\
    \bottomrule
    \end{tabular}
    \end{table}

\subsection{Validation scheme and results}

    In this section we use the classical and widely adopted 1-step, 2-step and 3-step prediction test to validate the performance of CFGM. The results are also compared to the existing FGM model and the autoregressive model (AR).  As described in \cite{yang2019}, the 1-step prediction, the predicted value $\hat{x}^{(0)}(k)$ is obtained according to the previous points $\left( x^{(0)}(k-1), x^{(0)}(k-2),...,x^{(0)}(k-\tau)\right)$, where $\tau$ is the input number which is set as 5 in this section. Then the 2-step and 3-step predictions are executed in the similar way.

    Results of 1-step, 2-step and 3-step prediction based on different data sets are listed in Table \ref{t:valiresults}, numerics in bold stand for the best results. It is clearly shown that all the criteria of CFGM are smaller than FGM and AR, which indicates the CFGM has the highest accuracy in 1-step, 2-step and 3-step predictions and also has the highest stability.

    A summary of the results is also listed in Table \ref{t:rsum}, which presents the maximum values and average values of the criteria MAE, MSE and MAPE. The maximum values of the indicators just show the worst case of the models, and average values represent the overall performance of the models in all the cases. The results shown in Table \ref{t:rsum} still coincides that in Table \ref{t:valiresults}, which indicates the performance of CFGM is significantly better than FGM and AR. It is interesting to see that in most cases, the AR has the worst performance. Noticing that the sample sizes of the subcases are all very small, it is clear that the CFGM and FGM are better in small sample forecasting.

    As shown in Table \ref{t:rsum}, errors of CFGM and FGM all increase with more steps of prediction. But such increase of CFGM is much slower. For one more step prediction, the MAPE of CFGM only increases 0.4\%, and MAE increases no larger than 3, and the MSE always stays in the same order of magnitudes; but for FGM, the MAPE increases more than two times, the MAE increases more than 14, and MSE increases almost ten times. In the worst case, the maximum errors of the models increases faster. It can be seen that the increase of maximum errors by CFGM is still acceptable, and in the 3-step prediction, the maximum MAPE of CFGM is 19.17\%, which is still acceptable. But for the FGM, the maximum errors increases very fast. In 2-step and 3-step predictions, the maximum MAPE by FGM are as large as 117.5847\% and 465.9076\% , respectively, which are not acceptable at all.
    Above all, it is clear that the CFGM outperforms the existing FGM and AR model in the validation with the 21 benchmark data sets.

\begin{table}[htbp]
\caption{Validation results of CFGM, FGM and AR with the benchmark data sets\label{t:valiresults}}
\tiny
\centering
\setlength{\tabcolsep}{0.4mm}{

\resizebox{\textwidth}{!}{
	\begin{tabular}{ccccccccccccc}

	\toprule
	&		&		CFGM						&	FGM			&	AR			&		CFGM						&	FGM			&	AR			&		CFGM						&	FGM			&	AR				\\
Number	&	Ceriteria	&		1-Step						&	1-Step			&	1-Step			&		2-Step						&	2-Step			&	2-Step			&		3-Step						&	3-Step			&	3-Step				\\
\hline
1	&	MSE	&	\bf{	89.44	}	$\pm$	\bf{	153.26	}	&	159.15	$\pm$	561.87	&	129.55	$\pm$	285	&	\bf{	186.03	}	$\pm$	\bf{	280.34	}	&	1006.03	$\pm$	5204.3	&	8586.96	$\pm$	5110.35	&	\bf{	308.45	}	$\pm$	\bf{	424.26	}	&	5982.84	$\pm$	33838.66	&	5725.16	$\pm$	3405.96		\\
 	&	MAE	&	\bf{	7.17	}	$\pm$	\bf{	6.17	}	&	7.6	$\pm$	10.07	&	7.62	$\pm$	8.45	&	\bf{	10.1	}	$\pm$	\bf{	7.58	}	&	12.9	$\pm$	21.69	&	214.4	$\pm$	110.62	&	\bf{	12.36	}	$\pm$	\bf{	8.52	}	&	20.85	$\pm$	44.4	&	143.01	$\pm$	73.59		\\
 	&	MAPE	&	\bf{	5.87	}	$\pm$	\bf{	5.24	}	&	6.38	$\pm$	9.54	&	6.26	$\pm$	7.6	&	\bf{	8.24	}	$\pm$	\bf{	6.19	}	&	10.75	$\pm$	19.2	&	165.64	$\pm$	91.79	&	\bf{	9.99	}	$\pm$	\bf{	6.44	}	&	17.43	$\pm$	38.61	&	110.48	$\pm$	61.08		\\
 																																																				\\
2	&	MSE	&	\bf{	225.98	}	$\pm$	\bf{	612.43	}	&	338.36	$\pm$	1004.72	&	1325.28	$\pm$	7724.56	&	\bf{	468.29	}	$\pm$	\bf{	1006.44	}	&	1084.19	$\pm$	2655.89	&	1.6E+05	$\pm$	1.3E+05	&	\bf{	716.39	}	$\pm$	\bf{	1294.86	}	&	2977.16	$\pm$	8056.03	&	104258.27	$\pm$	84914.65		\\
 	&	MAE	&	\bf{	9.79	}	$\pm$	\bf{	11.41	}	&	11.36	$\pm$	14.47	&	15.6	$\pm$	32.89	&	\bf{	14.17	}	$\pm$	\bf{	13.75	}	&	18.85	$\pm$	20.06	&	460.57	$\pm$	332.48	&	\bf{	17.54	}	$\pm$	\bf{	14.82	}	&	26.86	$\pm$	28.84	&	307.66	$\pm$	222.34		\\
 	&	MAPE	&	\bf{	5.45	}	$\pm$	\bf{	6.09	}	&	6.4	$\pm$	7.67	&	8.98	$\pm$	18.91	&	\bf{	7.91	}	$\pm$	\bf{	7.28	}	&	10.69	$\pm$	11.22	&	281.24	$\pm$	198.14	&	\bf{	9.75	}	$\pm$	\bf{	7.76	}	&	15.36	$\pm$	17.38	&	188.01	$\pm$	132.75		\\
 																																																				\\
3	&	MSE	&	\bf{	39.99	}	$\pm$	\bf{	75.39	}	&	46.5	$\pm$	114.14	&	156.52	$\pm$	511.49	&	\bf{	81.85	}	$\pm$	\bf{	99.21	}	&	97.24	$\pm$	140.96	&	6630.15	$\pm$	3184.94	&	\bf{	143.22	}	$\pm$	\bf{	161.84	}	&	199.8	$\pm$	267.6	&	4420.08	$\pm$	2115.64		\\
 	&	MAE	&	\bf{	4.88	}	$\pm$	\bf{	4.02	}	&	5.01	$\pm$	4.63	&	7.59	$\pm$	9.94	&	\bf{	6.98	}	$\pm$	\bf{	4.26	}	&	7.2	$\pm$	4.66	&	203.5	$\pm$	104.37	&	\bf{	8.92	}	$\pm$	\bf{	4.8	}	&	9.76	$\pm$	5.42	&	135.75	$\pm$	69.24		\\
 	&	MAPE	&	\bf{	5.35	}	$\pm$	\bf{	4.07	}	&	5.4	$\pm$	4.45	&	8.14	$\pm$	9.65	&	\bf{	7.69	}	$\pm$	\bf{	4.59	}	&	7.87	$\pm$	4.97	&	235.99	$\pm$	121.51	&	\bf{	9.84	}	$\pm$	\bf{	5.24	}	&	10.71	$\pm$	5.91	&	157.43	$\pm$	80.64		\\
 																																																				\\
4	&	MSE	&	\bf{	0.11	}	$\pm$	\bf{	0.15	}	&	0.14	$\pm$	0.22	&	0.55	$\pm$	3.37	&	\bf{	0.34	}	$\pm$	\bf{	0.56	}	&	0.46	$\pm$	1.27	&	55.66	$\pm$	56.21	&	\bf{	0.81	}	$\pm$	\bf{	1.72	}	&	1.22	$\pm$	4.49	&	37.17	$\pm$	37.56		\\
	&	MAE	&	\bf{	0.27	}	$\pm$	\bf{	0.2	}	&	0.29	$\pm$	0.24	&	0.37	$\pm$	0.65	&	\bf{	0.42	}	$\pm$	\bf{	0.3	}	&	0.45	$\pm$	0.38	&	11.51	$\pm$	8.41	&	\bf{	0.58	}	$\pm$	\bf{	0.41	}	&	0.62	$\pm$	0.56	&	7.69	$\pm$	5.63		\\
 	&	MAPE	&	\bf{	0.42	}	$\pm$	\bf{	0.3	}	&	0.45	$\pm$	0.34	&	0.58	$\pm$	1.06	&	\bf{	0.64	}	$\pm$	\bf{	0.42	}	&	0.68	$\pm$	0.49	&	19.04	$\pm$	13.36	&	\bf{	0.9	}	$\pm$	\bf{	0.55	}	&	0.92	$\pm$	0.67	&	12.7	$\pm$	8.92		\\
 																																																				\\
5	&	MSE	&	\bf{	1.2E+04	}	$\pm$	\bf{	1.3E+04	}	&	1.3E+04	$\pm$	2.1E+04	&	1.4E+05	$\pm$	8.5E+05	&	\bf{	2.8E+04	}	$\pm$	\bf{	2.8E+04	}	&	4.8E+04	$\pm$	1.1E+05	&	5.6E+06	$\pm$	2.6E+06	&	\bf{	5.1E+04	}	$\pm$	\bf{	4.3E+04	}	&	1.5E+05	$\pm$	4.9E+05	&	3.8E+06	$\pm$	1.8E+06		\\
 	&	MAE	&	\bf{	88.33	}	$\pm$	\bf{	62.27	}	&	89.07	$\pm$	71.38	&	165.31	$\pm$	328.96	&	\bf{	131.49	}	$\pm$	\bf{	73.56	}	&	145.9	$\pm$	109.84	&	4872.77	$\pm$	2714.81	&	\bf{	175.15	}	$\pm$	\bf{	79.96	}	&	214.6	$\pm$	167.51	&	3249.42	$\pm$	1808.43		\\
 	&	MAPE	&	\bf{	4.84	}	$\pm$	\bf{	3.46	}	&	4.93	$\pm$	4.27	&	9.27	$\pm$	18.46	&	\bf{	7.22	}	$\pm$	\bf{	3.98	}	&	8.11	$\pm$	6.52	&	269.4	$\pm$	156.28	&	\bf{	9.64	}	$\pm$	\bf{	4.32	}	&	12.05	$\pm$	10.29	&	179.64	$\pm$	104.12		\\
 	&																																																			\\
6	&	MSE	&	\bf{	0.14	}	$\pm$	\bf{	0.32	}	&	0.17	$\pm$	0.54	&	0.47	$\pm$	2.03	&	\bf{	0.23	}	$\pm$	\bf{	0.36	}	&	0.31	$\pm$	0.62	&	50.85	$\pm$	28.57	&	\bf{	0.33	}	$\pm$	\bf{	0.42	}	&	0.51	$\pm$	0.74	&	33.9	$\pm$	19.05		\\
 	&	MAE	&	\bf{	0.29	}	$\pm$	\bf{	0.24	}	&	0.31	$\pm$	0.27	&	0.35	$\pm$	0.59	&	\bf{	0.35	}	$\pm$	\bf{	0.24	}	&	0.4	$\pm$	0.26	&	13.99	$\pm$	7.16	&	\bf{	0.42	}	$\pm$	\bf{	0.24	}	&	0.51	$\pm$	0.26	&	9.33	$\pm$	4.77		\\
 	&	MAPE	&	\bf{	4.44	}	$\pm$	\bf{	3.73	}	&	4.75	$\pm$	4.35	&	5.46	$\pm$	9.65	&	\bf{	5.43	}	$\pm$	\bf{	3.7	}	&	6.18	$\pm$	4.06	&	219.24	$\pm$	112.66	&	\bf{	6.51	}	$\pm$	\bf{	3.73	}	&	7.74	$\pm$	3.98	&	146.2	$\pm$	75.05		\\
 																																																				\\
7	&	MSE	&	\bf{	1.03	}	$\pm$	\bf{	1.42	}	&	1.09	$\pm$	1.59	&	58.88	$\pm$	548.86	&	\bf{	2.27	}	$\pm$	\bf{	3.4	}	&	2.77	$\pm$	3.96	&	2432.55	$\pm$	9811.99	&	\bf{	3.78	}	$\pm$	\bf{	5.08	}	&	5.2	$\pm$	6.46	&	1621.71	$\pm$	6541.35		\\
 	&	MAE	&	\bf{	0.8	}	$\pm$	\bf{	0.63	}	&	0.82	$\pm$	0.64	&	1.52	$\pm$	7.52	&	\bf{	1.11	}	$\pm$	\bf{	0.79	}	&	1.2	$\pm$	0.81	&	27.72	$\pm$	42.08	&	\bf{	1.39	}	$\pm$	\bf{	0.89	}	&	1.59	$\pm$	0.98	&	18.49	$\pm$	28.06		\\
 	&	MAPE	&	\bf{	0.98	}	$\pm$	\bf{	0.78	}	&	1.01	$\pm$	0.79	&	1.89	$\pm$	9.4	&	\bf{	1.38	}	$\pm$	\bf{	0.99	}	&	1.49	$\pm$	1.01	&	34.12	$\pm$	51.78	&	\bf{	1.72	}	$\pm$	\bf{	1.11	}	&	1.96	$\pm$	1.23	&	22.76	$\pm$	34.52		\\
 																																																				\\
8	&	MSE	&	\bf{	0	}	$\pm$	\bf{	0.01	}	&	0	$\pm$	0.01	&	0.01	$\pm$	0.06	&	\bf{	0.01	}	$\pm$	\bf{	0.01	}	&	0.01	$\pm$	0.01	&	2.7	$\pm$	1.73	&	\bf{	0.02	}	$\pm$	\bf{	0.02	}	&	0.02	$\pm$	0.02	&	1.8	$\pm$	1.15		\\
 	&	MAE	&	\bf{	0.05	}	$\pm$	\bf{	0.04	}	&	0.05	$\pm$	0.04	&	0.06	$\pm$	0.1	&	\bf{	0.07	}	$\pm$	\bf{	0.04	}	&	0.08	$\pm$	0.05	&	3.21	$\pm$	1.81	&	\bf{	0.09	}	$\pm$	\bf{	0.05	}	&	0.1	$\pm$	0.06	&	2.14	$\pm$	1.21		\\
 	&	MAPE	&	\bf{	1.02	}	$\pm$	\bf{	0.8	}	&	1.06	$\pm$	0.81	&	1.15	$\pm$	1.94	&	\bf{	1.43	}	$\pm$	\bf{	0.86	}	&	1.51	$\pm$	0.89	&	63.76	$\pm$	36.27	&	\bf{	1.85	}	$\pm$	\bf{	1.01	}	&	2.02	$\pm$	1.09	&	42.51	$\pm$	24.17		\\
																																																				\\
9	&	MSE	&	\bf{	1.11	}	$\pm$	\bf{	1.68	}	&	1.23	$\pm$	1.97	&	62.9	$\pm$	665.48	&	\bf{	2.41	}	$\pm$	\bf{	3.74	}	&	3.04	$\pm$	4.65	&	1.8E+04	$\pm$	2.7E+04	&	\bf{	4.11	}	$\pm$	\bf{	5.58	}	&	5.69	$\pm$	7.59	&	1.2E+04	$\pm$	1.8E+04		\\
 	&	MAE	&	\bf{	0.82	}	$\pm$	\bf{	0.67	}	&	0.84	$\pm$	0.72	&	1.58	$\pm$	7.77	&	\bf{	1.15	}	$\pm$	\bf{	0.81	}	&	1.24	$\pm$	0.89	&	83.5	$\pm$	99.4	&	\bf{	1.47	}	$\pm$	\bf{	0.89	}	&	1.66	$\pm$	1.02	&	55.68	$\pm$	66.26		\\
 	&	MAPE	&	\bf{	0.14	}	$\pm$	\bf{	0.11	}	&	0.15	$\pm$	0.12	&	0.27	$\pm$	1.34	&	\bf{	0.2	}	$\pm$	\bf{	0.14	}	&	0.21	$\pm$	0.15	&	14.41	$\pm$	17.15	&	\bf{	0.25	}	$\pm$	\bf{	0.15	}	&	0.29	$\pm$	0.18	&	9.61	$\pm$	11.43		\\
 																																																				\\
10	&	MSE	&	\bf{	1.2E+04	}	$\pm$	\bf{	16629.4	}	&	1.9E+04	$\pm$	2.6E+04	&	2.0E+05	$\pm$	1.7E+06	&	\bf{	1.7E+04	}	$\pm$	\bf{	2.5E+04	}	&	3.1E+04	$\pm$	3.6E+04	&	4.1E+07	$\pm$	4.3E+07	&	\bf{	2.5E+04	}	$\pm$	\bf{	3.3E+04	}	&	5.0E+04	$\pm$	5.8E+04	&	2.7E+07	$\pm$	2.9E+07		\\
	&	MAE	&	\bf{	83.02	}	$\pm$	\bf{	69.31	}	&	109.82	$\pm$	85.79	&	129.23	$\pm$	426.39	&	\bf{	90.46	}	$\pm$	\bf{	67.52	}	&	128.84	$\pm$	76.76	&	7447.33	$\pm$	5628.52	&	\bf{	107.97	}	$\pm$	\bf{	64.68	}	&	156.72	$\pm$	86.01	&	4973.25	$\pm$	3772.87		\\
	&	MAPE	&	\bf{	1.21	}	$\pm$	\bf{	0.99	}	&	1.61	$\pm$	1.26	&	1.84	$\pm$	5.83	&	\bf{	1.32	}	$\pm$	\bf{	0.96	}	&	1.89	$\pm$	1.11	&	112	$\pm$	80.41	&	\bf{	1.57	}	$\pm$	\bf{	0.9	}	&	2.29	$\pm$	1.24	&	74.77	$\pm$	53.86		\\
																																																				\\
11	&	MSE	&	\bf{	740.12	}	$\pm$	\bf{	1202.66	}	&	919.22	$\pm$	2381.99	&	4181.88	$\pm$	3.5E+04	&	\bf{	1355.23	}	$\pm$	\bf{	1.7E+03	}	&	2.4E+03	$\pm$	1.1E+04	&	1.9E+06	$\pm$	9.5E+05	&	\bf{	2.3E+03	}	$\pm$	\bf{	2.9E+03	}	&	7.6E+03	$\pm$	5.3E+04	&	1.3E+06	$\pm$	6.3E+05		\\
	&	MAE	&	\bf{	20.92	}	$\pm$	\bf{	17.39	}	&	22.14	$\pm$	20.71	&	27.42	$\pm$	58.56	&	\bf{	27.3	}	$\pm$	\bf{	17.65	}	&	31.06	$\pm$	25.73	&	2384.16	$\pm$	1034.44	&	\bf{	34.64	}	$\pm$	\bf{	19.66	}	&	42.09	$\pm$	41.84	&	1589.55	$\pm$	689.6		\\
	&	MAPE	&	\bf{	2.32	}	$\pm$	\bf{	1.95	}	&	2.46	$\pm$	2.35	&	3.03	$\pm$	6.18	&	\bf{	3.03	}	$\pm$	\bf{	2.02	}	&	3.46	$\pm$	2.98	&	258.35	$\pm$	111.11	&	\bf{	3.85	}	$\pm$	\bf{	2.24	}	&	4.69	$\pm$	4.76	&	172.25	$\pm$	74.07		\\
																																																				\\
12	&	MSE	&	\bf{	0.74	}	$\pm$	\bf{	1.74	}	&	0.91	$\pm$	1.97	&	272.65	$\pm$	3348.71	&	\bf{	1.44	}	$\pm$	\bf{	2.65	}	&	1.87	$\pm$	3.21	&	5.5E+05	$\pm$	4.7E+05	&	\bf{	2.45	}	$\pm$	\bf{	3.93	}	&	3.41	$\pm$	5.19	&	3.7E+05	$\pm$	3.1E+05		\\
	&	MAE	&	\bf{	0.61	}	$\pm$	\bf{	0.6	}	&	0.67	$\pm$	0.68	&	2.52	$\pm$	16.32	&	\bf{	0.84	}	$\pm$	\bf{	0.64	}	&	0.93	$\pm$	0.7	&	551.77	$\pm$	426.63	&	\bf{	1.07	}	$\pm$	\bf{	0.71	}	&	1.22	$\pm$	0.81	&	367.85	$\pm$	284.42		\\
	&	MAPE	&	\bf{	1.51	}	$\pm$	\bf{	1.5	}	&	1.65	$\pm$	1.7	&	6.19	$\pm$	39.83	&	\bf{	2.05	}	$\pm$	\bf{	1.59	}	&	2.28	$\pm$	1.74	&	1339.99	$\pm$	1020.67	&	\bf{	2.62	}	$\pm$	\bf{	1.74	}	&	2.99	$\pm$	2.03	&	893.35	$\pm$	680.43		\\
																																																				\\
13	&	MSE	&	\bf{	9.3E+04	}	$\pm$	\bf{	2.0E+05	}	&	1.2E+05	$\pm$	2.4E+05	&	3.2E+09	$\pm$	4.3E+10	&	\bf{	1.8E+05	}	$\pm$	\bf{	3.1E+05	}	&	2.8E+05	$\pm$	5.6E+05	&	9.1E+12	$\pm$	1.1E+13	&	\bf{	3.1E+05	}	$\pm$	\bf{	6.2E+05	}	&	6.1E+05	$\pm$	1.5E+06	&	6.1E+12	$\pm$	7.6E+12		\\
	&	MAE	&	\bf{	223.26	}	$\pm$	\bf{	207.61	}	&	248.43	$\pm$	233.19	&	5134.13	$\pm$	5.7E+04	&	\bf{	294.55	}	$\pm$	\bf{	223.51	}	&	346.81	$\pm$	293.62	&	1.6E+06	$\pm$	1.8E+06	&	\bf{	370.55	}	$\pm$	\bf{	246.21	}	&	463.47	$\pm$	380.01	&	1.1E+06	$\pm$	1.2E+06		\\
	&	MAPE	&	\bf{	13.55	}	$\pm$	\bf{	12.75	}	&	15.11	$\pm$	14.42	&	247.56	$\pm$	2529.3	&	\bf{	18.01	}	$\pm$	\bf{	14.54	}	&	21.42	$\pm$	19.03	&	9.1E+04	$\pm$	1.0E+05	&	\bf{	22.76	}	$\pm$	\bf{	16.81	}	&	29.04	$\pm$	25.53	&	6.1E+04	$\pm$	6.8E+04		\\
																																																				\\
14	&	MSE	&	\bf{	2.0E+04	}	$\pm$	\bf{	3.4E+04	}	&	3.4E+04	$\pm$	6.0E+04	&	1.4E+07	$\pm$	2.0E+08	&	\bf{	2.5E+04	}	$\pm$	\bf{	3.3E+04	}	&	4.9E+04	$\pm$	6.2E+04	&	4.0E+10	$\pm$	7.4E+09	&	\bf{	4.1E+04	}	$\pm$	\bf{	5.4E+04	}	&	8.2E+04	$\pm$	1.0E+05	&	2.6E+10	$\pm$	4.9E+09		\\
	&	MAE	&	\bf{	107.34	}	$\pm$	\bf{	94.66	}	&	140.74	$\pm$	121.05	&	374.38	$\pm$	3724.95	&	\bf{	119.13	}	$\pm$	\bf{	78.32	}	&	169.88	$\pm$	101.02	&	1.5E+05	$\pm$	3.1E+04	&	\bf{	144.66	}	$\pm$	\bf{	79.55	}	&	206.97	$\pm$	115.2	&	1.0E+05	$\pm$	2.0E+04		\\
	&	MAPE	&	\bf{	1.38	}	$\pm$	\bf{	1.17	}	&	1.82	$\pm$	1.53	&	5.54	$\pm$	57.96	&	\bf{	1.54	}	$\pm$	\bf{	0.97	}	&	2.19	$\pm$	1.25	&	2344.64	$\pm$	457.75	&	\bf{	1.87	}	$\pm$	\bf{	0.97	}	&	2.66	$\pm$	1.41	&	1563.1	$\pm$	305.16		\\
																																																				\\
15	&	MSE	&	\bf{	2.48	}	$\pm$	\bf{	5.07	}	&	2.69	$\pm$	5.85	&	7.19	$\pm$	67.52	&	\bf{	4.69	}	$\pm$	\bf{	8.35	}	&	7.35	$\pm$	37.95	&	1254.51	$\pm$	1384.48	&	\bf{	7.98	}	$\pm$	\bf{	14.27	}	&	28.83	$\pm$	269.03	&	836.33	$\pm$	922.97		\\
	&	MAE	&	\bf{	1.06	}	$\pm$	\bf{	1.17	}	&	1.12	$\pm$	1.2	&	1.23	$\pm$	2.38	&	\bf{	1.42	}	$\pm$	\bf{	1.32	}	&	1.54	$\pm$	1.79	&	64	$\pm$	61.88	&	\bf{	1.82	}	$\pm$	\bf{	1.48	}	&	2.14	$\pm$	3.08	&	42.66	$\pm$	41.25		\\
	&	MAPE	&	\bf{	12.33	}	$\pm$	\bf{	10.68	}	&	13.55	$\pm$	11.01	&	13.7	$\pm$	25.65	&	\bf{	15.91	}	$\pm$	\bf{	10.68	}	&	17.74	$\pm$	13.89	&	877.19	$\pm$	628	&	\bf{	20.34	}	$\pm$	\bf{	12.36	}	&	23.83	$\pm$	21.34	&	584.8	$\pm$	418.58		\\
																																																				\\
16	&	MSE	&	\bf{	48.02	}	$\pm$	\bf{	96.55	}	&	58.38	$\pm$	114.45	&	2725.29	$\pm$	2.8E+04	&	\bf{	91.86	}	$\pm$	\bf{	161.29	}	&	123.2	$\pm$	185.72	&	3.2E+06	$\pm$	1.2E+06	&	\bf{	149.04	}	$\pm$	\bf{	211.49	}	&	216.51	$\pm$	283.4	&	2.14E+06	$\pm$	7.8E+05		\\
	&	MAE	&	\bf{	5.03	}	$\pm$	\bf{	4.77	}	&	5.56	$\pm$	5.24	&	11.24	$\pm$	50.98	&	\bf{	6.71	}	$\pm$	\bf{	5.35	}	&	7.81	$\pm$	5.52	&	2380.5	$\pm$	967.11	&	\bf{	8.46	}	$\pm$	\bf{	5.54	}	&	10.14	$\pm$	6.3	&	1587.01	$\pm$	644.71		\\
	&	MAPE	&	\bf{	1.14	}	$\pm$	\bf{	1.05	}	&	1.26	$\pm$	1.15	&	2.6	$\pm$	12.29	&	\bf{	1.52	}	$\pm$	\bf{	1.18	}	&	1.77	$\pm$	1.21	&	557.55	$\pm$	225.98	&	\bf{	1.92	}	$\pm$	\bf{	1.21	}	&	2.3	$\pm$	1.38	&	371.7	$\pm$	150.64		\\
17																																																				\\
	&	MSE	&	\bf{	1.6E+04	}	$\pm$	\bf{	2.8E+04	}	&	3.8E+04	$\pm$	3.0E+05	&	3.9E+05	$\pm$	4.0E+06	&	\bf{	3.1E+04	}	$\pm$	\bf{	5.5E+04	}	&	6.8E+05	$\pm$	9.9E+06	&	3.0E+08	$\pm$	2.3E+08	&	\bf{	5.5E+04	}	$\pm$	\bf{	1.0E+05	}	&	2.6E+07	$\pm$	4.2E+08	&	2.0E+08	$\pm$	1.5E+08		\\
	&	MAE	&	\bf{	89.93	}	$\pm$	\bf{	88.04	}	&	106.95	$\pm$	162.59	&	149.52	$\pm$	609.73	&	\bf{	120.03	}	$\pm$	\bf{	100.26	}	&	184.55	$\pm$	564.76	&	2.1E+04	$\pm$	1.6E+04	&	\bf{	146.86	}	$\pm$	\bf{	116.97	}	&	415.29	$\pm$	2952.63	&	14106.72	$\pm$	10452.27		\\
	&	MAPE	&	\bf{	16.4	}	$\pm$	\bf{	14.82	}	&	19.92	$\pm$	34.16	&	30.08	$\pm$	137.13	&	\bf{	21.88	}	$\pm$	\bf{	16.3	}	&	35.09	$\pm$	117.58	&	4654.61	$\pm$	3357.34	&	\bf{	26.72	}	$\pm$	\bf{	19.17	}	&	71.85	$\pm$	465.91	&	3103.12	$\pm$	2238.23		\\
																																																				\\
18	&	MSE	&	\bf{	0	}	$\pm$	\bf{	0	}	&	0	$\pm$	0.01	&	887.66	$\pm$	1.6E+04	&	\bf{	0	}	$\pm$	\bf{	0.01	}	&	0.01	$\pm$	0.06	&	1.5E+07	$\pm$	9.0E+06	&	\bf{	0	}	$\pm$	\bf{	0.01	}	&	0.03	$\pm$	0.38	&	9.7E+06	$\pm$	6.0E+06		\\
	&	MAE	&	\bf{	0.02	}	$\pm$	\bf{	0.03	}	&	0.02	$\pm$	0.03	&	1.75	$\pm$	29.74	&	\bf{	0.03	}	$\pm$	\bf{	0.03	}	&	0.03	$\pm$	0.05	&	2322.12	$\pm$	1418.39	&	\bf{	0.04	}	$\pm$	\bf{	0.04	}	&	0.05	$\pm$	0.09	&	1548.08	$\pm$	945.59		\\
	&	MAPE	&	\bf{	2.52	}	$\pm$	\bf{	2.98	}	&	2.57	$\pm$	3.18	&	161.41	$\pm$	2737.65	&	\bf{	3.43	}	$\pm$	\bf{	3.53	}	&	3.63	$\pm$	4.33	&	2.1E+05	$\pm$	1.3E+05	&	\bf{	4.39	}	$\pm$	\bf{	4.15	}	&	4.93	$\pm$	7.43	&	142385.27	$\pm$	87011.59		\\
																																																				\\
19	&	MSE	&	\bf{	89.22	}	$\pm$	\bf{	173.51	}	&	112.17	$\pm$	303.96	&	1.1E+04	$\pm$	1.8E+05	&	\bf{	186.99	}	$\pm$	\bf{	336.55	}	&	290.94	$\pm$	1186.61	&	6.4E+06	$\pm$	1.1E+07	&	\bf{	360.26	}	$\pm$	\bf{	794.27	}	&	704.89	$\pm$	4695.08	&	4.2E+06	$\pm$	7.5E+06		\\
	&	MAE	&	\bf{	6.95	}	$\pm$	\bf{	6.4	}	&	7.38	$\pm$	7.59	&	17.78	$\pm$	105.15	&	\bf{	9.52	}	$\pm$	\bf{	7.57	}	&	10.49	$\pm$	9.56	&	3362.82	$\pm$	2745.55	&	\bf{	12.41	}	$\pm$	\bf{	9.05	}	&	14.21	$\pm$	13.19	&	2241.89	$\pm$	1830.37		\\
	&	MAPE	&	\bf{	1.59	}	$\pm$	\bf{	1.66	}	&	1.71	$\pm$	2.02	&	4.02	$\pm$	26.03	&	\bf{	2.15	}	$\pm$	\bf{	1.94	}	&	2.41	$\pm$	2.56	&	695.02	$\pm$	630.61	&	\bf{	2.79	}	$\pm$	\bf{	2.31	}	&	3.25	$\pm$	3.56	&	463.35	$\pm$	420.41		\\
																																																				\\
20	&	MSE	&	\bf{	0.03	}	$\pm$	\bf{	0.07	}	&	0.04	$\pm$	0.08	&	59.93	$\pm$	1229.99	&	\bf{	0.08	}	$\pm$	\bf{	0.16	}	&	0.1	$\pm$	0.21	&	2.4E+05	$\pm$	2.0E+05	&	\bf{	0.17	}	$\pm$	\bf{	0.41	}	&	0.22	$\pm$	0.5	&	1.6E+05	$\pm$	1.3E+05		\\
 	&	MAE	&	\bf{	0.12	}	$\pm$	\bf{	0.13	}	&	0.13	$\pm$	0.14	&	0.76	$\pm$	7.7	&	\bf{	0.18	}	$\pm$	\bf{	0.17	}	&	0.2	$\pm$	0.19	&	492.23	$\pm$	309.77	&	\bf{	0.25	}	$\pm$	\bf{	0.22	}	&	0.28	$\pm$	0.26	&	328.23	$\pm$	206.42		\\
 	&	MAPE	&	\bf{	0.04	}	$\pm$	\bf{	0.04	}	&	0.04	$\pm$	0.04	&	0.23	$\pm$	2.38	&	\bf{	0.06	}	$\pm$	\bf{	0.05	}	&	0.06	$\pm$	0.06	&	152.14	$\pm$	95.77	&	\bf{	0.08	}	$\pm$	\bf{	0.07	}	&	0.09	$\pm$	0.08	&	101.45	$\pm$	63.82		\\
																																																				\\
21	&	MSE	&	\bf{	0.02	}	$\pm$	\bf{	0.04	}	&	0.03	$\pm$	0.04	&	5.86	$\pm$	171.26	&	\bf{	0.04	}	$\pm$	\bf{	0.05	}	&	0.05	$\pm$	0.07	&	1.0E+04	$\pm$	3.7E+04	&	\bf{	0.06	}	$\pm$	\bf{	0.07	}	&	0.1	$\pm$	0.24	&	6792.98	$\pm$	24608.24		\\
 	&	MAE	&	\bf{	0.12	}	$\pm$	\bf{	0.1	}	&	0.14	$\pm$	0.1	&	0.21	$\pm$	2.41	&	\bf{	0.15	}	$\pm$	\bf{	0.09	}	&	0.18	$\pm$	0.09	&	68.18	$\pm$	94.52	&	\bf{	0.18	}	$\pm$	\bf{	0.09	}	&	0.23	$\pm$	0.11	&	45.45	$\pm$	63.01		\\
 	&	MAPE	&	\bf{	2.6	}	$\pm$	\bf{	2	}	&	2.94	$\pm$	2.21	&	4.64	$\pm$	54.27	&	\bf{	3.18	}	$\pm$	\bf{	1.9	}	&	3.81	$\pm$	1.98	&	1457.61	$\pm$	2117.85	&	\bf{	3.84	}	$\pm$	\bf{	1.94	}	&	4.79	$\pm$	2.32	&	971.74	$\pm$	1411.89		\\																											

	\bottomrule
\end{tabular}}}
\end{table}

\begin{table}[!htb]
\centering
\tiny
\caption{Summary of prediction accuracy of CFGM, FGM and AR model \label{t:rsum}}
\begin{tabular}{rrrrrrrrrrr}
\toprule
	&		&	CFGM	&	FGM	&	AR	&	CFGM	&	FGM	&	AR	&	CFGM	&	FGM	&	AR	\\
Ceriteria	&		&	1-Step	&	1-Step	&	1-Step	&	2-Step	&	2-Step	&	2-Step	&	3-Step	&	3-Step	&	3-Step	\\
\hline\\
MAPE	&	Maximum	&	\textbf{16.4000}	&	19.9200 	&	247.5600 	&	\textbf{21.8800} 	&	35.0900 	&	2.1000E+05	&	\textbf{26.7200} 	&	71.8500 	&	1.4239E+05	\\
	&	Average	&	\textbf{4.0524} 	&	4.5319 	&	24.8971 	&	\textbf{5.4390} 	&	6.8210 	&	1.4988E+04	&	\textbf{6.8190}	&	10.5329 	&	1.0122E+04	\\
	&		&		&		&		&		&		&		&		&		&		\\
MAE	&	Maximum	&	\textbf{223.2600}	&	248.4300 	&	5134.1300 	&	\textbf{294.5500} 	&	346.8100 	&	1.6000E+06	&	\textbf{370.5500} 	&	463.4700 	&	1.1000E+06	\\
	&	Average	&	\textbf{30.9895}	&	36.1167 	&	288.1033 	&	\textbf{39.8171} 	&	50.9781 	&	8.5522E+04	&	\textbf{49.8490 }	&	75.6838 	&	5.8608E+04	\\
	&		&		&		&		&		&		&		&		&		&		\\
MSE	&	Maximum	&	\textbf{9.3000E+04}	&	1.2000E+05	&	3.2000E+09	&	\textbf{1.8000E+05}	&	6.8000E+05	&	9.1000E+12	&	\textbf{3.1000E+05}	&	2.6000E+07	&	6.1000E+12	\\
	&	Average	&	\textbf{7.3447E+03}	&	1.0745E+04	&	1.5308E+08	&	\textbf{1.3494E+04}	&	5.2048E+04	&	4.3526E+11	&	\textbf{2.3143E+04}	&	1.2814E+06	&	2.9173E+11	\\
	\bottomrule
\end{tabular}
\end{table}

\newpage

\section{Applications and analysis}

    In this section, we carry out the case study of predicting the annual natural gas production of 11 countries to show the performance in real world applications comparing to the existing fractional grey model FGM \cite{wulifeng2013}.

\subsection{Background and data collection}

\begin{table}[!htb]
\centering
\scriptsize
\caption{Annual natural gas (NG) production ($10^9 m^3$) of the 11 countries from 2008 to 2016 \label{t:rawdata}}
\begin{tabular}{rrrrrrrrrrr}
\toprule
No.&YEAR&2008&2009&2010&2011&2012&2013&2014&2015&2016
\\
\hline
\\
1&UAE&50.2&48.8&51.3&52.3&54.3&54.6&54.2&60.2&61.9
\\
2&Brazil&14&11.9&14.6&16.7&19.3&21.3&22.7&23.1&23.5
\\
3&Bolivia&14.3&12.3&14.2&15.6&17.8&20.3&21&20.3&19.7
\\
4&Denmark&10&8.4&8.2&6.6&5.7&4.8&4.6&4.6&4.5
\\
5&Netherlands&66.5&62.7&70.5&64.1&63.8&68.6&57.9&43.3&40.2
\\
6&Qatar&77&89.3&131.2&145.3&157&177.6&174.1&178.5&181.2
\\
7&Nigeria&36.2&26&37.3&40.6&43.3&36.2&45&50.1&44.9
\\
8&Turkmenistan&66.1&36.4&42.4&59.5&62.3&62.3&67.1&69.6&66.8
\\
9&Brunei&12.2&11.4&12.3&12.8&12.6&12.2&11.9&11.6&11.2
\\
10&Italy&8.4&7.3&7.6&7.7&7.8&7&6.5&6.2&5.3
\\
11&India&30.5&37.6&49.3&44.5&38.9&32.1&30.5&29.3&27.6
\\
\bottomrule
\end{tabular}
\end{table}

    Clean energy is one of most important resources, which will be vital to the global economics and natural environment in the future. Natural gas (NG) is one kind of clean energy with low price and high efficiency. And now many countries are using NG as one of the most important fuels.

    However, most recent researches only focus on the future trend of the largest gas producers such as the USA, China, Russia, $etc$ \cite{timedelayed}. In this paper, we selected 11 countries as mid-sized gas producers. The annual production data are collected from 2008 to 2016, which are available from the BP Statistical Review of World Energy \footnote{Available at the website of British Petroleum Company https://www.bp.com/en/global/corporate/energy-economics/statistical-review-of-world-energy.html} and listed in Table \ref{t:rawdata}. The countries listed in Table \ref{t:rawdata} are actually mid or small size economic entity, changes of energy market, industrial development, domestic policies $etc.$ may all significantly effect the natural gas consumption of these countries. Thus we can see that the time series of their natural gas consumption are all not stable. Moreover, with high speed changes of global economics and energy markets, trustable data of these countries are very limited. As a consequence,
    it is difficult to accurately predict the natural gas consumption of such countries using the traditional energy prediction models with very small samples.

\begin{figure}[!htb]
\centering
  \includegraphics[width=0.75\textwidth]{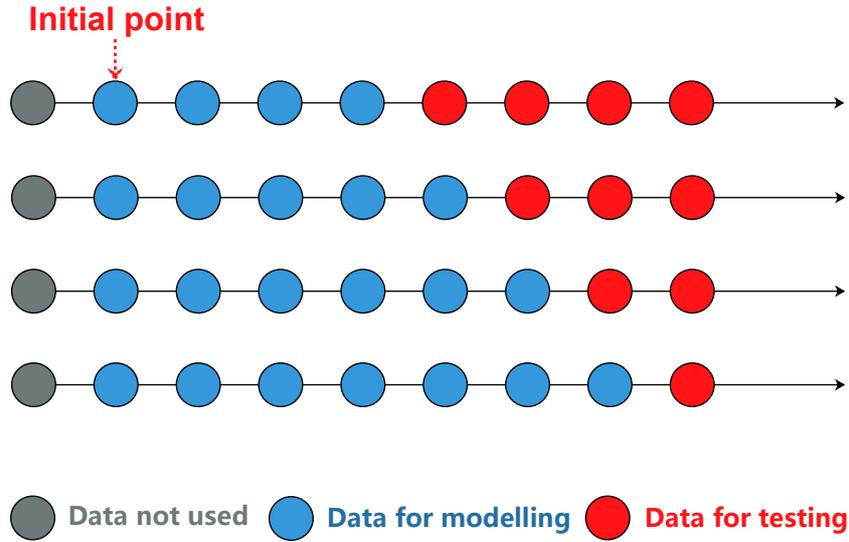}
\caption{The time series cross validation
 \label{fig:timeseries}}
\end{figure}

\subsection{Overall performance in comparison to the existing fractional grey model}

    In order to provide a comprehensive comparison of CFGM to the existing FGM model, we use the time series cross validation (TSCV) in this section, which has been quite efficient to evaluate the overall performance and robustness of the times series models in \cite{timeseries}.

    The main idea of TSCV is illustrated in Fig. \ref{fig:timeseries}. For each case listed in Table \ref{t:rawdata} the prediction models would be built on different subcases shown in \ref{fig:timeseries} with different initial points and different numbers of sample data. The Brute Force method is also used to select the optimal $\alpha$ for the CFGM and FGM in this subsection. The $\alpha$ for CFGM is searched in the $[0,2]$ by step 0.01, and that for FGM is searched in a wider range $[-2,2]$ by step 0.01.

    According to the TSCV shown in Fig. \ref{fig:timeseries}, the CFGM and FGM model would be built on 15 subcases in each case \footnote{As there are 9 points in each case, the subcases should be with 4 to 8 points for modelling when the initial point is the first point in the original series, and then there are 5 subcases. Similarly, we could know that there are 4,3,2,1 subcases when the 2$^{nd}$, 3$^{rd}$, 4$^{th}$,5$^{th}$ point are used as initial point, respectively. So we have 5+4+3+2+1=15 subcases  in each case.}, and this means we need to build these models for $15\times11=165$ times in total.

    The overall evaluation criteria for fitting accuracy are listed in Table \ref{t:retuls_fit}. The MSE, MAE and MAPE of CFGM are smaller than FGM model in 9, 7 and 7 cases, respectively, which indicates that the overall fitting accuracy of CFGM model is better than FGM. Meanwhile, it can be seen that most STDs of CFGM model are also smaller than FGM model, which indicates that the stability of CFGM model is also better than FGM model. On the other hand, it should also be noticed that the fitting accuracy of CFGM is quite close to the FGM model although when FGM has higher accuracy.

\begin{table}[!htb]
\centering
\scriptsize
\caption{Fitting performance of CFGM and FGM in time series cross validation \label{t:retuls_fit}}
\begin{tabular}{rrrrr}
\toprule
	&		&	MSE			&	MAE		&	MAPE(\%)		\\
\hline
\\
Case1	&	CFGM	&	0.6762 	$\pm$	1.5977 	&	0.4503 	$\pm$	1.5696 	&	0.8151 	$\pm$	3.0833 	\\	
	    &	FGM	&	\bf{0.4706 	$\pm$	1.2755} 	&	\bf{0.3531 	$\pm$	1.6242} 	&	\bf{0.6404 	$\pm$	3.1979} 	\\	
\\
Case2	&	CFGM	&	\bf{0.0313 	$\pm$	0.0724} 	&	\bf{0.1028 	$\pm$	2.3606}	&	\bf{0.5340 	$\pm$	17.5909}	\\	
	    &	FGM	&	0.1037 	$\pm$	0.2520 	&	0.1881 	$\pm$	2.2940 	&	1.0942 	$\pm$	17.2361 	\\	
\\
Case3	&	CFGM	&	\bf{0.1323 	$\pm$	0.3300}	&	\bf{0.2017 	$\pm$	1.9584} 	&	\bf{1.0867 	$\pm$	14.2824} 	\\	
	    &	FGM	&	0.1817 	$\pm$	0.3947 	&	0.2563 	$\pm$	1.9661 	&	1.4204 	$\pm$	14.5303 	\\	
\\
Case4	&	CFGM	&	0.0485 	$\pm$	0.0811 	&	0.1552 	$\pm$	1.0819 	&	2.6744 	$\pm$	12.4777 	\\	
	    &	FGM	&	\bf{0.0355 	$\pm$	0.0855}	&	\bf{0.1012 	$\pm$	1.1046}	&	\bf{1.6036 	$\pm$	12.7874} 	\\	
\\
Case5	&	CFGM	&	\bf{8.4423 	$\pm$	14.7093} 	&	2.0447 	$\pm$	4.0094 	&	3.3530 	$\pm$	6.1291 	\\	
	    &	FGM	&	9.4880 	$\pm$	18.2279 	&	\bf{2.0254 	$\pm$	4.4964} 	&	\bf{3.2386 	$\pm$	6.9701} 	\\	
\\
Case6	&	CFGM	&	\bf{12.8673 	$\pm$	20.5661} 	&	\bf{2.4822 	$\pm$	22.2533}	&	\bf{1.6616 	$\pm$	26.8327}	\\	
	    &	FGM	&	23.5152 	$\pm$	53.7708 	&	2.9278 	$\pm$	22.1466 	&	2.2180 	$\pm$	26.8304 	\\	
\\
Case7	&	CFGM	&	\bf{6.6457 	$\pm$	13.2297 }	&	\bf{1.6339 	$\pm$	4.3122} 	&	\bf{4.2040 	$\pm$	15.7513} 	\\	
	    &	FGM	&	7.4043 	$\pm$	12.3512 	&	1.8317 	$\pm$	4.4426 	&	4.9719 	$\pm$	16.5651 	\\	
\\
Case8	&	CFGM	&	\bf{4.0174 	$\pm$	6.7740} 	&	\bf{1.3110 	$\pm$	9.3218} 	&	\bf{2.3761 	$\pm$	20.7557} 	\\	
	    &	FGM	&	8.7518 	$\pm$	15.9248 	&	1.8508 	$\pm$	9.1328 	&	3.5410 	$\pm$	20.5943 	\\	
\\
Case9	&	CFGM	&\bf{	0.0031 	$\pm$	0.0062} 	&	\bf{0.0336 	$\pm$	0.4015} 	&	\bf{0.2724 	$\pm$	3.4182}	\\	
	    &	FGM	&	0.0141 	$\pm$	0.0328 	&	0.0672 	$\pm$	0.3734 	&	0.5487 	$\pm$	3.1928 	\\	
\\
Case10	&	CFGM	&	\bf{0.0128 	$\pm$	0.0262} 	&	\bf{0.0703 	$\pm$	0.4215} 	&	\bf{0.9519 	$\pm$	5.3180} 	\\	
	    &	FGM	&	0.0220 	$\pm$	0.0494 	&	0.0834 	$\pm$	0.4057 	&	1.1255 	$\pm$	5.1085 	\\	
\\
Case11	&	CFGM	&	\bf{3.4843 	$\pm$	6.9808} 	&	1.1751 	$\pm$	5.1944 	&	3.1228 	$\pm$	14.3985 	\\	
	    &	FGM	&	4.1196 	$\pm$	12.0940 	&	\bf{0.9921 	$\pm$	5.3509} 	&	\bf{2.4508 	$\pm$	14.6638} 	\\	
\\	
\centering														
Total	&		&	9:2			&	7:4	&			7:4			\\	
\bottomrule
\end{tabular}
\end{table}

\newpage

\begin{table}[!htb]
\centering
\scriptsize
\caption{Prediction performance of CFGM and FGM in Cross validation \label{t:retuls_pred}}
\begin{tabular}{rrrrr}
\toprule
	&		&	MSE			&	MAE		&	MAPE(\%)		\\
\hline
\\
Case1	&	CFGM	&	\bf{30.4939 	$\pm$	39.3554} 	&	4.4096 	$\pm$	3.3726 	&	\bf{7.3025 	$\pm$	5.4229} 	\\	
	&	FGM	&	34.5486 	$\pm$	39.5743 	&	\bf{4.8863 	$\pm$	3.3146} 	&	8.0961 	$\pm$	5.3269 	\\	
\\
Case2	&	CFGM	&	\bf{8.3676 	$\pm$	18.3188} 	&	\bf{2.0370 	$\pm$	2.0838} 	&	\bf{8.8024 	$\pm$	8.8677} 	\\	
	&	FGM	&	22.6661 	$\pm$	42.6136 	&	3.3110 	$\pm$	3.4710 	&	14.2821 	$\pm$	14.7859 	\\	
\\
Case3	&	CFGM	&	\bf{18.8652 	$\pm$	26.8104} 	&	\bf{3.3413 	$\pm$	2.8156} 	&	\bf{16.7184 	$\pm$	14.2824} 	\\	
	&	FGM	&	25.9883 	$\pm$	39.2119 	&	3.8781 	$\pm$	3.3572 	&	19.4072 	$\pm$	17.0292 	\\	
\\
Case4	&	CFGM	&	\bf{1.6338 	$\pm$	1.8054} 	&	\bf{1.1026 	$\pm$	0.6560} 	&	\bf{24.1863 	$\pm$	14.5933} 	\\	
	&	FGM	&	6.2599 	$\pm$	18.7491 	&	1.4047 	$\pm$	2.1007 	&	30.7946 	$\pm$	46.4865 	\\	
\\
Case5	&	CFGM	&	\bf{196.2533 	$\pm$	280.8344} 	&	\bf{11.1167 	$\pm$	8.6493} 	&	\bf{24.6414 	$\pm$	21.2041} 	\\	
	&	FGM	&	524.1852 	$\pm$	721.9064 	&	18.3410 	$\pm$	13.9038 	&	42.7472 	$\pm$	34.8749 	\\	
\\
Case6	&	CFGM	&	\bf{425.8787 	$\pm$	767.7857} 	&	\bf{15.0169 	$\pm$	14.3620} 	&	\bf{8.4184 	$\pm$	7.9611 }	\\	
	&	FGM	&	1371.6504 	$\pm$	2805.3065 	&	24.7774 	$\pm$	27.9288 	&	13.8827 	$\pm$	15.5000 	\\	
\\
Case7	&	CFGM	&	\bf{153.6544 	$\pm$	342.6568} 	&	\bf{9.1558 	$\pm$	8.4781} 	&	\bf{20.2662 	$\pm$	18.7234} 	\\	
	&	FGM	&	3100.9623 	$\pm$	13156.8750 	&	26.9264 	$\pm$	49.4552 	&	59.1887 	$\pm$	108.8847 	\\	
\\
Case8	&	CFGM	&	\bf{2443.2359 	$\pm$	8753.3487} 	&	\bf{21.6147 	$\pm$	45.1016} 	&	\bf{32.2738 	$\pm$	66.9320}	\\	
	&	FGM	&	4992.1914 	$\pm$	19111.8027 	&	32.7297 	$\pm$	63.5318 	&	48.7507 	$\pm$	94.2844 	\\	
\\
Case9	&	CFGM	&	\bf{0.4373 	$\pm$	0.7388} 	&	\bf{0.4786 	$\pm$	0.4630} 	&	\bf{4.1491 	$\pm$	4.0524} 	\\	
	&	FGM	&	0.7558 	$\pm$	1.7631 	&	0.5654 	$\pm$	0.6700 	&	4.8841 	$\pm$	5.8270 	\\	
\\
Case10	&	CFGM	&	\bf{1.0815 	$\pm$	1.7394} 	&	\bf{0.7954 	$\pm$	0.6797} 	&	\bf{13.5638 	$\pm$	12.3565} 	\\	
	&	FGM	&	1.2261 	$\pm$	1.9004 	&	0.8540 	$\pm$	0.7151 	&	14.7381 	$\pm$	13.0126 	\\	
\\
Case11	&	CFGM	&	\bf{34.2671 	$\pm$	38.8337 }	&	\bf{5.0070 	$\pm$	3.0770} 	&	\bf{17.0717 	$\pm$	10.3767} 	\\	
	&	FGM	&	1041.3573 	$\pm$	4094.3538 	&	15.7837 	$\pm$	28.5575 	&	54.8024 	$\pm$	102.3495 	\\	
\\
Total	&		&	11:0			&	10:1	&			11:0			\\	
\bottomrule
\end{tabular}
\end{table}

    The overall evaluation criteria for prediction accuracy are listed in Table \ref{t:retuls_pred}. The MSE, MAE and MAPE of CFGM are smaller than FGM model in all cases. Thus it is very clear that the CFGM model is very competitive to the FGM model in prediction accuracy. Meanwhile, also almost all STDs of CFGM are smaller than FGM, except the STD of AE in case 1 and STD of APE in case 1. Thus it is also shown that stability of CFGM model is better than FGM. On the other hand, it can be seen that the upper bounds of the criteria of CFGM are quite smaller than that of FGM. $e.g.$ the maximum MAPE of CFGM is 32.2738\%, and that of FGM is as large as 59.8024\%. Thus, it is sufficient to show that the CFGM model is more effective in prediction than CFGM with more acceptable robustness.

    And it should also be noticed that the $\alpha$ for FGM model is searched in a wider range than CFGM model, thus it is implied that it would be easier for the CFGM to select the optimal $\alpha$ than FGM in the applications.

\subsection{About the parameter $\alpha$}
\label{sec:alpha}

    In the numerical example \ref{subsec:alpha}  it has been shown that the value of $\alpha$ would effect the modelling accuracy of the CFGM model. In this subsection we will analyze the ranges which contain the optimal $\alpha$ in most cases.

    The proportion of optimal $\alpha$ for CFGM and FGM are shown in the subfigures in Fig.\ref{fig:alphas}. It is shown that 84\% optimal $\alpha$ are obtained in $(0,1)$ for CFGM, and 15\% optimal $\alpha$ are obtained at 0. In total, we can see that there are 99\% values are obtained in $[0,1)$. And only a few optimal $\alpha$ (just 1.22\%) are obtained in $(1,2]$ .

    As for the FGM model, most optimal $\alpha$ are obtained in $(0,1)$, which take 72\% in all the cases. But there still exist 25\% points obtained in $(1,2)$.
\begin{figure}[!htb]
\subfloat[]{
  \includegraphics[width=0.45\textwidth]{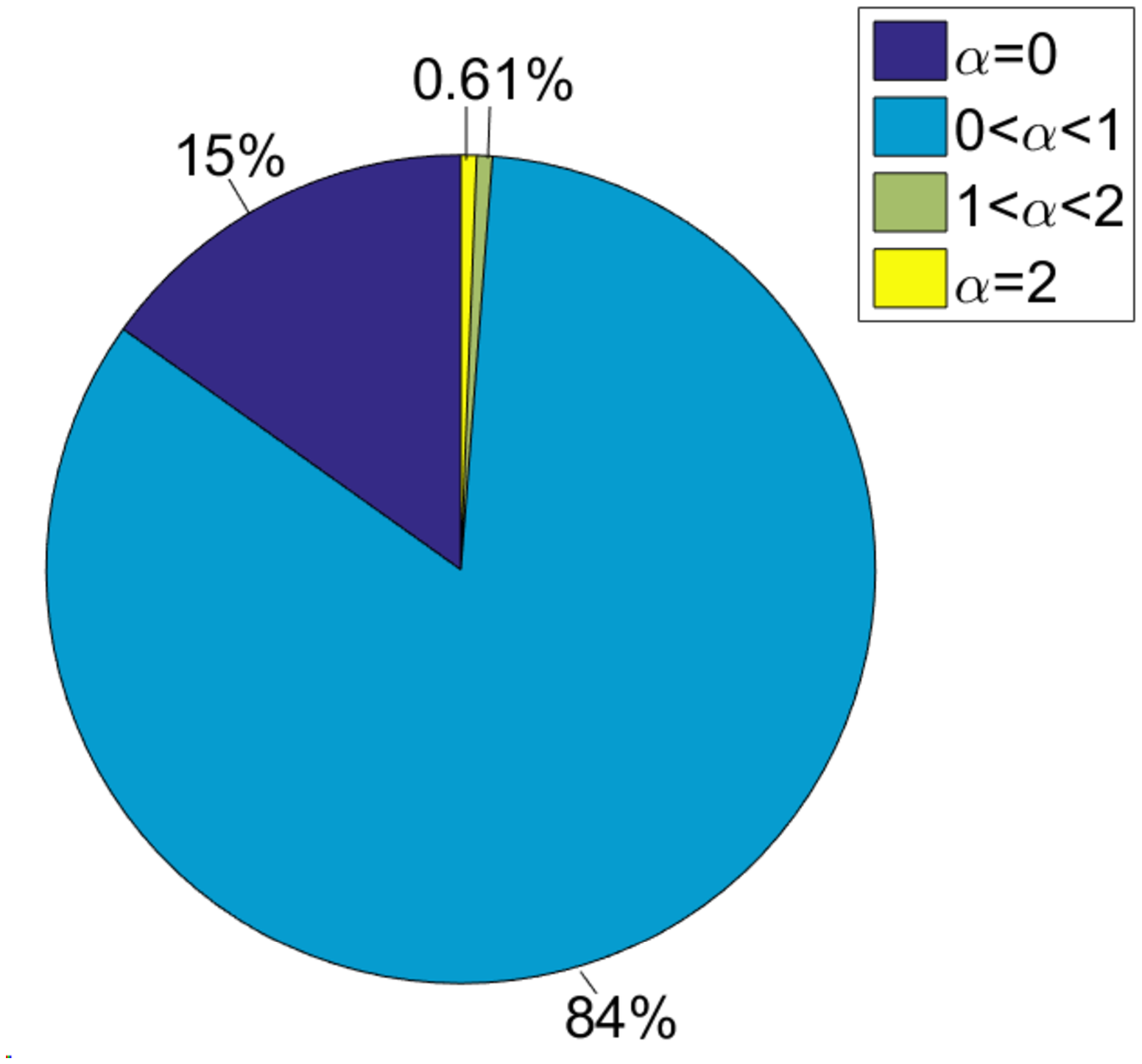}}\hfill
\subfloat[]{
  \includegraphics[width=0.45\textwidth]{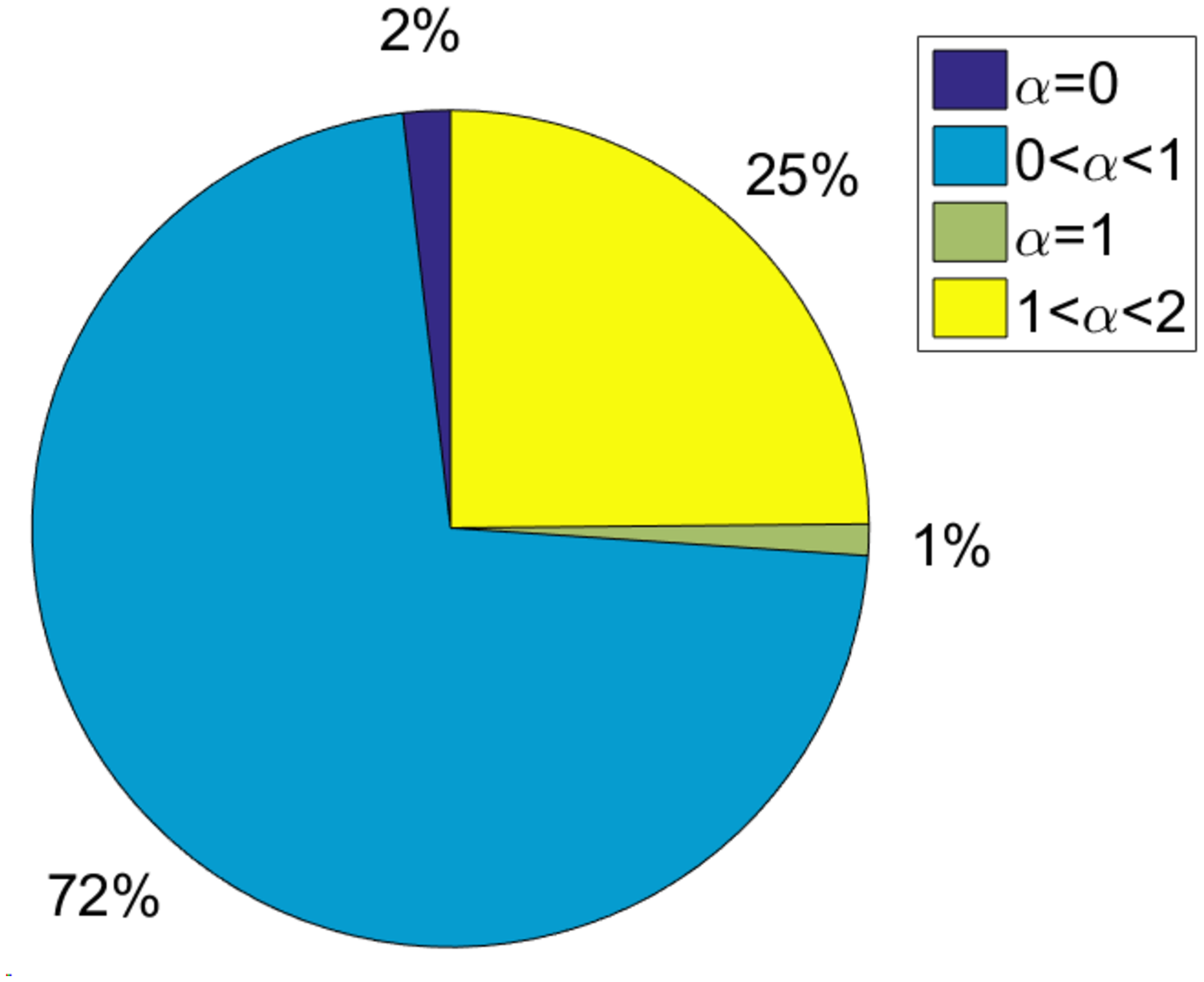}}\hfill

\caption{The proportion of optimal $\alpha$ in all cases of CFGM and FGM. (a) Optimal $\alpha$ of CFGM model; (b) Optimal $\alpha$ of FGM model.
 \label{fig:alphas}}
\end{figure}
    According to the results shown above, it is clear to see that the FGM model needs a wider range to select the optimal $\alpha$ than CFGM. And it is shown that we almost only need to search the optimal $\alpha$ in the range of $[0,1)$ for the CFGM model in the applications. This finding indicates that it is easier to optimize the $\alpha$ for CFGM. Moever, it is also very usefull for us to design more precise algorithm or try to use other optimizers, as the available searching range has been shown in this finding.

\subsection{Some typical cases and analysis}

    For better explanation, we choose some typical cases to compare the features of CFGM model to the FGM model. The three cases chosen in this subsection are cases 7-9, where the maximum errors of CFGM and FGM appear in case 8 and 7, respectively, and they perform closely in case 9, as shown in Table \ref{t:retuls_pred}.

\subsubsection{The non-smooth series with one inflexion point}

\begin{figure}[!htb]
\subfloat[]{
  \includegraphics[width=0.45\textwidth]{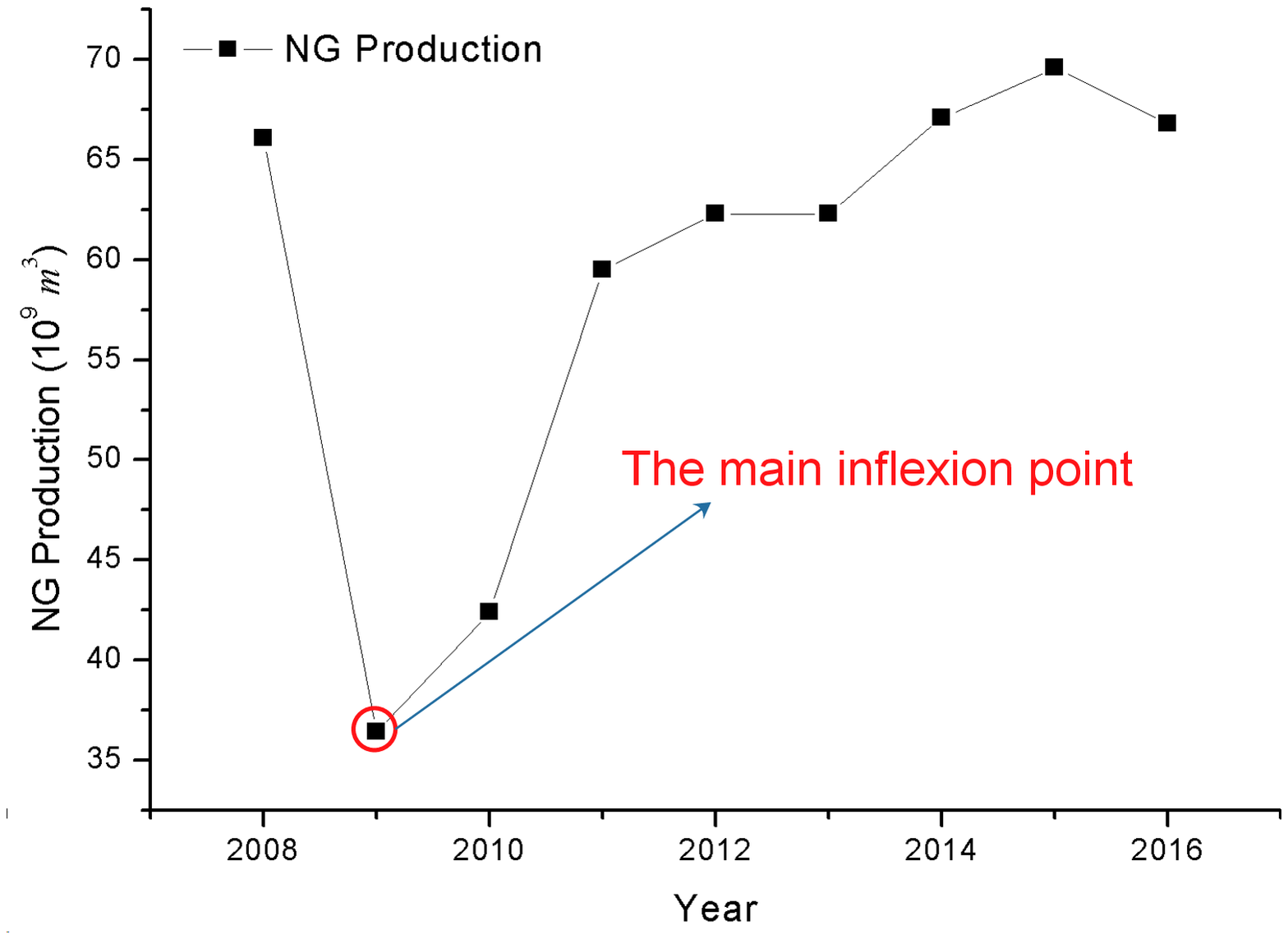}}\hfill
\subfloat[]{
  \includegraphics[width=0.45\textwidth]{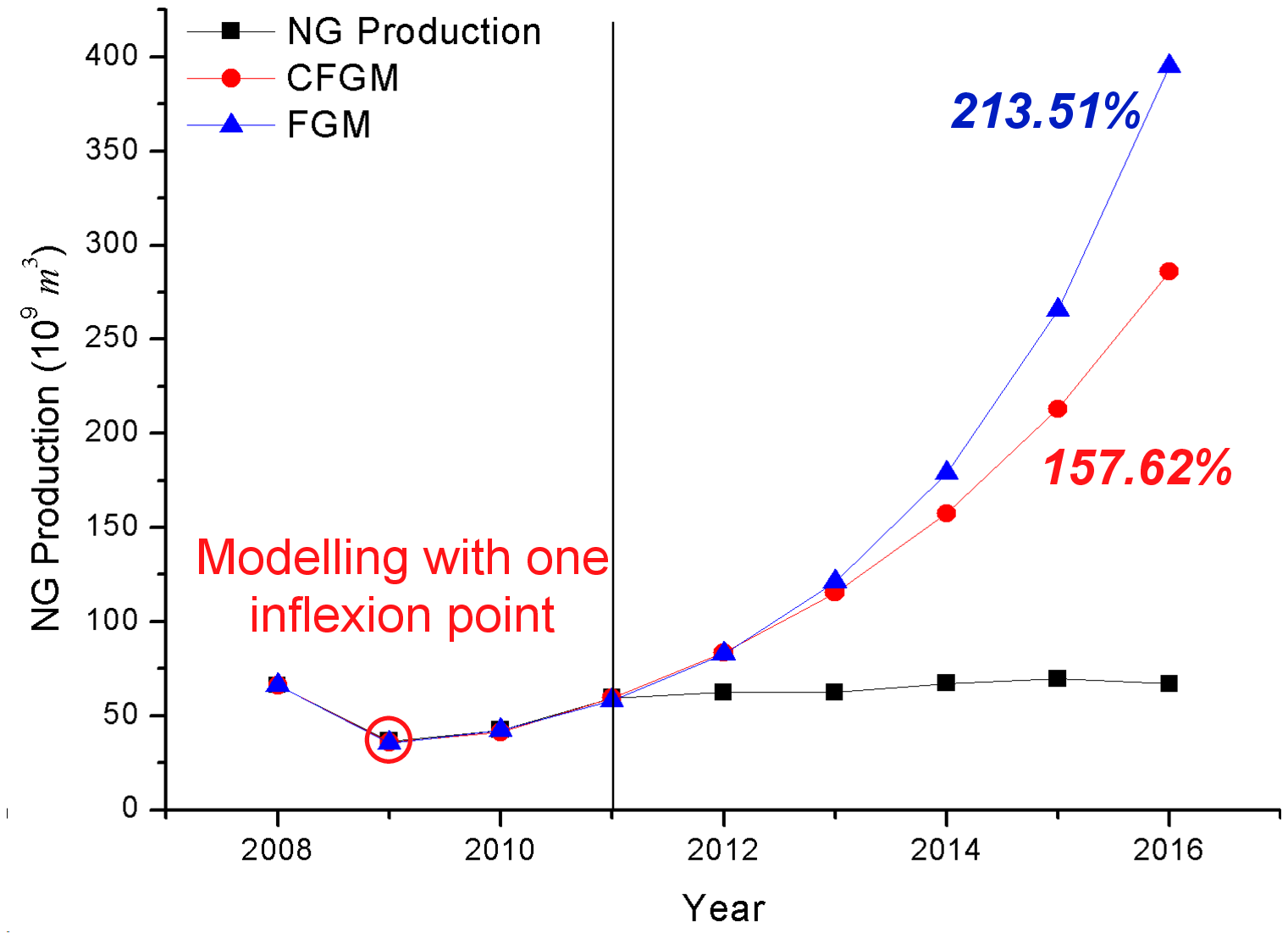}}\hfill
\subfloat[]{
  \includegraphics[width=0.45\textwidth]{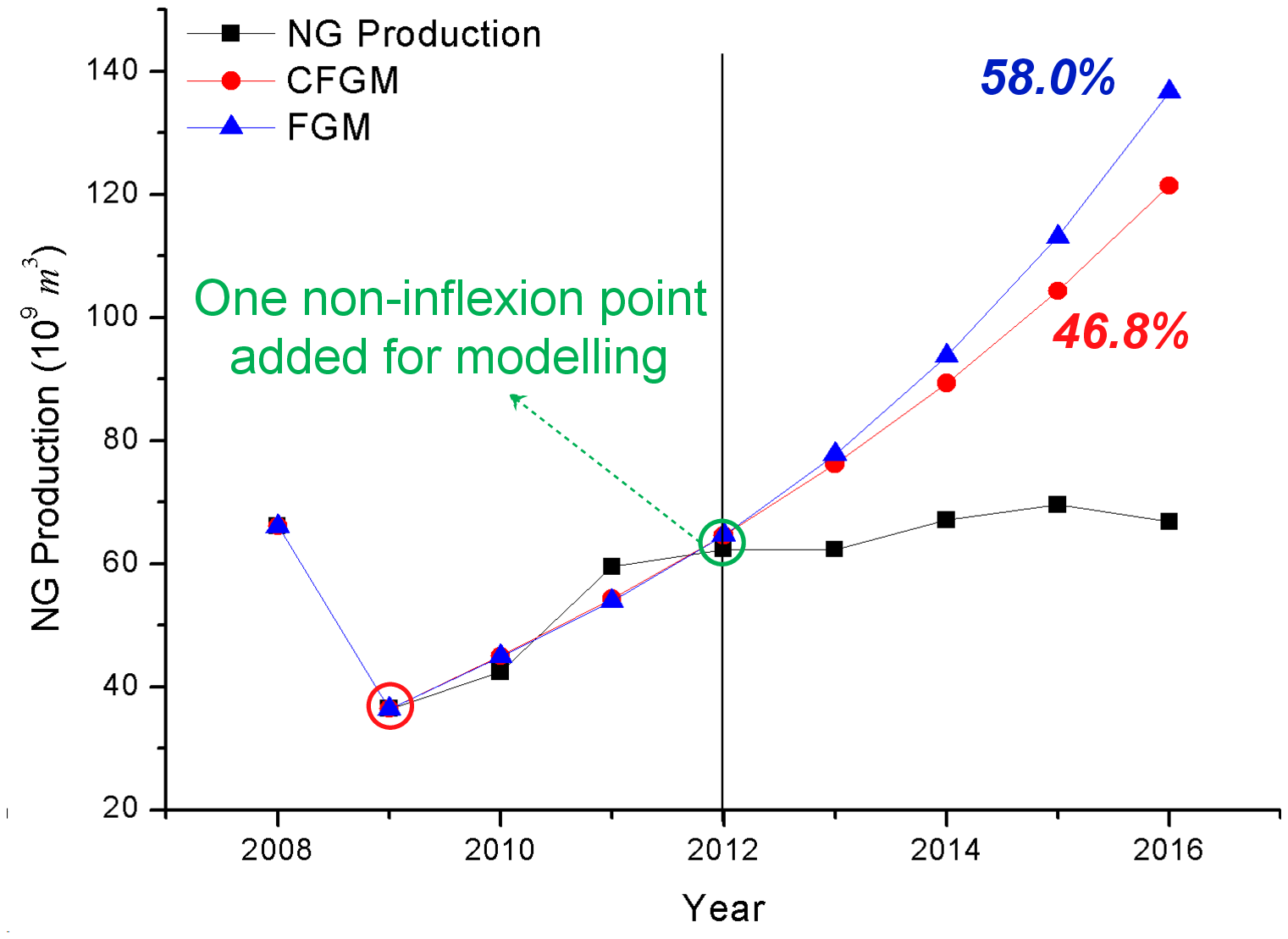}}\hfill
\subfloat[]{
  \includegraphics[width=0.45\textwidth]{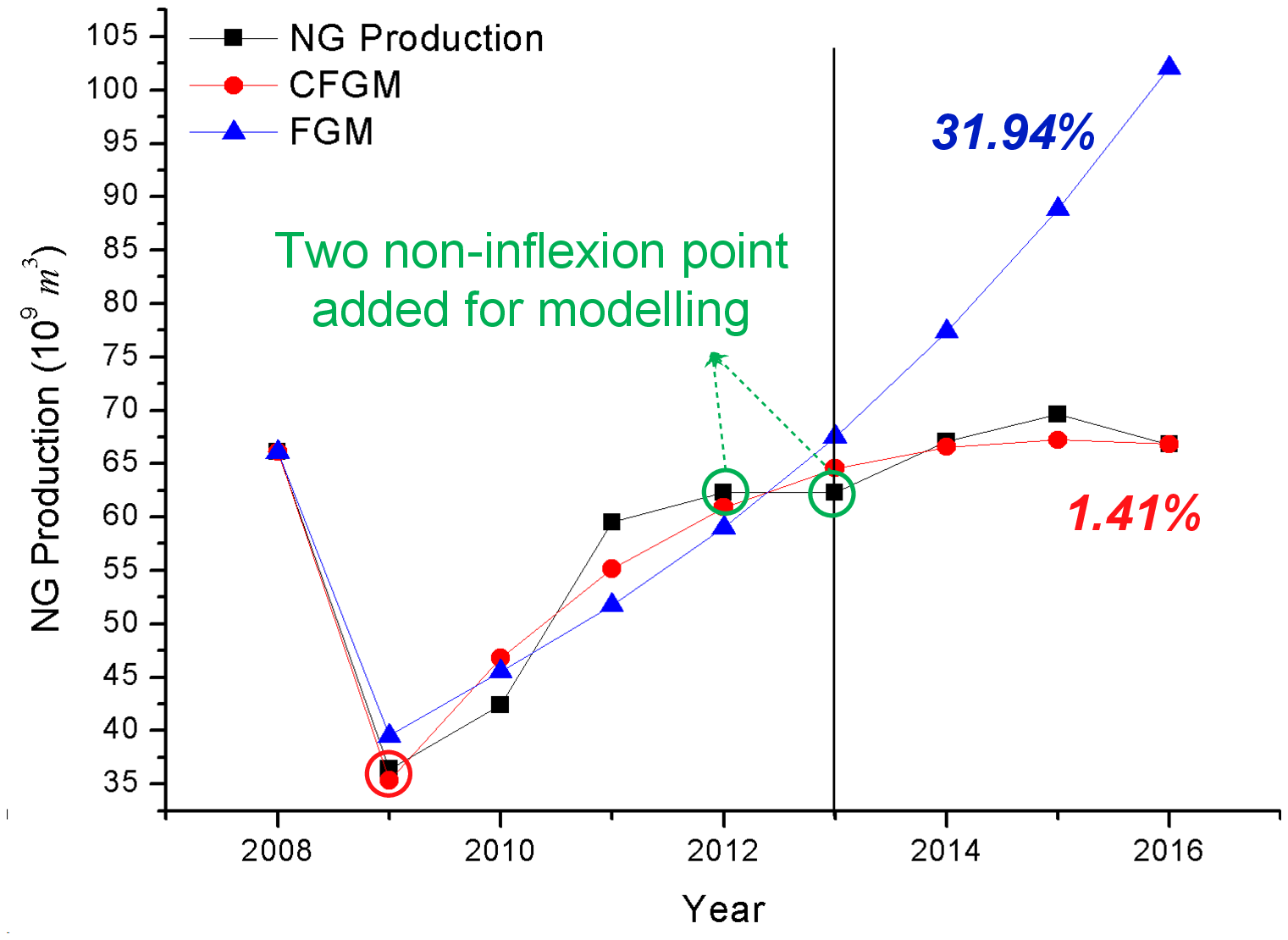}}\hfill
\caption{The prediction results in case 8 with 1 main inflexion point. (a) The raw data of NG production of Turkmenistan. (b)  4 points for modelling with 1 inflexion points. (c)  5 points for modelling with 1 new non-inflexion point.  (d) 5 points for modelling with 2 new non-inflexion points.
 \label{fig:case8}}
\end{figure}

    The CFGM model has the maximum errors in case 8, with MAPE as 32.2738\%.  We firstly plotted the raw data of case 8, the NG production of Turkmenistan, in subfigure (a) in Fig. \ref{fig:case8}. It can be shown that the series started to change its direction at the second point. Intuitively, we call such point as the inflexion point in the rest of this paper. And in this case, the second point is its main inflexion point.

    With such an inflexion point, we can see that the CFGM and FGM model all perform poorly in subfigure (b) in  Fig. \ref{fig:case8}, with quite large MAPE, which implies that it is  such large errors which lead to the poor overall performance in this case. In subfigure (c), one non-inflexion point added for modelling, and then it can be seen that the accuracy of these models is improved significantly. It is very interesting to see that when two non-inflexion points added for modelling, the performance of CFGM becomes much better immediately. However, the performance of FGM model is still not well at all.

    With this case, we can see that the inflexion point is very important to the CFGM model, which would make it inefficient in the applications. However, the inference of the inflexion point can be weakened when more non-inflexion points added.

    Moreover, it can be seen that the CFGM model is more sensitive to the newly added non-inflexion point than the FGM model, which makes it more effective in this case.

\begin{figure}[!htb]
\begin{center}
\subfloat[]{
  \includegraphics[width=0.45\textwidth]{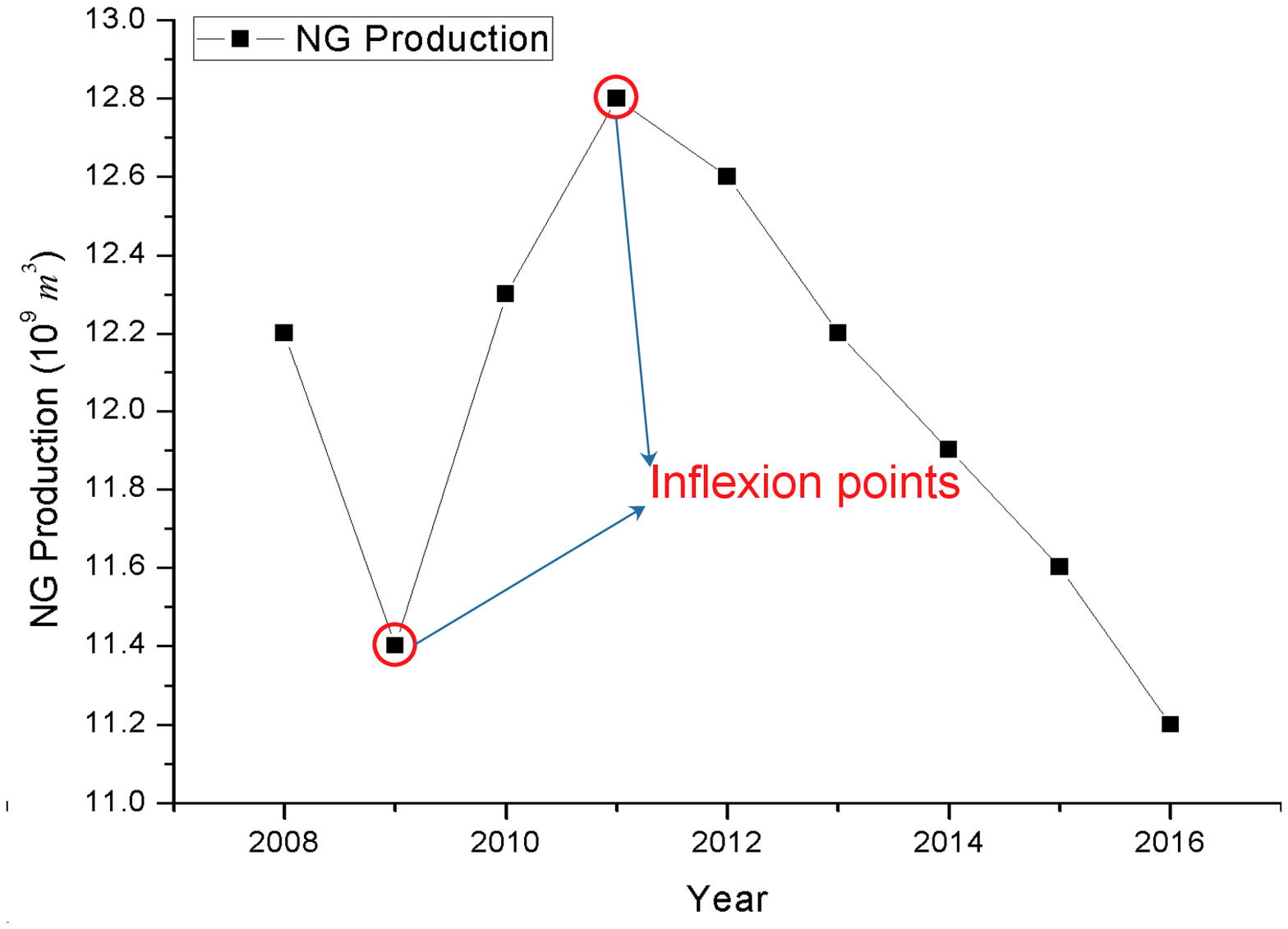}}
\end{center}
\subfloat[]{
  \includegraphics[width=0.45\textwidth]{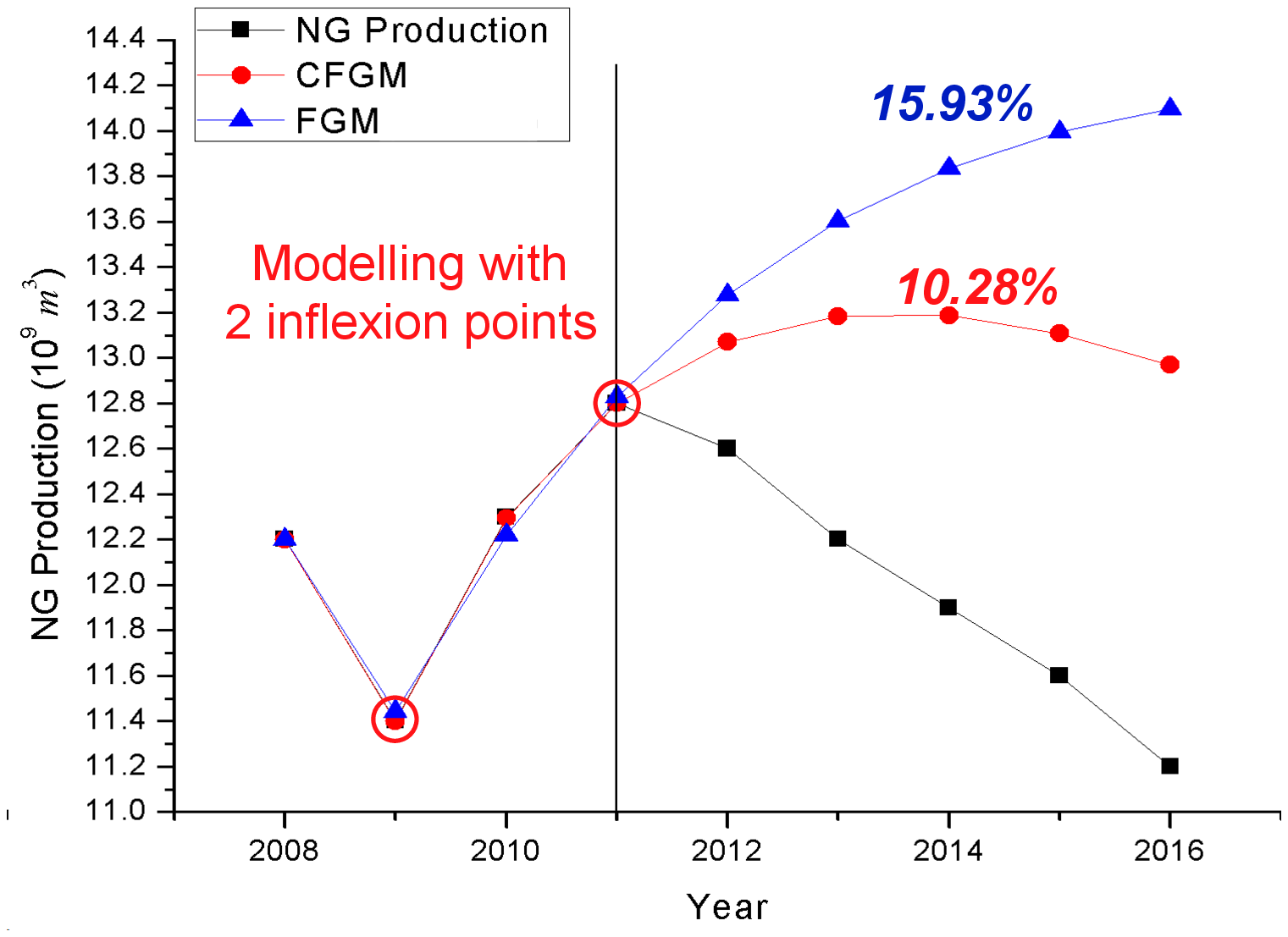}}\hfill
\subfloat[]{
  \includegraphics[width=0.45\textwidth]{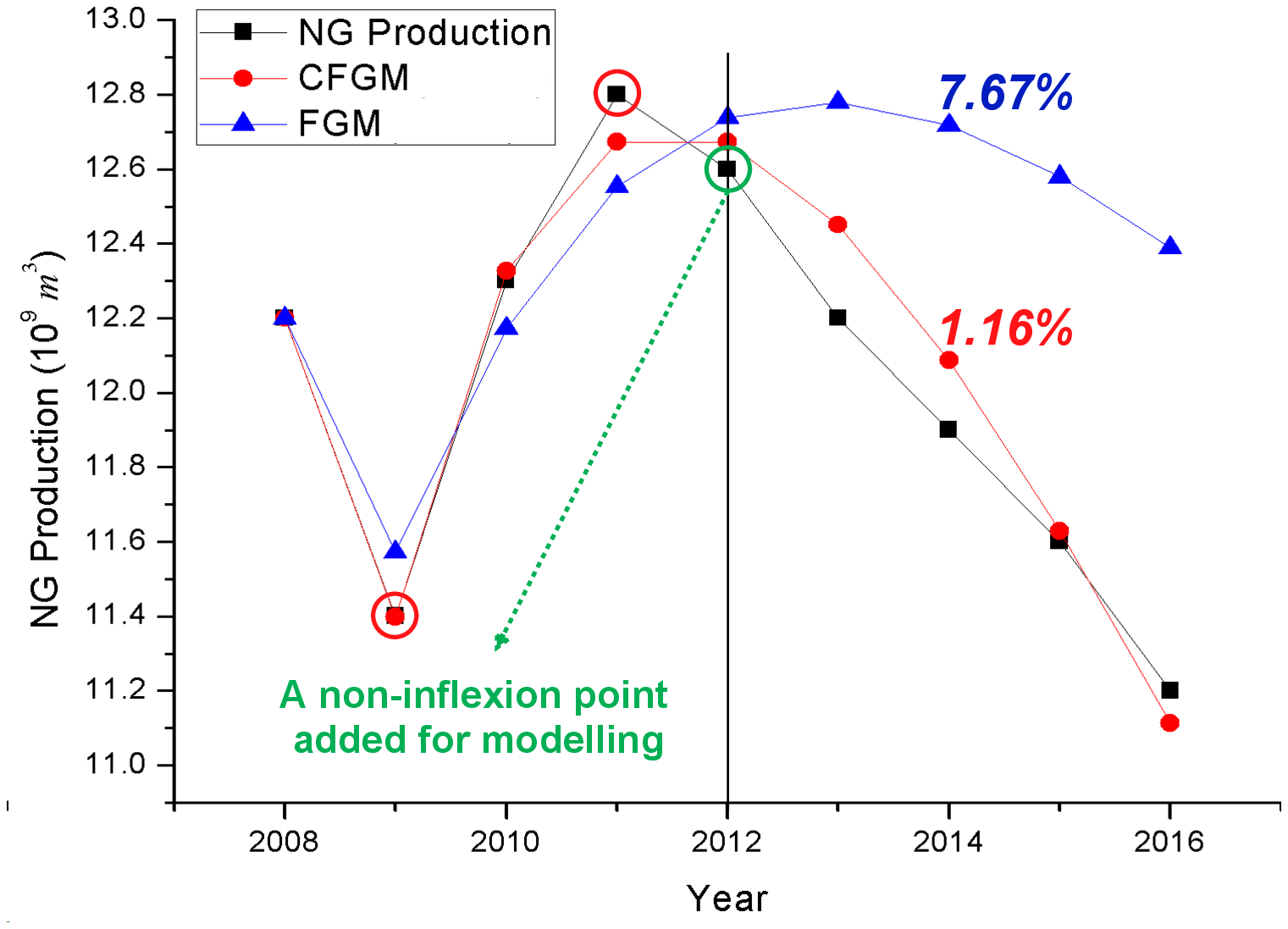}}\hfill

\caption{The prediction results for case 9 with 2 inflexion points. (a) The raw data of NG production of Brunei. (b) 4 points for modelling with 2 inflexion points. (c)  5 points for modelling with 1 new non-inflexion point.
 \label{fig:case9}}
\end{figure}

\subsubsection{The non-smooth series with two inflexion points}

    In case 9, the CFGM has similar performance to the FGM, although it has better accuracy than FGM. The original series of case 9 has two inflexion points as shown in subfigure (a) in Fig. \ref{fig:case9}. Thus we can see that if we use only four points to build the models, they only reflect the overall trend of the direction from the second to fourth points, as shown in subfigure (b) in Fig. \ref{fig:case9}. However, it is clear to see that the CFGM correctly catches the overall trend when only one non-inflexion point added for modelling, while the change of FGM is not significant, as shown in subfigure (c). This case clearly indicates that the CFGM is much more sensitive to the new information than FGM model.

\subsubsection{The non-smooth series with four inflexion points}

\begin{figure}[!htb]

\subfloat[]{
  \includegraphics[width=0.45\textwidth]{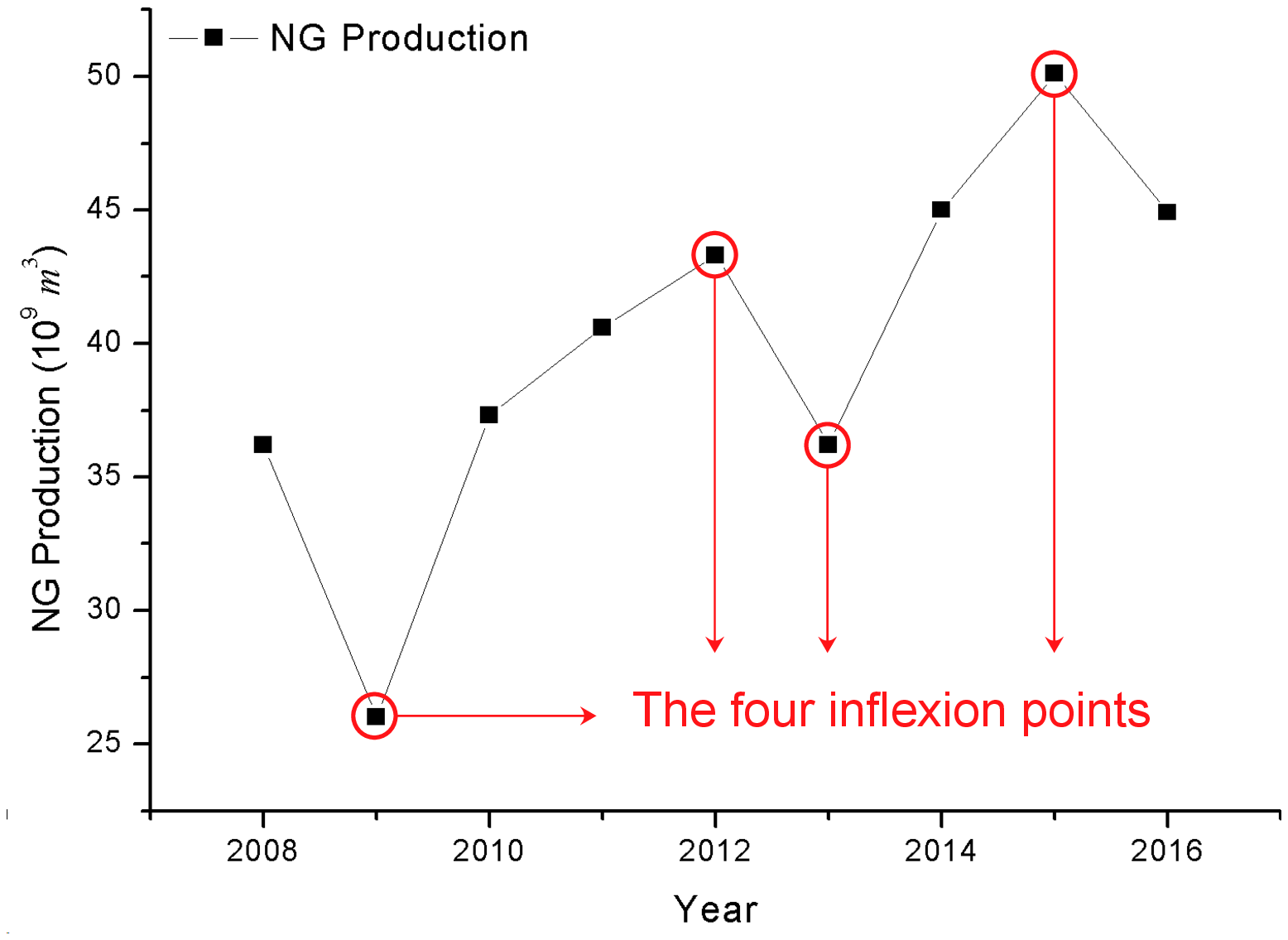}}\hfill
\subfloat[]{
  \includegraphics[width=0.45\textwidth]{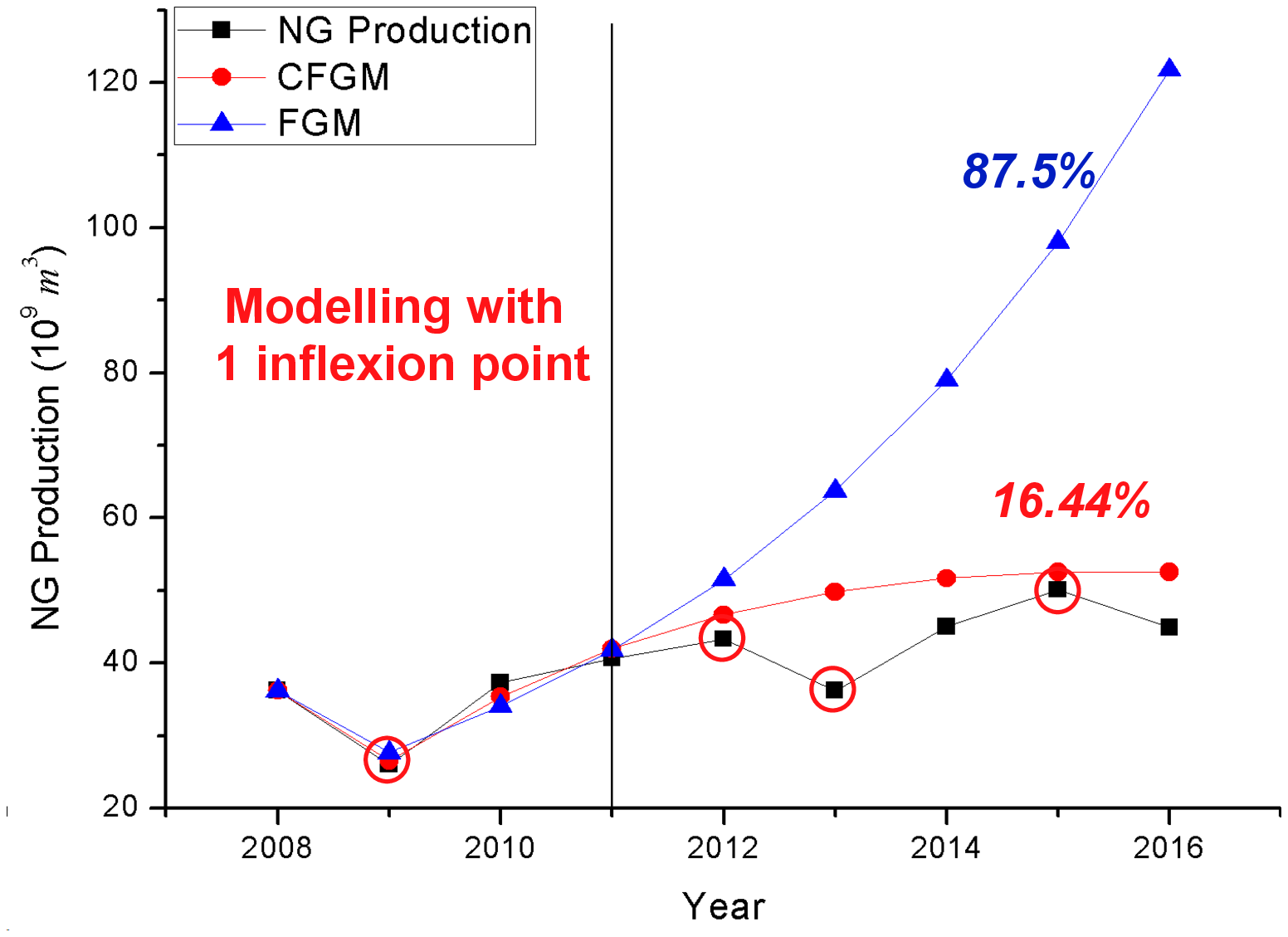}}\hfill
\subfloat[]{
  \includegraphics[width=0.45\textwidth]{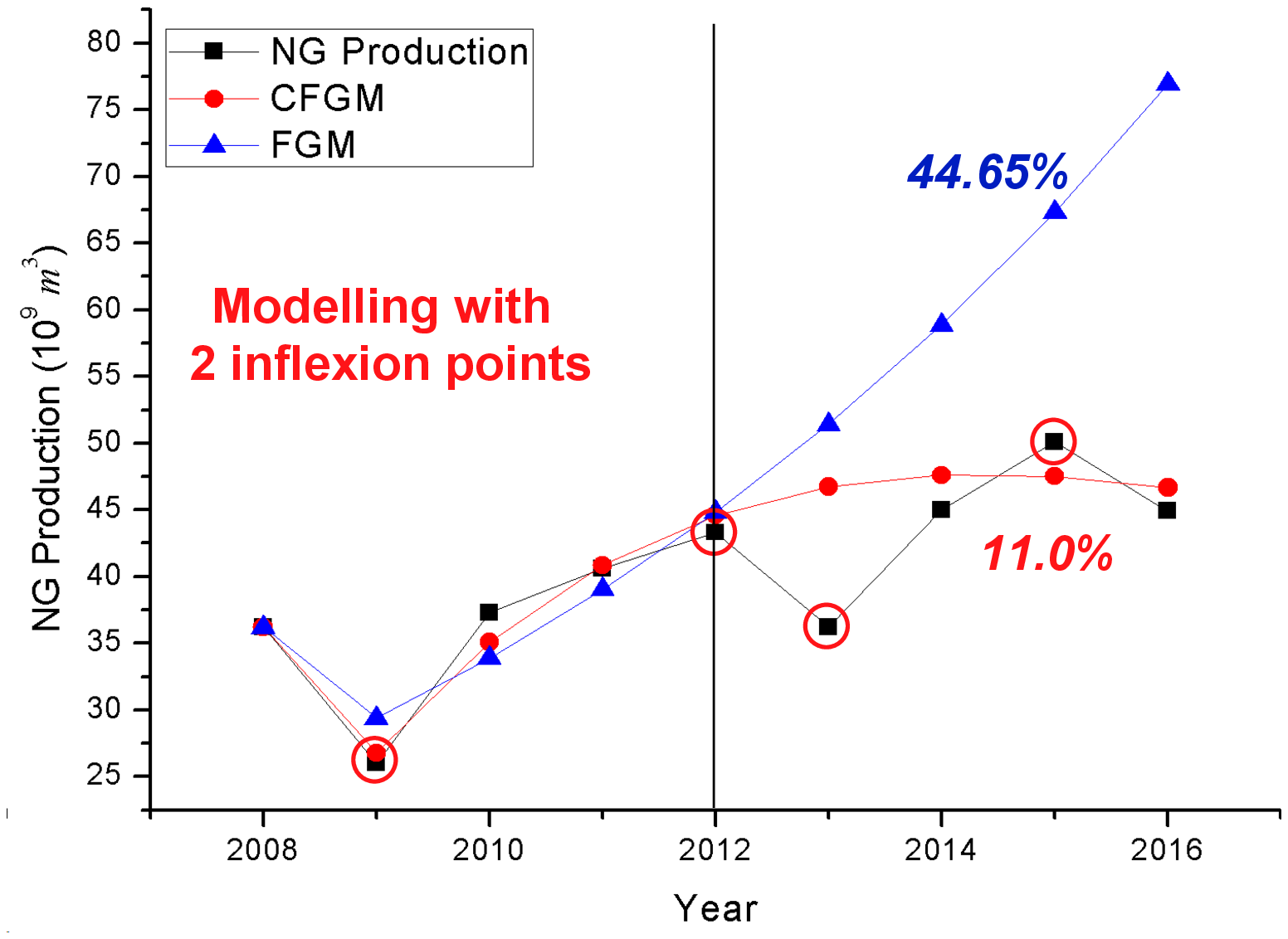}}\hfill
\subfloat[]{
  \includegraphics[width=0.45\textwidth]{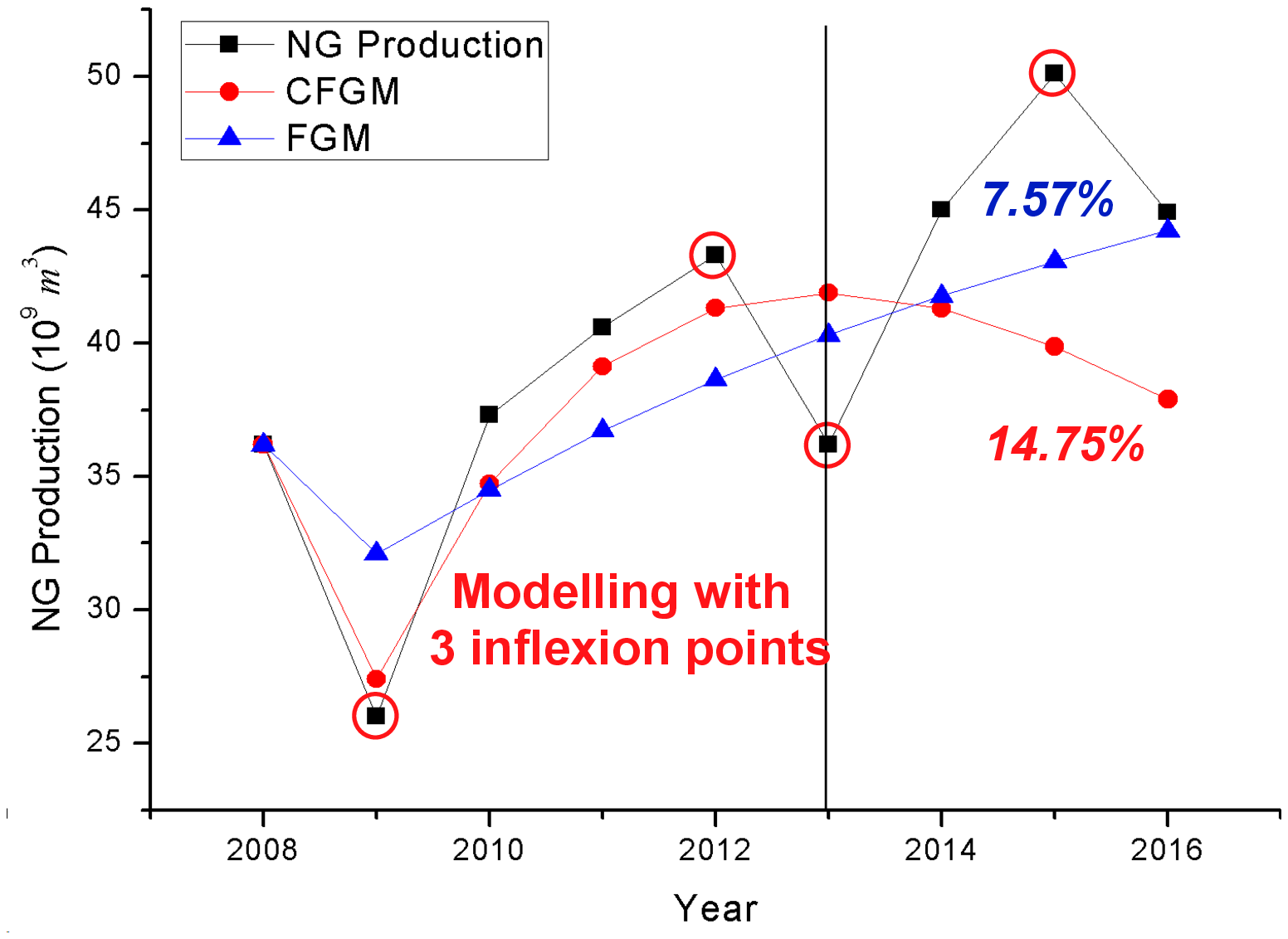}}\hfill
\subfloat[]{
  \includegraphics[width=0.45\textwidth]{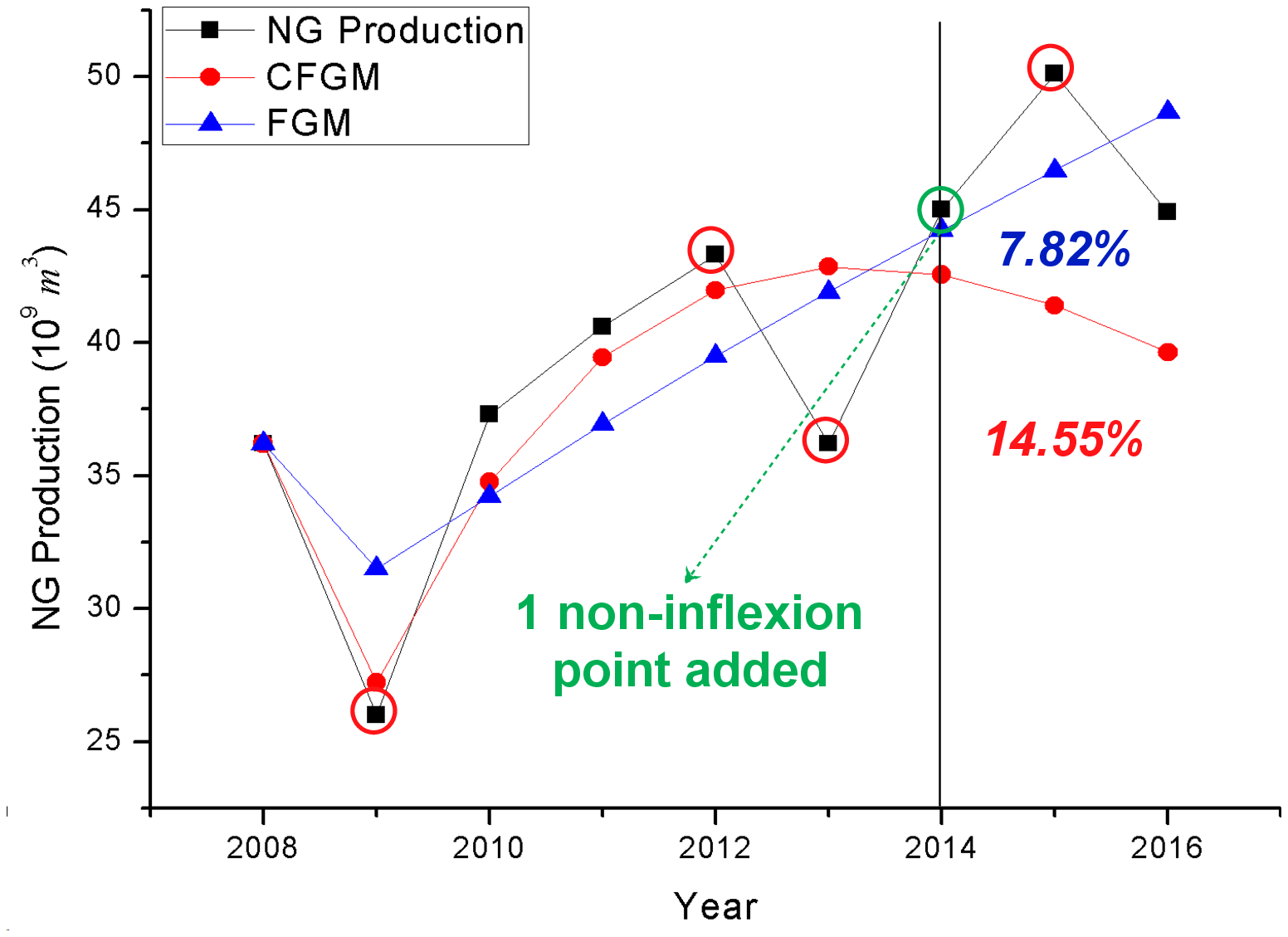}}\hfill
\subfloat[]{
  \includegraphics[width=0.45\textwidth]{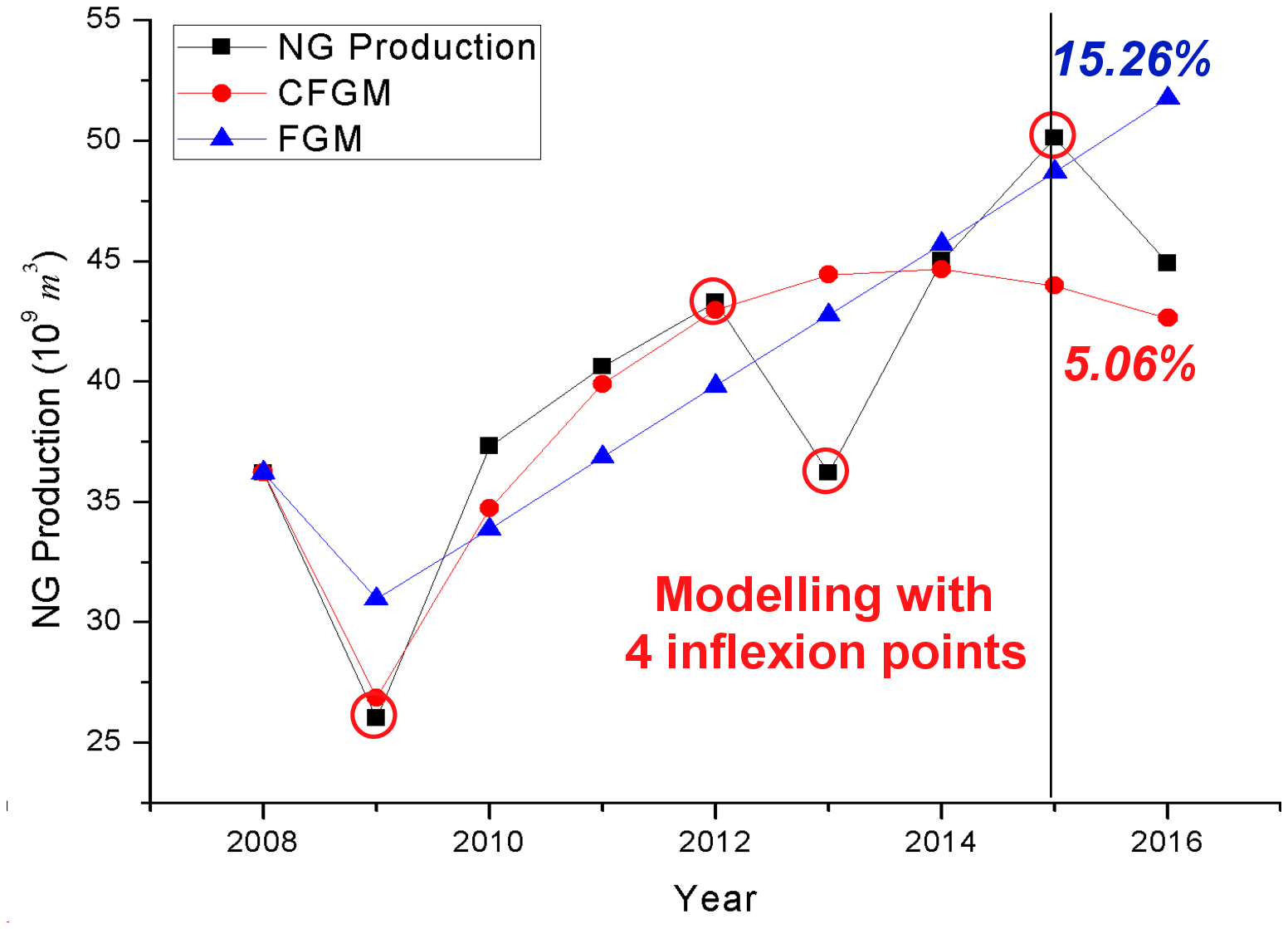}}\hfill
\caption{The prediction results in case 7 with 4 inflexion points. (a) The raw data of NG production of Nigeria; (b) 4 points for modelling with 1 inflexion point; (c) The results of CFGM and FGM with 5 points for modelling, including 2 inflexion points; (d)  6 points for modelling with 3 inflexion points; (e) 7 points for modelling with 1 new non-inflexion point; (f) 8 points for modelling with 4 inflexion points.
 \label{fig:case7}}
\end{figure}

    The case 7 is the most special case as it has four inflexion points out of the total 9 points, as shown in subfigure (a) in Fig. \ref{fig:case7}. In another point of view, it can be seen that the direction is changed with every three or less points, thus it approaches to the oscillatory series.

    With only one inflexion point for modelling, the CFGM and FGM all have poor performance, and the FGM overestimate the trend of the original series, as shown in subfigure (b) in Fig. \ref{fig:case7}. Their performances become better when a new inflexion point added, because the last four points for modelling roughly reflect the overall trend of the last series, shown in subfigure (c).

    However, the situations shown in subfigures (d) and (e) indicate that the performance of CFGM become worse with newly added points, whatever the inflexion or non-inflexion point. And we can see in these subfigures that the effects of the newly added points to the CFGM is quite significant. But for the FGM model, it always follows the overall increasing trend of the original series, even when the last point added shown in the subfigure (f).

    This case presents quite a clear picture of the sensitivities of the CFGM and FGM, which indicates the CFGM is much more sensitive to the new information than FGM model. However, it also indicates that such sensitivity may also lead to worse performance to the CFGM model.

\subsubsection{Further discussions on the typical cases}

    Combining the results shown above, we further analyze the properties of the CFGM and FGM. As mentioned in Subsection \ref{sec:modelcfgm}, the CFGM and FGM share the same modelling procedures except the definitions of fractional order accumulation and difference.  Actually all the two definitions will make the original series smoother so that the response function can be very close to the accumulated series based on the least squares method. As shown in Table. \ref{t:fagoseries}, the fitting errors for the accumulated series by CFGM and FGM are all very small, and the fitting errors for the accumulated series of CFGM and FGM are very close in most cases. The predicted values of the original series are generated using the fractional order difference, thus it is clear that the significant difference between the CFGM and FGM are caused by the formulation of the fractional order differences.

\begin{table}[!htb]
\caption{ MAPEs for fitting the accumulated series by CFGM and FGM in the three typical cases  \label{t:fagoseries}}
\centering
\tiny
\begin{tabular}{rrrrrrrrr}
\toprule
Case 8	&		&		&	Case 9	&		&		&	Case 7	&		&		\\
Modelling Points	&		&	Fitting MAPE	&	Modelling Points	&		&	Fitting MAPE	&	Modelling Points	&		&	Fitting MAPE	\\
\hline
4	&	CFGM	&	0.0074	&	4	&	CFGM	&	0.3294	&	4	&	CFGM	&	2.6321	\\
	&	FGM	&	0.1026	&		&	FGM	&	0.8331	&		&	FGM	&	1.0107	\\
	&		&		&		&		&		&		&		&		\\
5	&	CFGM	&	0.0518	&	5	&	CFGM	&	0.5417	&	5	&	CFGM	&	2.5067	\\
	&	FGM	&	0.3434	&		&	FGM	&	0.5724	&		&	FGM	&	1.4792	\\
	&		&		&		&		&		&		&		&		\\
6	&	CFGM	&	0.0776	&		&		&		&	6	&	CFGM	&	6.0371	\\
	&	FGM	&	0.5626	&		&		&		&		&	FGM	&	3.4061	\\
	&		&		&		&		&		&		&		&		\\
	&		&		&		&		&		&	7	&	CFGM	&	5.8968	\\
	&		&		&		&		&		&		&	FGM	&	2.6772	\\
	&		&		&		&		&		&		&		&		\\
	&		&		&		&		&		&	8	&	CFGM	&	6.0592	\\
	&		&		&		&		&		&		&	FGM	&	2.2572	\\
\bottomrule
\end{tabular}
\end{table}

    For convenience, we denote the accumulated series by FOA as $x_W^{(\alpha)}(k)$ and the predicted values of FOA series as $\hat{x}_W^{(\alpha)}(k)$, and the accumulated series by CFA as  $x^{(\alpha)}(k)$ and the predicted values of CFA series as $\hat{x}^{(\alpha)}(k)$, and the errors as $e^{(\alpha)}_W(k) = \hat{x}_W^{(\alpha)}(k) - x_W^{(\alpha)}(k)$ and $e^{(\alpha)}(k) = \hat{x}^{(\alpha)}(k) - x^{(\alpha)}(k)$. Then we can estimate the errors of the original series by FGM and CFGM.

    Recalling the definition of the  fractional order difference in \eqref{eq:ifago}, the error of the restored values $x^{(0)}(k)$ of FGM can be written as
    \begin{equation}
        \begin{aligned}
           e_W^{(0)}(k) & = \hat{x} _W^{(0)}(k) - x^{(0)}(k) \\
                      & = \sum_{j=1}^{k} \left(
              \begin{array}{c}
                 k-j-\alpha-1  \\ k-j
              \end{array}\right) \left( \hat{x}_W^{(\alpha)}(k) - x_W^{(\alpha)}(k)\right) \\
              & = \sum_{j=1}^{k} \left(
              \begin{array}{c}
                 k-j-\alpha-1  \\ k-j
              \end{array}\right)e^{(\alpha)}_W(k).
        \end{aligned}
    \end{equation}

    We can see that the errors of the restored values by FGM are actually the IFOA series of the errors $e^{(\alpha)}_W(k)$. In another words, the error of restored value by FGM at each point is effected by all the errors of the FOA series. This property of FGM sometimes improve its applicability as it take use of each point as presented by Wu. However, it may also enlarge the error of FGM in some cases. For example, if the error of the FOA series at one point is very large, the errors after this point will be all effected by it. And this is one reason the FGM can not make accurate prediction in the above three cases.

    But for the CFGM, the error of the restored values is actually

    \begin{equation}
        \begin{aligned}
           e^{(0)}(k)   & =  \hat{x}^{(0)}(k) - x^{(0)}(k) \\
                        & =  k^{[\alpha] - \alpha} \Delta ^n  \left( \hat{x}^{(\alpha)}(k) - x^{(\alpha)}(k) \right) \\
                        & =  k^{[\alpha] - \alpha} \Delta ^n e^{(\alpha)}(k).
        \end{aligned}
    \end{equation}

    Particularly, when $\alpha \in (0,1]$ the error is actually $e^{(0)}(k)=k^{1 - \alpha} \left( e^{(\alpha)}(k)-e^{(\alpha)}(k-1) \right)$, this means it is effected by the errors of the CFA series with the two points; and when $\alpha \in (1,2]$ the error of restored value is effected by the errors of the CFA series with three points. This indicates that the error of restored value by CFGM at each point is only effected by the closest few points. As shown in the Subsection \ref{sec:alpha}, most values of $\alpha$ are in the interval (0,1), this implies the error of restored value by CFGM at each point is often effected by the former two points. Thus CFGM can be more sensitive to respond to the new points.

    With the analysis above, it is clear to see the reason why the CFGM can outperform the FGM in the case studies. As mentioned in the statements of background, the time series of natural gas consumption in all the cases are all not stable, thus the sensitivity of CFGM would make it more efficient in the subcases with non-smooth series, and then the overall performance of CFGM can be much better than FGM.

\section{Conclusions}

In this paper we proposed a novel definition of conformable fractional difference (CFD) and conformable fractional accumulation (CFA), and then used them to build the novel conformable fractional grey model (CFGM). With the CFA and CFD, the CFGM proposed in this paper is much easier to implement and the optimal fractional order $\alpha$ is also very easy to obtain using the simple Brute Force method as shown in the numerical example.

Benchmark data sets obtained from the Time Series Data Library are used to validate the effectiveness of CFGM, in comparison with  FGM and AR model, based on the classical 1-step, 2-step and 3-step prediction tests. The results show that the CFGM is significantly more efficient than FGM, especially in longer term predictions. The real world applications in predicting the natural gas consumptions in 11 countries are also carried out to compare the effectiveness of CFGM and FGM. The time series cross validation also indicates that CFGM outperforms the FGM. The results also indicate that the fractional order of CFGM model needs narrower range (often in $[0,1)$) to tune, which implies it might be easier to tune the fractional order of CFGM. Analysis of three typical time series shows that  the restored errors of CFGM is only effected by a few closest points, which makes it more sensitive to the new information, thus it can be more efficient in predicting the unstable time series with several inflexion points. With widely existed unstable time series with small size, the CFGM can be expected to be applied in a wider range of application fields.

Moreover, introduction of the CFGM can be regarded as a new methodology of grey models. The CFD and CFA can take place of the fractional order operations used in the existing works, and then the framework of existing fractional grey models can be reconstructed. And also a new framework of conformable grey models can be built in the similar way to the CFGM. Our future works will be mainly oriented in such new frameworks. Otherwise, bigger picture can also be considered in the future works, like applying the CFD and CFA to the difference equations, the statistical models (like ARIMA), discrete dynamical systems simulation and control, $etc.$, and more details, $e.\ g.$ the sampling period for simulation and controlling, should be analyzed in the future.

\section{Conflict of Interests}
    The authors declare that there is no conflict of interests regarding the publication of this paper.

\section{Acknowledgment}
    This research was supported by the National Natural Science
Foundation of China (Grant No. 71771033, 71571157), Humanities and Social Science Project of Ministry of Education of China (No.19YJCZH119), the Doctoral Research Foundation of Southwest University of Science and Technology (no. 16zx7140, 15zx7141), the Open Fund (PLN201710) of State Key Laboratory of Oil and Gas Reservoir Geology and Exploitation (Southwest Petroleum University), and National Statistical Scientific Research Project (2018LY42).

\bibliographystyle{unsrt}
\bibliography{cfgm}

\begin{thebibliography}{10}

\bibitem{xie2017review}
Naiming Xie and Ruizhi Wang.
\newblock A historic review of grey forecasting models.
\newblock {\em Journal of Grey System}, 29(4), 2017.

\bibitem{zgx2018}
Gaoxun Zhang, Yi~Zheng, Honglei Zhang, and Xinchen Xie.
\newblock L{\'e}vy process-driven asymmetric heteroscedastic option pricing
  model and empirical analysis.
\newblock {\em Discrete Dynamics in Nature and Society}, 2018, 2018.

\bibitem{maminda2018}
Minda Ma, Wei Cai, Weiguang Cai, and Liang Dong.
\newblock {Whether carbon intensity in the commercial building sector decouples
  from economic development in the service industry? Empirical evidence from
  the top five urban agglomerations in China}.
\newblock {\em Journal of Cleaner Production}, 222:193 -- 205, 2019.

\bibitem{knea}
Xin Ma and Zhibin Liu.
\newblock Predicting the oil production using the novel multivariate nonlinear
  model based on {Arps} decline model and kernel method.
\newblock {\em Neural Computing and Applications}, 29(2):579--591, 2018.

\bibitem{xinyi01}
Pei Du, Jianzhou Wang, Wendong Yang, and Tong Niu.
\newblock A novel hybrid model for short-term wind power forecasting.
\newblock {\em Applied Soft Computing}, 80:93 -- 106, 2019.

\bibitem{xinyi02}
Pei Du, Jianzhou Wang, Zhenhai Guo, and Wendong Yang.
\newblock Research and application of a novel hybrid forecasting system based
  on multi-objective optimization for wind speed forecasting.
\newblock {\em Energy Conversion and Management}, 150:90--107, 2017.

\bibitem{WU2019280}
Li-Feng Wu, Guo-Min Huang, Jun-Liang Fan, Fu-Cang Zhang, Xiu-Kang Wang, and
  Wen-Zhi Zeng.
\newblock Potential of kernel-based nonlinear extension of {Arps} decline model
  and gradient boosting with categorical features support for predicting daily
  global solar radiation in humid regions.
\newblock {\em Energy Conversion and Management}, 183:280 -- 295, 2019.

\bibitem{fan2019empirical}
Junliang Fan, Lifeng Wu, Fucang Zhang, Huanjie Cai, Wenzhi Zeng, Xiukang Wang,
  and Haiyang Zou.
\newblock {Empirical and machine learning models for predicting daily global
  solar radiation from sunshine duration: A review and case study in China}.
\newblock {\em Renewable and Sustainable Energy Reviews}, 100:186--212, 2019.

\bibitem{YANG2019942}
Wendong Yang, Jianzhou Wang, Haiyan Lu, Tong Niu, and Pei Du.
\newblock {Hybrid wind energy forecasting and analysis system based on divide
  and conquer scheme: A case study in China}.
\newblock {\em Journal of Cleaner Production}, 222:942 -- 959, 2019.

\bibitem{deng1988}
Julong Deng.
\newblock {\em Essential topics on grey system: theory and applications}.
\newblock China Ocean Press, 1988.

\bibitem{smallsample}
Lifeng Wu, Sifeng Liu, Ligen Yao, and Shuli Yan.
\newblock The effect of sample size on the grey system model.
\newblock {\em Applied Mathematical Modelling}, 37(9):6577--6583, 2013.

\bibitem{kgm1n}
Xin Ma and Zhi-bin Liu.
\newblock The kernel-based nonlinear multivariate grey model.
\newblock {\em Applied Mathematical Modelling}, 56:217--238, 2018.

\bibitem{zb2019apm}
Bo~Zeng, Huiming Duan, and Zhou Yufeng.
\newblock A new multivariable grey prediction model with structure
  compatibility.
\newblock {\em Applied Mathematical Modelling}, 75:385--397, 2019.

\bibitem{wzx2017}
Zheng-Xin Wang, Hong-Hao Zheng, Ling-Ling Pei, and Tong Jin.
\newblock Decomposition of the factors influencing export fluctuation in
  {China's} new energy industry based on a constant market share model.
\newblock {\em Energy Policy}, 109:22--35, 2017.

\bibitem{zb2018}
Bo~Zeng, Yongtao Tan, Hui Xu, Jing Quan, Luyun Wang, and Xueyu Zhou.
\newblock Forecasting the electricity consumption of commercial sector in {Hong
  Kong} using a novel grey dynamic prediction model.
\newblock {\em Journal of Grey System}, 30(1):159--174, 2018.

\bibitem{wang2019modelling}
Zheng-Xin Wang and Qin Li.
\newblock {Modelling the nonlinear relationship between CO2 emissions and
  economic growth using a PSO algorithm-based grey Verhulst model}.
\newblock {\em Journal of Cleaner Production}, 207:214--224, 2019.

\bibitem{meng2018prediction}
Wei Meng, Daoli Yang, and Hui Huang.
\newblock Prediction of china’s sulfur dioxide emissions by discrete grey
  model with fractional order generation operators.
\newblock {\em Complexity}, 2018:1--13, 2018.

\bibitem{xiong2019grey}
Ping-ping Xiong, Wen-jie Yan, Gui-zhi Wang, and Ling-ling Pei.
\newblock Grey extended prediction model based on irls and its application on
  smog pollution.
\newblock {\em Applied Soft Computing}, 80:797--809, 2019.

\bibitem{tan2000}
Guan-Jun Tan.
\newblock The structure method and application of background value in grey
  system {GM (1, 1)} model {(I)}.
\newblock {\em Systems Engineering-Theory \& Practice}, 20(4):98--103, 2000.

\bibitem{zbcie2018}
Bo~Zeng and Chuan Li.
\newblock Improved multi-variable grey forecasting model with a dynamic
  background-value coefficient and its application.
\newblock {\em Computers \& Industrial Engineering}, 118:278--290, 2018.

\bibitem{maxingmco}
Xin Ma and Zhibin Liu.
\newblock The {GMC (1, n)} model with optimized parameters and its application.
\newblock {\em Journal of grey system}, 29(4):122--138, 2017.

\bibitem{xie2009}
Nai-ming Xie and Si-feng Liu.
\newblock Discrete grey forecasting model and its optimization.
\newblock {\em Applied Mathematical Modelling}, 33(2):1173--1186, 2009.

\bibitem{ngdm2013}
Nai-Ming Xie, Si-Feng Liu, Ying-Jie Yang, and Chao-Qing Yuan.
\newblock On novel grey forecasting model based on non-homogeneous index
  sequence.
\newblock {\em Applied Mathematical Modelling}, 37(7):5059--5068, 2013.

\bibitem{dgm1n2015}
Xin Ma and Zhibin Liu.
\newblock Predicting the oil field production using the novel discrete {GM (1,
  N)} model.
\newblock {\em Journal of Grey System}, 27(4):63--73, 2015.

\bibitem{rdgm2016}
Xin Ma and Zhi-bin Liu.
\newblock Research on the novel recursive discrete multivariate grey prediction
  model and its applications.
\newblock {\em Applied Mathematical Modelling}, 40(7-8):4876--4890, 2016.

\bibitem{DING2019}
Song Ding.
\newblock {A novel discrete grey multivariable model and its application in
  forecasting the output value of Chin's high-tech industries}.
\newblock {\em Computers and Industrial Engineering}, 127:749--760, 2019.

\bibitem{ds2018}
Song Ding, Keith~W Hipel, and Yao-guo Dang.
\newblock Forecasting {China's} electricity consumption using a new grey
  prediction model.
\newblock {\em Energy}, 149:314--328, 2018.

\bibitem{maxin2019}
Xin Ma.
\newblock A brief introduction to the grey machine learning.
\newblock {\em Journal of Grey System}, 31(1):1--12, 2019.

\bibitem{wzxdatagroup}
Zheng-Xin Wang, Qin Li, and Ling-Ling Pei.
\newblock Grey forecasting method of quarterly hydropower production in {China}
  based on a data grouping approach.
\newblock {\em Applied Mathematical Modelling}, 51:302--316, 2017.

\bibitem{changmega}
Che-Jung Chang, Liping Yu, and Peng Jin.
\newblock A mega-trend-diffusion grey forecasting model for short-term
  manufacturing demand.
\newblock {\em Journal of the Operational Research Society}, 67(12):1439--1445,
  2016.

\bibitem{dengbook}
Julong Deng.
\newblock {\em Elements on Grey Theory}.
\newblock Huazhong University of Science and Technology Press, Wuhan, 2002.
\newblock in Chinese.

\bibitem{wy1}
Yong Wang and Xiangyi Yi.
\newblock Transient pressure behavior of a fractured vertical well with a
  finite-conductivity fracture in triple media carbonate reservoir.
\newblock {\em Journal of Porous Media}, 20(8):707--722, 2017.

\bibitem{wy2}
Yong Wang and Xiangyi Yi.
\newblock Flow modeling of well test analysis for a multiple-fractured
  horizontal well in triple media carbonate reservoir.
\newblock {\em International Journal of Nonlinear Sciences and Numerical
  Simulation}, 2018.

\bibitem{dingxf2018}
Xianfeng Ding, Dan Qu, and Haiyan Qiu.
\newblock A new production prediction model based on taylor expansion formula.
\newblock {\em Mathematical Problems in Engineering}, 2018:1--12, 2018.

\bibitem{hysspe}
Yisheng Hu, E.~MacKay, O.~Vazquez, and O.~Ishkov.
\newblock Streamline simulation of barium sulfate precipitation occurring
  within the reservoir coupled with analyses of observed
  produced-water-chemistry data to aid scale management.
\newblock {\em SPE Production and Operations}, 33(1):85--101, 2018.

\bibitem{wy3}
Yong Wang, Chen Zhang, Tao Chen, and Xin Ma.
\newblock Modeling the nonlinear flow for a multiple-fractured horizontal well
  with multiple finite-conductivity fractures in triple media carbonate
  reservoir.
\newblock {\em Journal of Porous Media}, 21(12):1283--1305, 2018.

\bibitem{LIANG2019315}
Yan Liang, Weiguang Cai, and Minda Ma.
\newblock {Carbon dioxide intensity and income level in the Chinese megacities'
  residential building sector: Decomposition and decoupling analyses}.
\newblock {\em Science of The Total Environment}, 677:315 -- 327, 2019.

\bibitem{wwq2018math}
Wenqing Wu, Yinghui Tang, Miaomiao Yu, Ying Jiang, and Hui Liu.
\newblock {Reliability analysis of a k-out-of-n: G system with general repair
  times and replaceable repair equipment}.
\newblock {\em Quality Technology \& Quantitative Management}, 15(2):274--300,
  2018.

\bibitem{chenngbm11}
Chun-I Chen, Hong~Long Chen, and Shuo-Pei Chen.
\newblock Forecasting of foreign exchange rates of {Taiwans} major trading
  partners by novel nonlinear grey {Bernoulli} model {NGBM}(1, 1).
\newblock {\em Communications in Nonlinear Science and Numerical Simulation},
  13(6):1194 -- 1204, 2008.

\bibitem{wangzx2011}
Zheng-Xin Wang, Keith~W. Hipel, Qian Wang, and Sha-Wei He.
\newblock An optimized {NGBM}(1,1) model for forecasting the qualified
  discharge rate of industrial wastewater in {China}.
\newblock {\em Applied Mathematical Modelling}, 35(12):5524 -- 5532, 2011.

\bibitem{wang2014}
Zheng-Xin Wang.
\newblock Nonlinear grey prediction model with convolution integral {NGMC} and
  its application to the forecasting of {China's} industrial emissions.
\newblock {\em Journal of Applied Mathematics}, 2014, 2014.

\bibitem{wang2017}
Zheng-Xin Wang and De-Jun Ye.
\newblock Forecasting {Chinese} carbon emissions from fossil energy consumption
  using non-linear grey multivariable models.
\newblock {\em Journal of Cleaner Production}, 142, Part 2:600 -- 612, 2017.

\bibitem{ma2019bernoulli}
Xin Ma, Zhibin Liu, and Yong Wang.
\newblock {Application of a novel nonlinear multivariate grey Bernoulli model
  to predict the tourist income of China}.
\newblock {\em Journal of Computational and Applied Mathematics}, 347:84--94,
  2019.

\bibitem{wulifeng2013}
Lifeng Wu, Sifeng Liu, Ligen Yao, Shuli Yan, and Dinglin Liu.
\newblock Grey system model with the fractional order accumulation.
\newblock {\em Communications in Nonlinear Science and Numerical Simulation},
  18(7):1775--1785, 2013.

\bibitem{fgmweapon}
Shili Fang, Lifeng Wu, Zhigeng Fang, and Xiaojun Guo.
\newblock Using fractional {GM (1, 1)} model to predict the maintenance cost of
  weapon system.
\newblock {\em Journal of Grey System}, 25(3), 2013.

\bibitem{fgmgas}
Lifeng Wu, Sifeng Liu, Ding Chen, Ligen Yao, and Wei Cui.
\newblock Using gray model with fractional order accumulation to predict gas
  emission.
\newblock {\em Natural hazards}, 71(3):2231--2236, 2014.

\bibitem{wubeijing}
Lifeng Wu, Nu~Li, and Yingjie Yang.
\newblock {Prediction of air quality indicators for the Beijing-Tianjin-Hebei
  region}.
\newblock {\em Journal of Cleaner Production}, 196:682 -- 687, 2018.

\bibitem{wujjj}
Lifeng Wu and Hongying Zhao.
\newblock {Using FGM(1,1) model to predict the number of the lightly polluted
  day in Jing-Jin-Ji region of China}.
\newblock {\em Atmospheric Pollution Research}, 2018.

\bibitem{mao2015co2}
Mingyun Gao, Shuhua Mao, Xinping Yan, and Jianghui Wen.
\newblock {Estimation of Chinese CO$_2$ Emission Based on A Discrete Fractional
  Accumulation Grey Model.}
\newblock {\em Journal of Grey System}, 27(4):114--130, 2015.

\bibitem{zeng2018buffer}
Bo~Zeng, Huiming Duan, Yun Bai, and Wei Meng.
\newblock Forecasting the output of shale gas in {China} using an unbiased grey
  model and weakening buffer operator.
\newblock {\em Energy}, 151:238--249, 2018.

\bibitem{zeng2017self}
Bo~Zeng and Si-Feng Liu.
\newblock A self-adaptive intelligence gray prediction model with the optimal
  fractional order accumulating operator and its application.
\newblock {\em Mathematical Methods in the Applied Sciences},
  40(18):7843--7857, 2017.

\bibitem{duan2018}
Huiming Duan, Guang~Rong Lei, and Kailiang Shao.
\newblock Forecasting crude oil consumption in china using a grey prediction
  model with an optimal fractional-order accumulating operator.
\newblock {\em Complexity}, 2018:1--12, 2018.

\bibitem{wwq2018}
Wenqing Wu, Xin Ma, Bo~Zeng, Yong Wang, and Wei Cai.
\newblock {Application of the novel fractional grey model FAGMO (1, 1, k) to
  predict China's nuclear energy consumption}.
\newblock {\em Energy}, 165:223--234, 2018.

\bibitem{ma2019energy}
Xin Ma, Xie Mei, Wenqing Wu, Xinxing Wu, and Bo~Zeng.
\newblock {A novel fractional time delayed grey model with Grey Wolf Optimizer
  and its applications in forecasting the natural gas and coal consumption in
  Chongqing China}.
\newblock {\em Energy}, 178:487 -- 507, 2019.

\bibitem{wu2019bernoulli}
Wenqing Wu, Xin Ma, Bo~Zeng, Yong Wang, and Wei Cai.
\newblock {Forecasting short-term renewable energy consumption of China using a
  novel fractional nonlinear grey Bernoulli model}.
\newblock {\em Renewable Energy}, 140:70 -- 87, 2019.

\bibitem{wu2018gmc}
Li-Feng Wu, Xiao-Hui Gao, Yan-Li Xiao, Ying-Jie Yang, and Xiang-Nan Chen.
\newblock Using a novel multi-variable grey model to forecast the electricity
  consumption of {Shandong} {Province} in {China}.
\newblock {\em Energy}, 157:327--335, 2018.

\bibitem{ma2019apm}
Xin Ma, Mei Xie, Wenqing Wu, Bo~Zeng, Yong Wang, and Xinxing Wu.
\newblock The novel fractional discrete multivariate grey system model and its
  applications.
\newblock {\em Applied Mathematical Modelling}, 70:402 -- 424, 2019.

\bibitem{reduceerror}
Lifeng Wu, Sifeng Liu, Ligen Yao, Ruiting Xu, and Xunping Lei.
\newblock Using fractional order accumulation to reduce errors from inverse
  accumulated generating operator of grey model.
\newblock {\em Soft Computing}, 19(2):483--488, 2015.

\bibitem{fndgm}
Li-Feng Wu, Si-Feng Liu, Wei Cui, Ding-Lin Liu, and Tian-Xiang Yao.
\newblock Non-homogenous discrete grey model with fractional-order
  accumulation.
\newblock {\em Neural Computing and Applications}, 25(5):1215--1221, 2014.

\bibitem{frelation}
LF~Wu, SF~Liu, LG~Yao, and L~Yu.
\newblock Fractional order grey relational analysis and its application.
\newblock {\em Scientia Iranica. Transaction E, Industrial Engineering},
  22(3):1171--1178, 2015.

\bibitem{xdy2016}
Yang Yang and Dingy{\"u} Xue.
\newblock Continuous fractional-order grey model and electricity prediction
  research based on the observation error feedback.
\newblock {\em Energy}, 115:722--733, 2016.

\bibitem{xdy2017}
Yang Yang and Dingy{\"u} Xue.
\newblock An actual load forecasting methodology by interval grey modeling
  based on the fractional calculus.
\newblock {\em ISA transactions}, 2017.

\bibitem{weakbuffer}
Lifeng Wu, Sifeng Liu, Yingjie Yang, Lihua Ma, and Hongxia Liu.
\newblock Multi-variable weakening buffer operator and its application.
\newblock {\em Information Sciences}, 339:98--107, 2016.

\bibitem{mwfgm}
Wei Meng, Qian Li, and Bo~Zeng.
\newblock Study on fractional order grey reducing generation operator.
\newblock {\em Grey Systems: Theory and Application}, 6(1):80--95, 2016.

\bibitem{liu2016new}
Sifeng Liu, Yingjie Yang, Naiming Xie, and Jeffrey Forrest.
\newblock New progress of grey system theory in the new millennium.
\newblock {\em Grey Systems: Theory and Application}, 6(1):2--31, 2016.

\bibitem{cfd2014}
Roshdi Khalil, M~Al~Horani, Abdelrahman Yousef, and Mohammad Sababheh.
\newblock A new definition of fractional derivative.
\newblock {\em Journal of Computational and Applied Mathematics}, 264:65--70,
  2014.

\bibitem{cfd02}
I~Abu~Hammad and R~Khalil.
\newblock Fractional {Fourier} series with applications.
\newblock {\em Am. J. Comput. Appl. Math}, 4(6):187--191, 2014.

\bibitem{cfd03}
M~Abu~Hammad and R~Khalil.
\newblock Conformable fractional heat differential equation.
\newblock {\em Int. J. Pure Appl. Math}, 94(2):215--221, 2014.

\bibitem{cfd04}
R~Khalil and H~Abu-Shaab.
\newblock Solution of some conformable fractional differential equations.
\newblock {\em Int. J. Pure Appl. Math}, 103(4):667--673, 2015.

\bibitem{cfd05}
M~Abu Hammad and R~Khalil.
\newblock Legendre fractional differential equation and {Legendre} fractional
  polynomials.
\newblock {\em International Journal of Applied Mathematics Research},
  3(3):214--219, 2014.

\bibitem{mao2016novel}
Shuhua Mao, Mingyun Gao, Xinping Xiao, and Min Zhu.
\newblock A novel fractional grey system model and its application.
\newblock {\em Applied Mathematical Modelling}, 40(7-8):5063--5076, 2016.

\bibitem{liubook}
Sifeng Liu, Yi~Lin, and Jeffrey Yi~Lin Forrest.
\newblock {\em Grey systems: theory and applications}.
\newblock Springer, 2010.

\bibitem{Xiao2019AE}
Qinge Xiao, Congbo Li, Ying Tang, Lingling Li, and Li~Li.
\newblock A knowledge-driven method of adaptively optimizing process parameters
  for energy efficient turning.
\newblock {\em Energy}, 166:142--156, 2019.

\bibitem{yang2019}
Wendong Yang, Jianzhou Wang, Tong Niu, and Pei Du.
\newblock A hybrid forecasting system based on a dual decomposition strategy
  and multi-objective optimization for electricity price forecasting.
\newblock {\em Applied Energy}, 235:1205 -- 1225, 2019.

\bibitem{timedelayed}
Xin Ma and Zhibin Liu.
\newblock Application of a novel time-delayed polynomial grey model to predict
  the natural gas consumption in china.
\newblock {\em Journal of Computational and Applied Mathematics}, 324:17--24,
  2017.

\bibitem{timeseries}
Christoph Bergmeir, Rob~J Hyndman, and Bonsoo Koo.
\newblock A note on the validity of cross-validation for evaluating
  autoregressive time series prediction.
\newblock {\em Computational Statistics \& Data Analysis}, 120:70--83, 2018.

\end{thebibliography}
\end{document}